\numberwithin{equation}{section}
\newcommand{\defn}{\emph}
\renewcommand{\d}{\mathrm{d}}
\newcommand{\liec}{\mathrm{C}}
\newcommand{\cF}{\mathcal{F}}
\newcommand{\cJ}{\mathcal{J}}
\newcommand{\cL}{\mathcal{L}}
\newcommand{\var}[3]{\frac{\delta_{#1} {#2}}{\delta {#3}}}
\newcommand{\der}[2]{\frac{\partial {#1}}{\partial {#2}}}
\newcommand{\lda}{\lambda}
\newcommand{\N}{\mathbb{N}}
\newcommand{\R}{\mathbb{R}}
\newcommand{\g}{\mathfrak{g}}
\newcommand{\h}{\mathfrak{h}}
\newcommand{\1}{\mathbf{1}}
\DeclareMathOperator{\D}{D}
\DeclareMathOperator{\pr}{pr}
\DeclareMathOperator{\tr}{tr}
\newcommand{\ip}{\lrcorner\,}
\newtheorem{thm}{Theorem}[section]
\newtheorem{prop}[thm]{Proposition}
\newtheorem{lemma}[thm]{Lemma}
\newtheorem{remark}[thm]{Remark}
\theoremstyle{definition}
\newtheorem{definition}[thm]{Definition}
\newtheorem{example}[thm]{Example}
\title{Lagrangian multiforms on Lie groups and non-commuting flows}
\author{Vincent Caudrelier}
\author{Frank Nijhoff}
\author{Duncan Sleigh,}
\affil{\large School of Mathematics, University of Leeds, Leeds, UK.
\\ \texttt{v.caudrelier@leeds.ac.uk, f.w.nijhoff@leeds.ac.uk, mm16dgs@leeds.ac.uk}}
\author{Mats Vermeeren\thanks{Corresponding author}}
\affil{\large Department of Mathematical Sciences, Loughborough University, \\ Loughborough, UK.
\\ \texttt{m.vermeeren@lboro.ac.uk} }
\date{}
\begin{document}

\maketitle

\begin{abstract}
\noindent
We describe a variational framework for non-commuting flows, extending the theories of Lagrangian multiforms and pluri-Lagrangian systems, which have gained prominence in recent years as a variational description of integrable systems in the sense of multidimensional consistency. In the context of non-commuting flows, the manifold of independent variables, often called multi-time, is a Lie group whose bracket structure corresponds to the commutation relations between the vector fields generating the flows. Natural examples are provided by superintegrable systems for the case of Lagrangian 1-form structures, and integrable hierarchies on loop groups in the case of Lagrangian 2-forms. As particular examples we discuss the Kepler problem, the rational Calogero-Moser system, and a generalisation of the Ablowitz-Kaup-Newell-Segur system with non-commuting flows. We view this endeavour as a first step towards a purely variational approach to Lie group actions on manifolds.   

\medskip
\noindent
\textbf{Keywords:} Integrable hierarchies, Non-commuting flows, Lagrangian multiforms, Symmetry algebras

\medskip
\noindent
\textbf{MSC2020:} 70G65, 37J35, 37K10  

\end{abstract}

\section{Introduction} 

The variational theory behind integrable systems has recently undergone a major development through the introduction of Lagrangian multiform theory \cite{lobb2009lagrangian}, which establishes 
a variational framework for \emph{multidimensional consistency}, a notion which generalises the idea of commuting flows of differential equations, including the analogous property for lattice equations. Multidimensional consistency is the key integrability phenomenon of the coexistence and compatibility of a multitude of dynamical equations on the same dependent variable in terms of several (often an arbitrary number) independent 
variables. Thus, multidimensional consistency is a manifestation of well-known integrability aspects, such as the existence of hierarchies of nonlinear evolution equations, their infinite sequences of conservation laws, associated linear problems (Lax pairs), dressing transforms and Darboux schemes. 

A conventional Lagrangian formalism only provides a single (Euler-Lagrange) equation per component of the relevant field variables. From the perspective of multi-dimensional consistency, it is more natural to consider a space of independent variables of arbitrary 
dimension, called \emph{multi-time}. In the \emph{multiform} formalism (and in the closely related \emph{pluri-Lagrangian} formalism),  Lagrangians are components of a 
differential (or difference) $d$-form in multi-time. The associated Euler-Lagrange equations provide a compatible system of simultaneous equations on each component of the fields. 

Since the initial proposal in \cite{lobb2009lagrangian}, this new variational approach has been shown to be a quite universal structure behind integrable systems, both in the realm of discrete and continuous equations of this class, see e.g.\@ \cite{lobb2010kp,lobb2010gelfand-dikii,xenitidis2011Lagrangian-structure,yoo2011discrete,boll2014integrability,sleigh2019variational,sleigh2021lagrangian}. In the continuous case, the full set of variational equations, i.e. the
extended set of Euler-Lagrange equations, were worked out in \cite{suris2016lagrangian}. 
Many salient features have been elaborated in recent years, e.g.\@ the connection with variational symmetries and Noether's theorem \cite{petrera2017variational,vermeeren2021hamiltonian,sleigh2020variational}, as well as with Lax pairs \cite{sleigh2019variational}, Hamiltonian structures \cite{suris2013variational, caudrelier2020hamiltonian, vermeeren2021hamiltonian} and the classical $r$-matrix \cite{caudrelier2021multiform}. 

So far the Lagrangian multiform approach has proven successful in providing a framework for multi-time integrable systems in terms of commuting flows, which is the traditional setting of multidimensionally consistent integrable systems. However,  the notion of integrability is not necessarily restricted to the case of commuting flows. In fact, the pioneering paper \cite{mishchenko1978generalized} strayed beyond the confines of flows generated by Hamiltonians in involution with regard to the Poisson structure, and considered non-commuting flows as well, establishing a Liouville type theorem for the case that there is a nontrivial Lie algebra structure for the corresponding vector fields. 

In \cite{nijhoff1988integrable} non-commuting flows on loop algebras were considered, generating systems integrable through linear integral equations associated with Lax representations of $N\times N$ matrix hierarchies. Corresponding Lagrangian structures were established in \cite{nijhoff1986integrable}. In these extended integrable systems, compatibility relations generate vector fields 
that are `alien' to the original flows in the equations. Hence any variational description would necessarily entail a multiform structure. In the present paper we aim at providing a solid 
underpinning of these ideas, which amount to a description of dynamical systems where the set of \emph{independent variables} is a Lie group\footnote{In this respect what is aimed at here is essentially different from recent work on non-commutative integrable systems, where the \emph{dependent variables} are chosen in some associative algebra.}. 
In contrast, so far Lie groups have mostly appeared in dynamical systems either as objects describing the symmetries or as the phase space. What is done in the present paper is to consider a multi-time (space of independent variables) that has the structure of a Lie group.  
More ambitiously, the present set-up could be viewed as a first step in general towards a purely variational approach to Lie group actions on manifolds. 

The outline of the paper is as follows. In section 2 we provide some background on the Lagrangian structure of Liouville integrable systems and sketch the multiform approach. In section 3 we set up the framework for the Lagrangian multiform structure for systems where a Lie group is considered as a (non-commuting) multi-time manifold. In section 4 we provide the corresponding variational structure for those Lagrange 1-forms and provide some compelling examples: the Kepler problem and the rational Calogero-Moser system. These systems have the added feature that they are superintegrable, so a full description of the group actions of their symmetries necessarily requires non-commuting flows.
In section 5 we consider the case of Lagrange 2-forms on loop groups, building on the structures of  \cite{nijhoff1986integrable,nijhoff1988integrable} as well as \cite{caudrelier2021multiform}, and present, as a new example, a non-commutative multi-time version of the AKNS hierarchy. We end with some conclusions in section 6. 

\section{Background: Lagrangian structure of Liouville integrable systems}
\label{sec-background}

A Hamiltonian system on a $2n$-dimensional symplectic manifold $S$ is Liouville integrable if the Hamiltonian function $H_1:S \rightarrow \R$ is one of $n$ functionally independent functions $H_i:S \rightarrow \R$ ($i=1,\ldots,n$) such that $\{H_i,H_j\} = 0$ for all $i,j$, where $\{\cdot,\cdot\}$ is the Poisson bracket induced by the symplectic structure $\omega$ on $S$. The flows of the additional Hamiltonian functions $H_2,\ldots,H_n$ are symmetries of the Hamiltonian system $(S,H_1)$. The Liouville-Arnold theorem shows that this setup has a rich geometric structure (see e.g.\@ \cite{babelon2003introduction}). Below we will assume that $S$ is a cotangent bundle, $S = T^* Q$.

It is important to note that the definition of Liouville integrability is symmetric under relabelling of the Hamiltonian functions $H_i$. It makes no difference which Hamiltonian function we consider to be physical (all others being its symmetries). For each $i$ we could define a flow 
\[ \Phi_i : \R \times S \rightarrow S: (t,z) \mapsto \Phi_i^t(z) \,,\]
where $\der{\Phi_i^t(z)}{t} \ip \omega = \d H_i$. But why should we consider these flows as separate objects? We might as well introduce a ``flow'' on the \defn{multi-time} $\R^n$,
\begin{equation}\label{multi-flow}
\Phi : \R^n \times S \rightarrow S: (t_1,\ldots,t_n,z) \mapsto \Phi^{t_1} \circ \ldots \circ \Phi^{t_n} (z) \,,
\end{equation}
which captures the combined dynamics of the system and its symmetries (in the sense of the action of the symmetry group on phase space).

The advantage of combining the physical time and the ``times'' of the symmetry flows into multi-time manifests itself more clearly in the Lagrangian picture. Assuming the Hamiltonian functions $H_1,\ldots,H_n$ are non-degenerate%
\footnote{In fact they need not all be non-degenerate, but rather non-degenerate as a family. See \cite{suris2013variational,vermeeren2021hamiltonian}.}%
, we could introduce Lagrangians $L_1, \ldots, L_n$. Then the flows $\Phi_i: \R \times T^* Q \rightarrow T^* Q$, projected down to $Q$, produce critical curves of the action integrals
\[ \int_a^b L_i(q,\dot{q}) \,\d t \,.\]
In the multi-time formalism we can combine the Lagrangian functions into a 1-form
\[ \cL[q] = L_1[q] \,\d t_1 + \ldots + L_n[q] \,\d t_n \,,\]
where the square brackets denote dependence on a function $q: \R^n \rightarrow Q$ and its derivatives. We call such a function, from multi-time to the configuration manifold, a \defn{field}. The main feature of a 1-form is that it can be integrated along curves. So for every curve $\gamma: [0,1] \rightarrow \R^n$ we can define an action functional
\[ A_\gamma: q \mapsto \int_\gamma \cL[q] \,. \]
We can now impose the following variational principle, which provides the setting to recognise integrability from the Lagrangian perspective \cite{lobb2009lagrangian, yoo2011discrete, suris2013variational, suris2016lagrangian}.
\begin{definition}
\label{def:pluri-lag}
We say that a field $q: \R^n \rightarrow Q$ is \defn{critical} for $\cL$ if the corresponding action $A_\gamma$ is critical for every curve $\gamma$. That is, for every $\gamma: [0,1] \rightarrow \R^n$ and every
 smooth family of curves $q_\varepsilon$ such that $q = q_0$ and $q_\varepsilon(\gamma(0)) = q(\gamma(0))$ and $q_\varepsilon(\gamma(1)) = q(\gamma(1))$ there holds
\[ \frac{\d}{\d \varepsilon} \Big|_{\varepsilon = 0} \int_\gamma \cL[q_\varepsilon] = 0 \,. \]
If the Lagrangian depends on second or higher derivatives, we also require the derivatives of $q$ and $q_\varepsilon$ to be equal at the endpoints of $\gamma$
\end{definition}
A system that is described by a Lagrangian 1-form via this variational principle is known in the literature as a ``pluri-Lagrangian system'' \cite{boll2014integrability, bobenko2015discrete, suris2016lagrangian}. Additionally, one often requires that $q$ is critical with respect to variations of the curve $\gamma$ too, which is equivalent to requiring that $\cL[q]$ is closed. This perspective is known as ``Lagrangian multiform'' theory \cite{lobb2009lagrangian, xenitidis2011Lagrangian-structure, hietarinta2016discrete}. The closure property $\d \cL =0$ implies that the corresponding Hamiltonian functions are in involution \cite{suris2013variational,vermeeren2021hamiltonian}. In addition to being a formalism to derive equations from a given Lagrangian, Lagrangian multiform theory can be seen as a guiding principle to determine integrable Lagrangians. 

\paragraph{Multi-time Euler-Lagrange equations} The differential equations which characterise criticality in the sense of Definition \ref{def:pluri-lag} are called \defn{multi-time Euler-Lagrange equations} (or \defn{multiform Euler-Lagrange equations}). They were first derived in \cite{suris2016lagrangian} and using a different approach in \cite{sleigh2020variational,sleigh2021lagrangian}. Below we give a heuristic explanation of this system of equations.

If we choose the curve $\gamma$ to be a straight line in the $t_i$-direction, we recover a familiar action integral with the corresponding component of $\cL$ as the Lagrangian: $\int L_i \,\d t_i$. This leads to the Euler-Lagrange equation
\[  \der{L_i}{q} - \D_i \der{L_i}{q_i} + \ldots = 0 \,, \]
where $D_i$ denotes the total derivative with respect to $t_i$, $q_i = \frac{\d q}{\d t_i}$, and the dots represent terms of the Euler-Lagrange equation which are relevant if the Lagrangian depends on second or higher derivatives.
We call this expression a \defn{variational derivative} and denote it by
\[ \var{i}{L_i}{q} = \der{L_i}{q} - \D_i \der{L_i}{q_i} + \ldots \,, \]
where the first index $i$ indicates that additional derivatives, which originate from integration by parts in the standard derivation of the Euler-Lagrange equations, are only with respect to $t_i$. Because $\gamma$ was taken in the $t_i$-direction, these are the only integrations by parts that could be carried out. In particular, derivatives of $q$ with respect to other time variables cannot be integrated away. Therefore we should consider them as additional variables and include the corresponding Euler-Lagrange equations
\begin{align*}
\var{i}{L_i}{q_j} &= \der{L_i}{q_j} - \D_i \der{L_i}{q_{ji}} + \ldots = 0 \,, \\
\var{i}{L_i}{q_{jk}} &= \der{L_i}{q_{jk}} - \D_i \der{L_i}{q_{jki}} + \ldots = 0 \,, \\
&\vdotswithin{=}
\end{align*}
where $j,k,\ldots \neq i$ and subscripts of $q$ denote partial derivatives.

So far we have only considered curves $\gamma$ which are in coordinate directions. Additional multi-time Euler-Lagrange equations are found when we consider curves in other directions (or curves that are not straight). They are of the form
\[
\var{i}{L_i}{q_i} = \var{j}{L_j}{q_j} \,, \qquad
\var{i}{L_i}{q_{ki}} = \var{j}{L_j}{q_{kj}} \,, \qquad
\ldots
\]

In summary, the variational principle of Definition \ref{def:pluri-lag} is equivalent to the following system of multi-time Euler-Lagrange equations:
\begin{align*}
    &\var{i}{L_i}{q_I} = 0 \,,  \qquad\qquad I \not\ni t_i \,, \\
    &\var{i}{L_i}{q_{Ii}} = \var{j}{L_j}{q_{Ij}} \,,
\end{align*}
where $I$ is a multi-index listing the differentiations applied to $q$, $I \not\ni t_i$ means that none of them are with respect to $t_i$, and the $i$ in $q_{Ii}$ denotes an additional differentiation with respect to $t_i$.

An even more compact expression for the multi-time Euler-Lagrange equations can be given as
\[  \var{i}{L_i}{q_{I \setminus j}} = \var{j}{L_j}{q_{I \setminus i}} \,,  \]
where $I \setminus i$ denotes one fewer differentiation with respect to $t_i$. If $I$ does not list any differentiations with respect to $t_i$, then any term containing $I \setminus i$ is taken to be zero.

\paragraph{Exterior derivative}

The variational principle of Definition \ref{def:pluri-lag} gives a single Lagrangian description of a number of commuting flows. To capture integrability in the sense of Liouville, we need more than commutativity of the flows. (Commutativity corresponds to constant, not necessarily vanishing, Poisson brackets.) The key integrability feature in the multi-form approach is the \defn{closure relation}: the exterior derivative $\d \cL$ should vanish when evaluated on solutions of the multi-time Euler-Lagrange equations.

Furthermore, taking variations of (the coefficients of) $\d \cL$ is equivalent to the variational principle. Hence $\d \cL$ is zero on solutions if and only if it attains a double zero on solutions. In many examples one can write the coefficients of $\d \cL$ explicitly as a product of two expressions which vanish on the multi-time Euler-Lagrange equations.

The closedness of the form $\cL$ is immediately related to the vanishing of Poisson brackets between the corresponding Hamiltonian functions \cite{suris2013variational,vermeeren2021hamiltonian}, and to the fact that the commuting flows are variational symmetries of each other's Lagrangian functions \cite{sleigh2020variational,petrera2021variational}.

\paragraph{Higher forms}
 
So far in this introduction we have only mentioned the multi-form principle for 1-forms, which applies to systems of ODEs. In the case of hierarchies  of PDEs, a completely analogous principle applies for a higher form. For example, in integrable hierarchies such as KdV \cite{suris2016lagrangian} and AKNS \cite{sleigh2019variational,sleigh2020variational}, the individual equations are 2-dimensional, so the classical variational principle involves integration over a plane. In the multi-time setting, all the equations of such a hierarchy share the same space variable, but they each have their own time variable. Multi-time is spanned by the full set of space and time directions. The pluri-Lagrangian principle now requires that the integral of a 2-form is critical regardless of which 2-dimensional surface of integration is chosen. As before, the Lagrangian multiform principle augments this by the fact that the action should also be critical with respect to variations of the surface of integration. For higher-dimensional PDEs one can consider higher forms. For example, there is a Lagrangian 3-form description of the KP hierarchy \cite{sleigh2021lagrangian}.

\section{A Lie group as multi-time}

Many systems have symmetries that do not all commute with each other. Of particular interest are those Hamiltonian systems where there exist functions $H_1,\ldots,H_{n+\ell}:T^*Q \rightarrow \R$ such that
\[ \{ H_i, H_k \} = 0 \qquad \text{for all }i \in \{1,\ldots,n-\ell\} \text{ and all }k \in \{1,\ldots,n+\ell\} \,, \]
and the remaining Poisson brackets may be nonzero. Systems like this are called \emph{non-commutative integrable} or \emph{degenerate integrable} and a simple adaptation of the Liouville-Arnold theorem applies to them \cite{mishchenko1978generalized, bolsinov2003noncommutative}. For $\ell=0$ we recover Liouville integrability, and it can be shown that these conditions for $\ell>0$ also imply Liouville integrability \cite{bolsinov2003noncommutative}, hence the term \emph{superintegrability} is also used for such systems \cite{fasso2005superintegrable}. Of course, one may also be interested in non-integrable systems that still possess some smaller amount of symmetries with nontrivial commutation relations.

If some of the $H_k$ have non-constant Poisson brackets, their flows will not commute. Hence we cannot consider the flows of all $H_1,\ldots,H_{n+\ell}$ together as functions of some multi-time $\R^{n+\ell}$. Indeed equation \eqref{multi-flow} breaks down because it now depends on the order in which we list the flows $\Phi^{t_i}$, which defeats the point of putting all flows on the same footing. But all is not lost. The infinitesimal generators of the flows of the $H_i$ form a Lie algebra. We can use a copy of the (universal covering) Lie group $G$ of this Lie algebra as multi-time. The Hamiltonian ``flow'' on multi-time now depends on a Lie group element $g$ instead of a number of time coordinates, 
\begin{equation}
 \Phi: G \times T^*Q \rightarrow T^*Q: (g,z_0) \mapsto \Phi^g(z_0) = g \cdot z_0 \,,
 \end{equation}
where $\cdot$ denotes the left group action of $G$ on $T^*Q$ by symplectomorphisms.
Hence $\Phi$ assigns to an initial condition $z_0 \in T^*Q$ its flow under an element $g \in G$ of multi-time. In case all symmetries commute, $g$ would be the vector of times $(t_1,\ldots,t_n)$.

Unlike dynamical systems where the Lie group is the phase space (such as e.g.\@ in rigid body dynamics, \cite{abraham2008foundations,marsden2013introduction}), in this setting it is the space of independent variables that possesses the structure of a Lie group $G$. Here we take phase space to be a cotangent bundle $T^*Q$ for which no additional structure is assumed.

We would like to think of the Lie group $G$ not just as the multi-time, but also as a symmetry group acting on fields, with $g \in G$ acting on a field $z: G \rightarrow T^*Q$ to produce a new field $(g \bullet z): G \rightarrow T^*Q$ defined by 
\[ (g \bullet z)(h) = z(g h) \,. \]
Note that we defined both actions, $\cdot$ and $\bullet$, as left actions. Alternatively, we could have adopted a convention where both are right actions.

We are interested in those fields $z: G \rightarrow T^*Q$ for which the two actions of $G$ agree: the flow over ``time'' $g \in G$ maps the field to its transformation by $g$. We call such fields ``symmetry group solutions'' and define them as follows.
\begin{definition}
Let $G$ act by symplectic transformations on $T^*Q$ and denote this action by $\cdot$.
We say that a field $z: G \rightarrow T^*Q$ is a \defn{symmetry group solution} of this group action if for all $h \in G$ there holds $\Phi^g ( z(h) ) = (g \bullet z)(h)$ or, equivalently,
\begin{equation}\label{sym-gp-sol}
g \cdot z(h)  = z(g h),
\end{equation}
\end{definition}

\begin{example}
Let $G = (\R,+)$ act on $\R^2$, with coordinates $(q,p)$ by horizontal translation:
\[ \Phi^g(q,p) = g \cdot (q,p) = (q+g,p) \,. \]
Note that $\Phi$ is the flow of the Hamiltonian $H(q,p) = p$, assuming the standard symplectic structure.
For any constants $q_0$, $p_0$ we have a symmetry group solution $z_s:G \rightarrow \R^2$ defined as
\[ z_s(g) = (q_0 + g, p_0) \,. \]
Indeed we have
\[ (g \bullet z_s)(h) = (q_0 + g + h, p_0) = g \cdot z_s(h) = \Phi^g ( z_s(h) ) \,. \]
Any function $z:G \rightarrow \R^2$ that is not of this form will not be a symmetry group solution. As a specific counterexample, consider 
\[ z_c(g) = (q_0, p_0 + g) \,. \]
We have 
$ (g \bullet z_c)(h) = (q_0, p_0 + g + h) $
but
$ \Phi^g ( z_c(h) ) = g \cdot z_c(h) = (q_0 + g, p_0 + h)$.
\end{example}

\begin{example}
\label{ex-SE2}
As an example involving a nonabelian group, consider the Lie group $SE(2)$, parameterised by $(x,y,\theta)$, with multiplication
\begin{equation}
\label{SE2-product}
(x',y',\theta') (x,y,\theta) = (x' + x \cos \theta' - y \sin \theta', y' + x \sin \theta' + y \cos \theta' , \theta + \theta' ) \,.
\end{equation}
It acts by Euclidean transformations of the $(q,p)$-plane:
\begin{equation}
\label{SE2-action}
\Phi^{(x,y,\theta)}\begin{pmatrix} q \\ p \end{pmatrix}  =  (x,y,\theta) \cdot \begin{pmatrix} q \\ p \end{pmatrix}
= \begin{pmatrix} \cos \theta & - \sin \theta \\ \sin \theta &\cos \theta \end{pmatrix}\begin{pmatrix} q \\ p \end{pmatrix}
+ \begin{pmatrix} x \\ y \end{pmatrix} \,.
\end{equation}
Note that the one-parameter flow maps $\Phi^{(x,0,0)}$, $\Phi^{(0,y,0)}$ and $\Phi^{(0,0,\theta)}$ correspond to the Hamiltonian systems given by $H(q,p) = p$, $H(q,p) = -q$ and $H(q,p) = -\frac12(p^2 + q^2)$, respectively.

Any function $z: SE(2) \to \R^2$ of the form 
\[z(x,y,\theta) = (x,y,\theta) \cdot \begin{pmatrix} q_0 \\ p_0 \end{pmatrix},\]
with constant $q_0$ and $p_0$, is a symmetry group solution. Indeed:
\begin{align*}
\big( (x',y',\theta') \bullet z \big) (x,y,\theta)
&= z\big((x',y',\theta') (x,y,\theta)\big) \\
&= \big((x',y',\theta') (x,y,\theta)\big) \cdot \begin{pmatrix} q_0 \\ p_0 \end{pmatrix} \\
&= (x',y',\theta') \cdot \left((x,y,\theta) \cdot \begin{pmatrix} q_0 \\ p_0 \end{pmatrix} \right)\\
&= (x',y',\theta') \cdot z(x,y,\theta).
\end{align*}
The following proposition shows that all symmetry group solutions are of this form.
\end{example}

\begin{prop}\label{prop-symgpsol}
	The field $z: G \rightarrow T^*Q$ is a symmetry group solution if and only if
	\begin{equation}\label{sym-gp-sol-2}
	g \cdot z(e) = z(g) \,,
	\end{equation}
	where $e$ is the unit element of the Lie group $G$. 
\end{prop}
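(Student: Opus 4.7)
The plan is to prove the two implications of the equivalence separately, using only the definition of a symmetry group solution \eqref{sym-gp-sol} and the fact that $\cdot$ is a (left) group action, so in particular $(gh) \cdot z_0 = g \cdot (h \cdot z_0)$ and $e \cdot z_0 = z_0$ for every $z_0 \in T^*Q$.

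For the forward implication, I would simply specialise the defining relation $g \cdot z(h) = z(gh)$ of a symmetry group solution to the case $h = e$, which immediately yields \eqref{sym-gp-sol-2}. This direction is essentially a tautology.

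For the converse, assume \eqref{sym-gp-sol-2} holds, i.e.\ $z(g) = g \cdot z(e)$ for all $g \in G$. Then for arbitrary $g, h \in G$ I would compute
\[
z(gh) = (gh) \cdot z(e) = g \cdot \bigl( h \cdot z(e) \bigr) = g \cdot z(h),
\]
where the first and last equalities use the hypothesis (applied to the group elements $gh$ and $h$ respectively), and the middle equality uses that $\cdot$ is a left action. This is exactly \eqref{sym-gp-sol}, so $z$ is a symmetry group solution.

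There is no real obstacle here: the result is a direct consequence of the associativity of the group action, and the whole argument amounts to observing that the value of a symmetry group solution at an arbitrary point $g$ is forced by its value at the identity. The only thing to be careful about is the consistent use of left actions for both $\cdot$ and $\bullet$, as emphasised in the text preceding the proposition.
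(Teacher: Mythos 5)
Your proposal is correct and follows essentially the same route as the paper: the forward direction is the specialisation $h=e$, and the converse is the associativity computation $z(gh) = (gh)\cdot z(e) = g\cdot(h\cdot z(e)) = g\cdot z(h)$, which the paper phrases equivalently in terms of acting with $h\,\bullet$ versus $h\,\cdot$ on both sides of \eqref{sym-gp-sol-2}. No gaps.
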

\begin{proof}
	Equation \eqref{sym-gp-sol-2} follows form Equation \eqref{sym-gp-sol} by choosing $h=e$. 
	
	Considering both sides of Equation \eqref{sym-gp-sol-2} as a function of $g$ and acting with $h \bullet$ we find
	\[ hg \cdot z(e) = z(hg) \,. \]
    On the other hand, considering both sides of Equation \eqref{sym-gp-sol-2} as an element of $T^*Q$ and acting with $h \cdot$ we find
	\[ hg \cdot z(e) = h \cdot z(g) \,. \]
	Hence $h \cdot z(g) = z(hg)$, which is Equation \eqref{sym-gp-sol} with $g$ and $h$ interchanged.
\end{proof}

Let $\mathfrak g$ be the Lie algebra of $G$. An abstract Lie algebra element $\xi \in \mathfrak{g}$ has two differential geometric interpretations. First, there is its representation as a left-invariant vector field $\partial_\xi \in \mathfrak{X}(G)$. It acts on functions $z$ on $G$ as
\begin{equation}\label{partial-def}
    \partial_\xi z = \frac{\d}{\d s}\Big|_{s=0} \left( \exp(s \xi) \bullet z \right) \,, 
\end{equation}
i.e.
\[ (\partial_\xi z)(g) = \frac{\d}{\d s}\Big|_{s=0} z\big( \exp(s \xi) g \big) \,. \]
In the abelian case, where $G = \R^n$ with coordinates $t_1,\ldots,t_n$, we can identify $\partial_{\xi_i} = \frac{\partial}{\partial t_i}$.
Second, there is the infinitesimal generator $V_\xi \in \mathfrak{X}(T^*Q)$ of its action on phase space. The vector field $V_\xi$ is defined by, for $z_0 \in T^* Q$,
\begin{equation}
    V_\xi( z_0) = \frac{\d}{\d s}\Big|_{s=0} \exp(s \xi) \cdot z_0 \,.
\end{equation} 
If the action of $G$ on $T^*Q$ is locally effective, then the Lie algebra of vector fields $V_\xi$ is isomorphic to $\mathfrak g$ \cite[Theorem 2.62]{olver1995equivalence}.

A symmetry group solution is characterised infinitesimally as follows:
\begin{prop}\label{prop-infinitesimal}
	The field $z: G \rightarrow T^*Q$ is a symmetry group solution if and only if, for all $g\in G$,
	\begin{equation}\label{sym-gp-infinitesimal}
	  (\partial_\xi z)(g) = V_\xi ( z(g) ) \,.
	\end{equation}
\end{prop}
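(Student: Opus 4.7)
The plan is to prove the two directions separately, using the definitions of $\partial_\xi$ and $V_\xi$ together with Proposition \ref{prop-symgpsol}. The forward direction is a one-line differentiation; the bulk of the work lies in integrating the infinitesimal condition back up to the global one.

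For the forward implication, I would assume $z$ is a symmetry group solution and write $\exp(s\xi)\cdot z(g) = z(\exp(s\xi) g)$, which is just \eqref{sym-gp-sol} applied to the pair $(\exp(s\xi), g)$. Differentiating both sides at $s=0$ gives $V_\xi(z(g))$ on the left (from the definition of $V_\xi$) and $(\partial_\xi z)(g)$ on the right (from \eqref{partial-def}), yielding \eqref{sym-gp-infinitesimal}.

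For the converse, I would fix $g \in G$ and $\xi \in \mathfrak g$ and compare the two curves
\[
\alpha(s) = z(\exp(s\xi)\, g), \qquad \beta(s) = \exp(s\xi) \cdot z(g)
\]
in $T^*Q$. They agree at $s=0$. Using $\exp((s_0+t)\xi) = \exp(t\xi)\exp(s_0\xi)$ and the definitions of $\partial_\xi$ and $V_\xi$, I would compute
\[
\alpha'(s_0) = (\partial_\xi z)\!\left(\exp(s_0\xi)\, g\right), \qquad \beta'(s_0) = V_\xi\!\left(\beta(s_0)\right).
\]
The hypothesis \eqref{sym-gp-infinitesimal}, applied at the point $\exp(s_0\xi)\,g$, turns the expression for $\alpha'(s_0)$ into $V_\xi(\alpha(s_0))$, so both $\alpha$ and $\beta$ satisfy the same ODE $\dot\gamma = V_\xi(\gamma)$ with the same initial condition. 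Uniqueness of ODE solutions gives $\alpha(s)=\beta(s)$, i.e.\ $z(\exp(s\xi)\,g) = \exp(s\xi)\cdot z(g)$ for all $s$.

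To promote this from one-parameter subgroups to arbitrary $h \in G$, I would invoke the standard fact that a connected Lie group is generated by a neighbourhood of the identity, so any $h$ can be written as a finite product $\exp(\xi_1)\cdots\exp(\xi_n)$. A straightforward induction, inserting the one-parameter result at each step, yields $z(hg) = h \cdot z(g)$ for all $g,h \in G$, which is precisely \eqref{sym-gp-sol}. The main obstacle is this last globalisation step: one needs $G$ to be connected (which is a standing assumption for the universal cover mentioned earlier in the section), and one must be careful that the ODE argument is applied at each factor to the correct base point so that the telescoping composition collapses to the group product via associativity of the action~$\cdot$.
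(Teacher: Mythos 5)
Your proposal is correct and follows the same approach as the paper, which simply asserts that \eqref{sym-gp-infinitesimal} is the infinitesimal form of $z(\exp(\xi)g)=\exp(\xi)\cdot z(g)$ and hence equivalent to \eqref{sym-gp-sol}. You have merely filled in the details the paper leaves implicit for the converse direction (ODE uniqueness along one-parameter subgroups, then generation of the connected group $G$ by exponentials), and your caveat about connectedness is consistent with the paper's standing use of the universal covering group.
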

\begin{proof}
	Equation \eqref{sym-gp-infinitesimal} is the infinitesimal form of
	\[ z\big( \exp(\xi) g \big) = \exp(\xi) \cdot z(g) \,, \]
	which is equivalent to Equation \eqref{sym-gp-sol}. 
\end{proof}
The motivation behind these definitions and propositions is to extend to the non-abelian case the familiar situation where one associates a time derivative $\frac{\partial}{\partial t_i}$ to each Hamiltonian vector field $V_{H_i}$ on the phase space $T^*Q$, for a family of functions $H_i$ on $T^*Q$ which are in involution, $\{H_i,H_j\}=0$. In that abelian case, for a basis $\xi_1,\ldots,\xi_n$ of the abelian Lie algebra $\mathfrak{g}$, \eqref{sym-gp-infinitesimal} reduces to 
\[\frac{\partial}{\partial t_i}z(t_1,\dots,t_n)=V_{H_i}(z(t_1,\dots,t_n))\]
and we consider the collection of these $n$ differential equations on the function $z$, giving rise to $n$ commuting time flows on the phase space $T^*Q$.

\subsection{Jet bundles over a Lie group}

As sketched in Section \ref{sec-background}, the variational principle on multi-time involves a differential form depending on configuration variables and their derivatives, i.e.\@ depending on elements of a jet bundle. 
See for example \cite{saunders1989geometry} or \cite[Section 2.3]{olver2000applications} for a detailed treatment of jet bundles.
Here, we will introduce jet bundles over Lie groups, in a slightly unusual way which allows us to understand how the Lie algebra structure affects prolongations of functions. This will help us derive multi-time Euler-Lagrange equations after we have formulated the variational principle. In the present section we discuss the jet bundle of a real function on a Lie group. It is easy to extend this to vector-valued functions (or functions into a single coordinate patch of configuration space $Q$), but for ease of presentation we restrict the discussion to real functions.

Consider a Lie group $G$ and its Lie algebra $\g$, generated by $\xi_1, \ldots, \xi_N$, with structure relations
\begin{equation}\label{strconsts}
[\xi_i,\xi_j] = \sum_k \liec_{ij}^k \xi_k \,.
\end{equation} 
We consider fields $q: G \to \R$ as sections of the trivial bundle $G \times \R \rightarrow G$ with coordinates $(g,q)$. The first jet bundle is $\cJ_1 \rightarrow G$ with $\cJ_1 = G \times \R \times \R^N$ and has coordinates $(g,q,q_1,\ldots,q_n)$. The prolongation of a smooth function $q$ to the first jet bundle is 
\[ \pr_1 q: G \to \cJ_1: g \mapsto (g, q(g), \partial_{\xi_1} q(g), \ldots, \partial_{\xi_N} q(g)) \,. \]

Starting from the second jet bundle we need to take into account the possibly non-commuting derivatives. The derivatives $q_{ij}$ and $q_{ji}$ ($i \neq j$) are not necessarily the same, but neither are they independent, because $\g$ comes with commutation relations. There are two jet bundles we could consider. Below we only present their definition for the second order jet bundle. Higher jet bundles can be constructed in an analogous way.

The first definition we present is of a jet bundle which ignores any relation between $q_{ij}$ and $q_{ji}$ ($i \neq j$), hence it is ``free'' in a similar sense as in ``free algebra''.

\begin{definition}
The \emph{free jet bundle} $\cJ_2 \rightarrow G$ has coordinates $(g, q, q_i, q_{ij})$, where both $i \leq j$ and $i > j$ are allowed, i.e.\@
\[ (g, q, q_i, q_{ij}) =  (g, q, q_i, q_{ii}, q_{ij}, q_{ji})|_{i<j} \,.\]
The fibre is thought of as the vector space spanned by $(q, q_i, q_{ij})$.
The prolongation of $q$ to the second jet bundle is 
\[ \pr_2 q: G \to \cJ_2: g \mapsto (g, q(g), \partial_{\xi_i} q(g), \partial_{\xi_j}\partial_{\xi_i} q(g)) \,. \]
\end{definition}

Taking into account the Lie algebra structure, we can reduce the free jet bundle as follows. Guided by the commutation relation
\[ \partial_{\xi_i}\partial_{\xi_j} = \partial_{\xi_j}\partial_{\xi_i} + \sum_k \liec_{ij}^k \partial_{\xi_k}, \]
we define the equivalence relation
\[ q_{ji} \sim q_{ij} + \sum_k  \liec_{ij}^k q_k, \]
and quotient the fibres of the free jet bundle by $\sim$. Since these relations reflect the Lie group structure, they will become identities for prolongations of fields. In particular, the variational principle involves prolonged fields rather than abstract bundle variables, so it will be independent of the choice of representative.

\begin{definition}
The \emph{quotiented jet bundle} is $\widetilde{\cJ}_2 \rightarrow G$, with $\widetilde \cJ_2 = G \times \R \times \R^N \times \R^{\frac{N(N+1)}{2}}$, and has coordinates $(g, q, q_i, q_{ij})|_{i \leq j}$.
The prolongation of $q$ to the second jet bundle is 
\[ \pr_2 q: G \to \cJ_2: g \mapsto (g, q(g), \partial_{\xi_i} q(g), \partial_{\xi_j}\partial_{\xi_i} q(g))|_{i \leq j} \,. \]
\end{definition}

For a function $f: \cJ_2 \rightarrow \R$ of the free jet bundle, we denote by $\widetilde f$ its projection to the quotiented jet bundle:
\[ \widetilde f(g, q, q_i, q_{ij})|_{i \leq j} = f\Big(g, q, q_i, q_{ii}, q_{ij}, q_{ij} + \sum_k  \liec_{ij}^k q_k \Big) \Big|_{i < j} \,. \]

To easily denote elements of higher free jet bundles we use index-strings: 

\begin{definition}
An \emph{index-string} is a finite sequence $\mathfrak{I} = s_1, \ldots, s_k$ where $k \in \mathbb{N}$ and $s_i \in \{1, \ldots, N\}$, where $N$ is the dimension of the Lie group $G$. An index-string defines a derivative of the field,
$q_{\mathfrak{I}} = \partial_{\xi_{s_k}} \ldots \partial_{\xi_{s_2}} \partial_{\xi_{s_1}} q$, which is a coordinate of the prolongation of $q$ to the $k$-th free jet bundle. 
\end{definition}

To easily denote elements of higher quotiented jet bundles we use multi-indices: 

\begin{definition}
A \emph{multi-index} is an element of $\mathbb{N}^N$. A multi-index $I = (i_1,\ldots,i_N) \in \mathbb{N}^N$ defines a derivative of the field, $q_I = \partial_{\xi_N}^{i_N} \ldots \partial_{\xi_1}^{i_1} q$, which is a coordinate of the prolongation of $q$ to the $k$-th quotiented jet bundle. 
\end{definition}

When no confusion is possible, we will also use a string notation for multi-indices, for example both ``$12$'' and ``$21$'' represent the multi-index $(1,1,0,\ldots,0)$. We use the notation $\emptyset$ for the empty index-string and for the corresponding multi-index $(0,\ldots,0)$.

For a function $f: \cJ_k \rightarrow \R$ of a free jet bundle we denote by $\D_i$ the total derivative 
\[ \D_i f = \sum_{\mathfrak{I}} \der{f}{q_\mathfrak{I}} \partial_{\xi_i} q_\mathfrak{I} + \partial_{\xi_i} f\,, \]
where the sum is over all index-strings $\mathfrak{I}$ and the final term is the analogue of $\der{f}{t_i}$ in the case of commuting flows, which vanishes in case $f$ does not depend explicitly on $g \in G$.
For a function $\widetilde f: \widetilde{\cJ}_k \rightarrow \R$ of a quotiented jet bundle we denote by $\widetilde{\D}_i$ the total derivative 
\[ \widetilde{\D}_i \widetilde f = \sum_{I} \der{\widetilde f}{q_I} \widetilde{\partial_{\xi_i} q_I} + \partial_{\xi_i} \widetilde f \,, \]
where the sum is over all multi-indices $I$. Recall that $\widetilde{\ }$ denotes projection onto the quotiented jet bundle, so $\widetilde{\partial_{\xi_i} q_I}$ means that we evaluate $\partial_{\xi_i} q_I$ while taking into account the commutation relations to write it as a linear combination of well-ordered derivatives.
For example, if $i=1$ and $\widetilde f = q_2$ the only nonzero term in the sum is the one with $I = (0,1,0,\ldots,0)$, i.e.\@ with $q_I = q_2$, yielding
\[ \widetilde{\D}_1 q_2 = \widetilde{ \partial_{\xi_1} q_2} = \widetilde{ q_{21} } = q_{12} + \sum_k  \liec_{12}^k q_k \,. \]
Note that for any function $f: G \times \mathcal J^2 \rightarrow \R$ there holds $\widetilde \D_i \widetilde f  = \widetilde{ \D_i f}$.

\begin{prop}\label{prop-commDandpar}
	For functions $f: \cJ_1 \rightarrow \R$ of the first free jet bundle there holds
	\begin{subequations}
	\begin{align}
	& \der{}{q} \D_i f = \D_i \der{f}{q} \,, \\
	& \der{}{q_j} \D_i f = \D_i \der{f}{q_j} + \delta_i^j \der{f}{q} \,, \\ 
	& \der{}{q_{ji}} \D_i f = \der{f}{q_j} \,, \\
	& \der{}{q_{ij}} \D_i f = \delta_i^j \der{}{q_j} \,, \\
	& \der{}{q_{jk}} \D_i f = 0 \qquad \text{if } j,k \neq i \,,
	\end{align}
	\end{subequations}
	where $\delta_i^j$ is the Kronecker delta.
\end{prop}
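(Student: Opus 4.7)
The plan is to expand $\D_i f$ explicitly on the second free jet bundle $\cJ_2$ and then verify each of the five identities by direct computation on bundle coordinates. For $f: \cJ_1 \to \R$ the coordinates are $(g, q, q_1, \ldots, q_N)$, and the index-string convention gives $\partial_{\xi_i} q_k = q_{ki}$, so
\[ \D_i f = \der{f}{q}\, q_i + \sum_k \der{f}{q_k}\, q_{ki} + \partial_{\xi_i} f \,. \]
This is a function on $\cJ_2$, where by definition of the \emph{free} jet bundle the coordinates $q, q_k, q_{k\ell}$ (for each ordered pair $(k,\ell)$) are all independent; in particular $q_{ij}$ and $q_{ji}$ are independent coordinates when $i \neq j$.

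For identities (a) and (b) I differentiate the above expansion term-by-term with respect to $q$ and $q_j$, respectively. The extra Kronecker term in (b) comes from $\der{q_i}{q_j} = \delta_i^j$, while the second-order derivatives of $f$ that arise on both sides match by the symmetry of mixed partials on the fibre. I also use that $\partial_{\xi_i}$ is a vector field on the base $G$, and therefore commutes with the fibre derivatives $\der{}{q}$ and $\der{}{q_j}$; this is what allows the $\partial_{\xi_i} f$ term on the left to line up with the corresponding term on the right after pulling $\der{}{q}$ (or $\der{}{q_j}$) past $\D_i$.

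For identities (c), (d), (e) I use the structural observation that the only second-order jet variables appearing in $\D_i f$ are the $q_{ki}$, in which the \emph{right-most} index is always $i$. Differentiating with respect to $q_{ji}$ then selects the $k = j$ summand and yields $\der{f}{q_j}$, which is (c). Differentiating with respect to $q_{ij}$ gives $0$ unless $i = j$, since on the free jet bundle $q_{ij}$ and $q_{ji}$ are distinct coordinates for $i \neq j$; this yields (d). Finally $q_{jk}$ with $k \neq i$ does not appear at all in $\D_i f$, giving (e).

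The main pitfall is not mathematical depth but careful bookkeeping around conventions: I must keep the index-string ordering straight so that $\partial_{\xi_i}$ acting on $q_k$ produces $q_{ki}$ rather than $q_{ik}$, and I must respect that on the free jet bundle no identification between $q_{ij}$ and $q_{ji}$ is made, even though the Lie algebra structure constants will link them after passing to the quotiented bundle $\widetilde{\cJ}_2$. The analogous identities for $\widetilde{\D}_i$ on $\widetilde{\cJ}_2$ would require extra terms accounting for the structure constants $\liec_{ij}^k$, but on the free jet bundle the calculation is purely algebraic.
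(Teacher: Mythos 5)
Your proof is correct and follows essentially the same route as the paper: both expand $\D_i$ explicitly as $q_i \der{}{q} + \sum_k q_{ki} \der{}{q_k} + \partial_{\xi_i}$ on the free jet bundle and read off each identity by term-by-term differentiation, using that the right-most index of every second-order coordinate appearing in $\D_i f$ is $i$ and that $f$ depends only on the first jet. Your remarks on the ordering convention $\partial_{\xi_i} q_k = q_{ki}$ and the independence of $q_{ij}$ and $q_{ji}$ on the free bundle are exactly the points the paper's argument relies on.
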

\begin{proof}
	We have
	\[ \D_i =  q_i \der{}{q} + \sum_k q_{ki} \der{}{q_k} + \partial_{\xi_i} \,. \]
	Since none of the coefficients contain an undifferentiated $q$, it follows that
	\[ \der{}{q} \D_i = \D_i \der{}{q} \,. \]
	Furthermore, we find
	\[ \der{}{q_j} \D_i = \D_i \der{}{q_j} + \delta_i^j \der{}{q} \]
    and
	\[\der{}{q_{ji}} \D_i = \D_i \der{}{q_{ji}} + \der{}{q_j} = \der{}{q_j} \,, \]
	where the last equality holds because we are only considering function of the first jet bundle.
	Similarly, if $j \neq i$ and $k\neq i$ we find
	$\der{}{q_{ij}} \D_i = 0$ and $\der{}{q_{jk}} \D_i = 0$.
\end{proof}

For the sake of completeness, we also state the corresponding result in the quotiented jet bundle. However, in the calculations to follow we will always commute total and partial derivatives using Proposition \ref{prop-commDandpar}, before projecting to the quotiented bundle.

\begin{prop}
	For functions $f: \widetilde{\cJ}_1 \rightarrow \R$ of the first quotiented jet bundle there holds
	\begin{subequations}
	\begin{align}
	& \der{}{q} \widetilde{\D}_i f = \widetilde{\D}_i \der{f}{q} \,, \\
	& \der{}{q_j} \widetilde{\D}_i f = \widetilde{\D}_i \der{f}{q_j} + \delta_i^j \der{f}{q} + \sum_{k > i} \liec_{ik}^j \der{f}{q_k} \,, \\ 
	& \der{}{q_{ij}} \widetilde{\D}_i f = \der{}{q_{ji}} \widetilde{\D}_i f = \der{}{q_j} \,, \\
	& \der{}{q_{jk}} \widetilde{\D}_i f = 0 \qquad \text{if } j,k \neq i \,,
	\end{align}
	\end{subequations}
	where $\delta_i^j$ is the Kronecker delta.
\end{prop}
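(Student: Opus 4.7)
The plan is a direct expansion of $\widetilde{\D}_i f$ followed by term-by-term partial differentiation. Since $f: \widetilde{\cJ}_1 \to \R$ depends only on $q$ and the $q_j$, the definition gives
\[
 \widetilde{\D}_i f = \der{f}{q} \, q_i + \sum_k \der{f}{q_k} \, \widetilde{q_{ki}} + \partial_{\xi_i} f \,.
\]
The essential step is to substitute the three-case form of $\widetilde{q_{ki}}$ forced by the equivalence $q_{ji} \sim q_{ij} + \sum_l \liec_{ij}^l q_l$: it equals $q_{ki}$ when $k<i$, equals $q_{ii}$ when $k=i$, and equals $q_{ik}+\sum_l \liec_{ik}^l q_l$ when $k>i$. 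The asymmetry of this split is what ultimately produces the $\sum_{k>i}$ contribution in the second identity.

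With this expansion in hand, each identity follows by applying the relevant partial derivative and reassembling pieces. For $\partial/\partial q_j$, the terms involving second-order partials of $f$ collect into $\widetilde{\D}_i (\partial f/\partial q_j)$, since mixed partials commute and $\partial_{\xi_i}$ commutes with $\partial/\partial q_j$ (as $q_j$ is an independent jet coordinate). The leftover contributions come from (i) $\partial q_i/\partial q_j = \delta_i^j$ in the first summand, producing $\delta_i^j \, \partial f/\partial q$, and (ii) $\partial \widetilde{q_{ki}}/\partial q_j = \liec_{ik}^j$ only in the $k>i$ case, producing $\sum_{k>i} \liec_{ik}^j \, \partial f/\partial q_k$. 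The identity for $\partial/\partial q$ is the same computation but cleaner: neither $q_i$ nor $\widetilde{q_{ki}}$ depend on $q$, so the surviving terms reassemble exactly into $\widetilde{\D}_i (\partial f/\partial q)$.

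For the third identity, note that $q_{ij}$ and $q_{ji}$ are by convention the same multi-index coordinate on $\widetilde{\cJ}_2$, so $\partial/\partial q_{ij} = \partial/\partial q_{ji}$. Evaluating this derivative on the expanded formula, only the summand whose $\widetilde{q_{ki}}$ carries the multi-index $q_{ij}$ can contribute. The case split above forces $k=j$ in each of the sub-cases $j<i$, $j=i$, $j>i$, and the coefficient of the well-ordered representative is exactly $1$ in every sub-case, yielding $\partial f/\partial q_j$. Finally, the fourth identity is immediate: every second-order coordinate occurring in $\widetilde{\D}_i f$ is of the form $\widetilde{q_{ki}}$ and hence carries $i$ among its subscripts, so any $\partial/\partial q_{jk}$ with $j,k\neq i$ annihilates it.

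The main obstacle is pure bookkeeping: keeping the three-way split $k<i$, $k=i$, $k>i$ straight, and verifying that the coefficient of the well-ordered representative $q_{ij}$ inside each $\widetilde{q_{ki}}$ is exactly $1$, so that no spurious $\liec$ contributions contaminate the third or fourth identities. As the author indicates in the remark preceding the statement, one could instead push everything through the projection $\widetilde{\,\cdot\,}$ from Proposition \ref{prop-commDandpar}; however, because $\partial/\partial q_j$ in the quotient bundle corresponds via the projection to a linear combination of $\partial/\partial q_j$ and the $\partial/\partial q_{jk}$ from the free bundle, the direct route above is the more transparent one.
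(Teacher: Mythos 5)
Your proposal is correct and follows essentially the same route as the paper: the paper's proof consists precisely of expanding the operator $\widetilde{\D}_i = q_i \der{}{q} + \sum_{k \leq i} q_{ki}\der{}{q_k} + \sum_{k>i}\bigl(q_{ik} + \sum_\ell \liec_{ik}^\ell q_\ell\bigr)\der{}{q_k} + \partial_{\xi_i}$ and then arguing term by term as in Proposition \ref{prop-commDandpar}, which is exactly the three-way case split and bookkeeping you carry out. Your version just makes explicit the details the paper leaves as ``analogous''.
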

\begin{proof}
	The proof is analogous to the proof of Proposition \ref{prop-commDandpar}, starting from the expansion
	\begin{align*}
	    \widetilde{\D}_i &=  q_i \der{}{q} + \sum_k \widetilde{q_{ki}} \der{}{q_k}  + \partial_{\xi_i}\\
	     &= q_i \der{}{q} + \sum_{k \leq i} q_{ki} \der{}{q_k} + \sum_{k > i} \left( q_{ik} + \sum_\ell \liec_{ik}^\ell q_\ell \right) \der{}{q_k}  + \partial_{\xi_i} \,. \qedhere
	\end{align*}
\end{proof}

\subsection{Variational principle for functions on a Lie group}

We are now in a position to formulate the variational principle on a Lie group and derive the corresponding multi-time Euler-Lagrange equations. In this subsection we will state the definition for general $d$-forms and derive some results which will be helpful to carry out the calculus of variations. In Sections \ref{sec-1form} and \ref{sec-2form} we will specialise the discussion to $d=1$ and $d=2$.

\begin{definition}\label{def-varpcple}
	Consider a $d$-form $\cL[q]$ on a Lie group $G$. We say that $q: G \rightarrow Q$ is \emph{critical} if for every $d$-dimensional submanifold $\gamma \subset G$, we have
	\begin{equation}\label{critical-on-G}
	\der{}{\varepsilon}\Big|_{\varepsilon = 0} \int_\gamma \cL[q + \varepsilon \eta] = 0 \,,
	\end{equation}
	for any variation $\eta$ that vanishes (along with all its derivatives) at the boundary of $\gamma$.
\end{definition}

The following characterisation of critical fields makes use of the vertical exterior derivative $\delta$ in the variational bicomplex (see for example \cite{anderson1992introduction}, \cite[Appendix A]{suris2016lagrangian}, or \cite{caudrelier2020hamiltonian}). This $\delta$ can be thought of as taking an infinitesimal variation (Gateaux derivative) in a direction yet to be specified. For example, $\delta q$ is an operator which maps a vector field to the variation of $q$ in the direction of this vector field. In the variational bicomplex, $\delta$ anti-commutes with $\d$.

\begin{lemma}\label{lemma-equivalence}
	The following are equivalent:
	\begin{enumerate}[$(i)$]
		\item $q: G \rightarrow Q$ is critical,
		\item All infinitesimal variations of the exterior derivative of $\cL$ vanish, i.e.\@ \[\delta \d \cL[q] = 0 \,.\]
	\end{enumerate}
\end{lemma}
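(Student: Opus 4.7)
The plan is to prove the equivalence by translating criticality into an integrated statement about the variational derivative $\delta_\eta \mathcal{L}[q]$ (where $\delta_\eta$ contracts $\delta$ with a fixed variation $\eta$), then bridging to $\delta d\mathcal{L}[q]$ via Stokes' theorem and the anticommutation $d\delta = -\delta d$ on the variational bicomplex. Differentiating under the integral sign gives
\begin{equation*}
\frac{\d}{\d\varepsilon}\Big|_{\varepsilon=0}\int_\gamma \mathcal{L}[q+\varepsilon\eta] = \int_\gamma \delta_\eta \mathcal{L}[q],
\end{equation*}
so criticality is equivalent to $\int_\gamma \delta_\eta \mathcal{L}[q] = 0$ for every $d$-submanifold $\gamma$ and every admissible $\eta$ (vanishing to all orders on $\partial\gamma$).

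For the forward implication $(i)\Rightarrow(ii)$, I would restrict attention to submanifolds of the form $\gamma = \partial B$ for a $(d+1)$-region $B \subset G$, and to variations $\eta$ with compact support in the interior of $B$; such $\eta$ vanish vacuously on $\partial\gamma = \emptyset$ and so are admissible. Criticality then gives $\int_{\partial B} \delta_\eta \mathcal{L}[q] = 0$, and Stokes' theorem together with $d\delta_\eta = -\delta_\eta d$ (since $\eta$ is independent of $g$) yields
\begin{equation*}
0 = \int_{\partial B}\delta_\eta \mathcal{L}[q] = \int_B d\,\delta_\eta \mathcal{L}[q] = -\int_B \delta_\eta d\mathcal{L}[q].
\end{equation*}
Because $B$ and $\eta$ are arbitrary, the fundamental lemma of the calculus of variations forces $\delta_\eta d\mathcal{L}[q] = 0$ pointwise for every $\eta$, i.e.\ $\delta d\mathcal{L}[q] = 0$.

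For the converse $(ii)\Rightarrow(i)$, I would use that $\delta d\mathcal{L}[q] = 0$ translates into $d\,\delta_\eta \mathcal{L}[q] = 0$ for every $\eta$, so that $\delta_\eta\mathcal{L}[q]$ is a closed $d$-form. Given any $\gamma$ and any admissible $\eta$ (which by hypothesis vanishes with all derivatives in a neighbourhood of $\partial\gamma$, possibly after a density/cutoff argument), choose an auxiliary $d$-submanifold $\gamma'$ with $\partial\gamma' = \partial\gamma$ lying in the zero set of $\eta$; then $\gamma - \gamma' = \partial B$ for some $(d+1)$-region $B$, and Stokes gives
\begin{equation*}
\int_\gamma \delta_\eta \mathcal{L}[q] - \int_{\gamma'} \delta_\eta \mathcal{L}[q] = \int_B d\,\delta_\eta \mathcal{L}[q] = 0,
\end{equation*}
while $\int_{\gamma'} \delta_\eta \mathcal{L}[q] = 0$ since $\eta \equiv 0$ on $\gamma'$. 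Hence $\int_\gamma \delta_\eta \mathcal{L}[q] = 0$ and $q$ is critical.

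The main obstacle is the converse: specifically, producing the auxiliary $\gamma'$ inside the zero locus of $\eta$ and justifying that restricting to $\eta$ vanishing in a neighbourhood (rather than merely on) $\partial\gamma$ is no loss of generality. Both can be handled by standard collar-neighbourhood and cut-off arguments on $G$, using that $\eta$ vanishes along with all derivatives on $\partial\gamma$; the rest of the argument is purely formal manipulation in the variational bicomplex and Stokes' theorem, which work verbatim on any manifold and in particular on the Lie group $G$.
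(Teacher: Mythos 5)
Your proof is correct and takes essentially the same route as the paper's: the forward direction is the identical Stokes-plus-anticommutation computation on $\gamma=\partial B$, and your converse---replacing $\gamma$ by an auxiliary $\gamma'$ with the same boundary inside the zero set of $\eta$ so that $\gamma-\gamma'$ bounds a region---is the same localisation-and-capping argument that the paper phrases as extending $\gamma$ to a closed submanifold $\bar\gamma\supset\gamma$. The technical caveats you flag (cut-offs near $\partial\gamma$, existence of the cap) are handled in the paper by the same partition-of-unity remark, so nothing essential is missing.
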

\begin{proof}
	\begin{description}
		\item{$(i) \Rightarrow (ii)$} Let $q$ be critical and consider a $(d+1)$-dimensional oriented submanifold $D$ of $G$ with boundary $\partial D$. Then by Stokes theorem and the variational principle \eqref{critical-on-G} with $\gamma = \partial D$ there holds
		\begin{equation}\label{delta-d-L}
		     \int_D \delta \d \cL = -\int_D \d \delta \cL = -\int_\gamma \delta \cL = 0 \,.
		\end{equation} 
		Since $D$ is arbitrary, this implies that $\delta \d \cL = 0$. 
		
		\item{$(ii) \Rightarrow (i)$} Reading Equation \eqref{delta-d-L} from right to left, we see that if $\delta \d \cL = 0$ then the variational principle \eqref{critical-on-G} is satisfied for all closed $d$-dimensional submanifolds $\gamma = \partial D$. Below we argue that this implies that the variational principle holds on all $d$-dimensional submanifolds.
		
		In the variational principle it is sufficient to consider variations supported in a small neighbourhood, because using a partition of unity we can write any variation as a sum of variations with smaller supports. Hence we can assume that the manifold $\gamma$ in Equation \eqref{critical-on-G} is bounded, so we can extend it to a closed $d$-dimensional submanifold $\bar \gamma$, such that $\gamma \subset \bar \gamma$. Since we already established that the variational principle holds on closed submanifolds, it now follows that is also holds on $\gamma$.
		\qedhere
	\end{description}
\end{proof}

In the following sections, we will use Lemma \ref{lemma-equivalence} to derive the multi-time Euler-Lagrange equations in the case of 1-forms and 2-forms. Before doing so, we explore some properties of the vertical exterior derivative $\delta$.

For any function $P$ of the free jet bundle, there holds
\begin{equation}\label{delta-freejet}
 \delta P = \sum_{\mathfrak{I}} A^\mathfrak{I} \delta q_\mathfrak{I} \,,
\end{equation}
where
\[ A^\mathfrak{I} = \der{P}{q_\mathfrak{I}} \]
for every index-string $\mathfrak{I}$.
Because of the redundant nature of the set of index-strings, the $\delta q_\mathfrak{I}$ in this sum are not independent when we take into account the commutation relations. An expansion into $\delta q_I$, which are independent in the quotiented jet bundle, is obtained in the following Lemma.

\begin{lemma}\label{lemma-free2quotient}
	For any function $P: \cJ^2 \rightarrow \R$ of the free second jet bundle. The vertical exterior derivative of its projection onto the quotiented bundle reads
	\[ \delta \widetilde P = \sum_{I} B^I \delta q_I \,, \]
	where 
	\begin{equation}\label{A-tilde}
	\begin{split}
	& B^{kk} = \widetilde{ A^{kk} } \,, \\
	& B^{k \ell} = \widetilde{ A^{k \ell} } + \widetilde{ A^{\ell k} } \quad \text{where } k < \ell \,, \\
	& B^k = \widetilde{ A^k } +\sum_{m<n}\liec_{mn}^k \widetilde{ A^{nm} } \,, \\
	& B^{\emptyset}= \widetilde{ A^{\emptyset} } \,,
	\end{split}
	\end{equation}
	$A^J = \der{P}{v_J}$, and $\widetilde{\ }$ denotes the projection onto the quotiented jet. In particular, we define $\widetilde{ \delta q_\mathfrak{I}} = \delta \widetilde{q_\mathfrak{I}}$, hence $\widetilde{ \delta P } = \delta \widetilde P$.
\end{lemma}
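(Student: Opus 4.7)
My approach is a direct computation: expand $\delta P$ in the free jet bundle using \eqref{delta-freejet}, apply the projection $\widetilde{\ }$ to the quotiented bundle term by term, and collect coefficients by the independent variations $\delta q_I$. The first step is to split the sum over index-strings into blocks according to order and symmetry:
\[
\delta P \;=\; A^\emptyset\,\delta q \;+\; \sum_{k} A^{k}\,\delta q_{k} \;+\; \sum_{k} A^{kk}\,\delta q_{kk} \;+\; \sum_{k<\ell}\!\bigl( A^{k\ell}\,\delta q_{k\ell} + A^{\ell k}\,\delta q_{\ell k}\bigr),
\]
thereby isolating the off-diagonal pairs $(q_{k\ell},q_{\ell k})$ for $k<\ell$, since these are precisely the coordinates identified by the equivalence relation $q_{\ell k}\sim q_{k\ell}+\sum_m \liec_{k\ell}^{m} q_m$.

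Next I would observe that $\delta$ commutes with the projection $\widetilde{\ }$, because the equivalence relation is affine-linear in the jet coordinates; this is exactly the identity $\widetilde{\delta q_\mathfrak{I}} = \delta\widetilde{q_\mathfrak{I}}$ announced at the end of the lemma. In particular, for $k<\ell$ the variation of the off-diagonal representative satisfies $\widetilde{\delta q_{\ell k}} = \delta q_{k\ell} + \sum_m \liec_{k\ell}^{m}\,\delta q_m$, while all other $\delta q_\mathfrak{I}$ appearing in the split above are already written in quotiented form. Substituting and using that the coefficients become $\widetilde{A^\mathfrak{I}}$ under the projection, one obtains
\[
\delta\widetilde P = \widetilde{A^\emptyset}\,\delta q + \sum_k \widetilde{A^k}\,\delta q_k + \sum_k \widetilde{A^{kk}}\,\delta q_{kk} + \sum_{k<\ell}\!\widetilde{A^{k\ell}}\,\delta q_{k\ell} + \sum_{k<\ell}\!\widetilde{A^{\ell k}}\!\left(\delta q_{k\ell} + \sum_m \liec_{k\ell}^{m}\,\delta q_m\right).
\]

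The final step is to read off the coefficients $B^I$ by matching against $\sum_I B^I\,\delta q_I$, with $I$ running over multi-indices of order $\le 2$. The coefficients of $\delta q$ and $\delta q_{kk}$ give $B^\emptyset = \widetilde{A^\emptyset}$ and $B^{kk}=\widetilde{A^{kk}}$ immediately. The coefficient of $\delta q_{k\ell}$ with $k<\ell$ receives contributions from both $A^{k\ell}$ and $A^{\ell k}$, producing $B^{k\ell}=\widetilde{A^{k\ell}}+\widetilde{A^{\ell k}}$. Finally, the coefficient of $\delta q_k$ combines the genuine first-order term $\widetilde{A^k}$ with the structure-constant corrections generated by the off-diagonal identifications; after relabelling the dummy summation indices this coefficient reads $B^k=\widetilde{A^k}+\sum_{m<n}\liec_{mn}^{k}\widetilde{A^{nm}}$, matching \eqref{A-tilde}. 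The computation is essentially bookkeeping; the only conceptual point, and therefore the step that deserves care, is establishing commutation of $\delta$ with $\widetilde{\ }$, which is what legitimises collecting coefficients in the quotiented bundle in the first place.
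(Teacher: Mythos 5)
Your proof is correct and follows essentially the same route as the paper's: expand $\delta P$ over index-strings in the free jet bundle, substitute $\delta q_{\ell k} = \delta q_{k\ell} + \sum_m \liec_{k\ell}^m\,\delta q_m$ for $k<\ell$, and collect coefficients of the independent $\delta q_I$; your version simply spells out the bookkeeping that the paper compresses into two lines. The only small remark is that the identity $\widetilde{\delta q_{\mathfrak I}} = \delta\widetilde{q_{\mathfrak I}}$ is taken as a definition in the lemma statement rather than something requiring proof, though your linearity justification correctly explains why that definition is consistent.
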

\begin{proof}
	We start from the expansion \eqref{delta-freejet} in the free jet bundle. From the Lie bracket relations (Equation \eqref{strconsts}) it follows that
	\[ \delta q_{ji}=\delta q_{ij}+\sum_k\liec_{ij}^k\delta q_k \,. \]
	Using this relation we find the $B^I$ in terms of the $A^\mathfrak{I}$ as in Equation \eqref{A-tilde}.
\end{proof}

\section{Lagrangian 1-forms on Lie groups}
\label{sec-1form}

Let $\cL[q]$ be a 1-form on $G$, depending on the first jet of a field $q$. It is determined by its pairings $L_i[q]$ with the generators $\partial_{\xi_i} \in \mathfrak{X}(G)$ of the Lie algebra,
\[ \partial_{\xi_i} \ip \cL[q] = L_i[q] \,.\]
Once again we pose the variational principle that the action along every curve must have a critical value with respect to variations of $q:G \rightarrow Q$.
The same multi-time Euler-Lagrange equations as in the commutative case apply:
\begin{thm}\label{thm-mtEL1}
    If $\cL$ only depends on the first jet of $q$, the variational principle of Definition \ref{def-varpcple} is equivalent to the following set of multi-time Euler-Lagrange equations
	\begin{align}
	&\der{L_j}{q_j} = \der{L_i}{q_i} \,, \label{EL1-corner} \\
	&\der{L_j}{q_k} = 0  \qquad \text{ if } k \neq j \,, \label{EL1-alien} \\
	&\var{j}{L_j}{q} = 0 \,. \label{EL1-line}
	\end{align}
\end{thm}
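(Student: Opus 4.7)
The approach is to invoke Lemma \ref{lemma-equivalence}, which reduces criticality of $q$ to the identity $\delta \d \cL[q] = 0$. For this, I would first compute $\d \cL$ on the Lie group using the Maurer-Cartan structure equations $\d \alpha^k = -\sum_{i<j} \liec_{ij}^k \alpha^i \wedge \alpha^j$ for the basis $\{\alpha^i\}$ of left-invariant 1-forms dual to $\{\partial_{\xi_i}\}$. Writing $\cL = \sum_i L_i \alpha^i$ and using the chain rule to convert $\partial_{\xi_i} L_j$ into the total derivative $\D_i L_j$, this gives
\[
\d \cL = \sum_{i<j} P_{ij}\,\alpha^i \wedge \alpha^j, \qquad P_{ij} = \D_i L_j - \D_j L_i - \sum_k \liec_{ij}^k L_k,
\]
where each $P_{ij}$ is a function on the second free jet bundle. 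The condition $\delta \d \cL = 0$ therefore splits into $\delta P_{ij} = 0$ for each pair $i<j$.

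For each such pair I would compute the free-jet coefficients $A^{\mathfrak I}_{ij} = \der{P_{ij}}{q_{\mathfrak I}}$ using Proposition \ref{prop-commDandpar}, project to the quotiented jet bundle by Lemma \ref{lemma-free2quotient} to obtain an expansion $\delta P_{ij} = \sum_I B^I_{ij}\,\delta q_I$ in independent variations, and require $B^I_{ij} = 0$ for every multi-index $I$. A direct case analysis on the coefficients of the second-order variations shows that each $B^{k\ell}_{ij}$ (with $k \le \ell$) is a linear combination of undifferentiated partials $\der{L_a}{q_b}$; requiring these to vanish for all pairs $(i,j)$ yields precisely the alien equations \eqref{EL1-alien}, together with the corner equations \eqref{EL1-corner} arising from the diagonal case $(k,\ell) = (i,j)$.

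Once \eqref{EL1-alien} and \eqref{EL1-corner} are imposed, each $L_j$ depends only on $q$ and $q_j$, and the common value $H := \der{L_i}{q_i} = \der{L_j}{q_j}$ is independent of the index. I expect that, upon substitution, the structure-constant correction $\sum_{m<n} \liec_{mn}^k \widetilde{A^{nm}_{ij}}$ appearing in $B^k_{ij}$ via Lemma \ref{lemma-free2quotient} collapses to $\liec_{ij}^k H$ and cancels an equal term in $\widetilde{A^k_{ij}}$, leaving $B^j_{ij} = -\var{i}{L_i}{q}$, $B^i_{ij} = \var{j}{L_j}{q}$, and $B^k_{ij} = 0$ for all other $k$. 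This produces exactly \eqref{EL1-line}. The remaining coefficient $B^\emptyset_{ij}$, after using $\der{L_k}{q} = \D_k H$, becomes $[\D_i,\D_j] H - \sum_k \liec_{ij}^k \D_k H$, which vanishes because on the quotiented jet the total derivatives inherit the Lie algebra commutation relations $[\D_i,\D_j] = \sum_k \liec_{ij}^k \D_k$.

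The main obstacle is the bookkeeping imposed by Lemma \ref{lemma-free2quotient}: the distinction between the free jet entries $q_{ij}$ and $q_{ji}$ and the quotiented combination $q_{ij} + \sum_k \liec_{ij}^k q_k$ forces an exhaustive case split on the positions of $k$ and $\ell$ relative to $i$ and $j$, and one must keep careful track of the structure-constant contribution $\sum_{m<n}\liec_{mn}^k\widetilde{A^{nm}}$ in $B^k$. It is precisely this term, conspiring with the piece $-\sum_k \liec_{ij}^k L_k$ in the non-abelian exterior derivative, that converts the a priori independent conditions $B^k_{ij} = 0$ (for $k \notin \{i,j\}$) and $B^\emptyset_{ij} = 0$ into consequences of the alien, corner and Euler-Lagrange equations, and verifying this cancellation cleanly is the crux of the argument.
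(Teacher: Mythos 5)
Your proposal is correct and follows essentially the same route as the paper: reduce to $\delta \d \cL = 0$ via Lemma \ref{lemma-equivalence}, compute $P_{ij} = \D_i L_j - \D_j L_i - \sum_k \liec_{ij}^k L_k$, expand $\delta P_{ij}$ on the free jet bundle and project with Lemma \ref{lemma-free2quotient}, and read off \eqref{EL1-corner}--\eqref{EL1-line} from the coefficients $B^I$, with the remaining coefficients $B^k$ ($k \neq i,j$) and $B^\emptyset$ vanishing as consequences. The only cosmetic differences are that you derive $\d \cL$ from the Maurer--Cartan equations rather than the invariant formula for $\partial_{\xi_j} \ip \partial_{\xi_i} \ip \d \cL$ (and you spell out the cancellation in $B^\emptyset$ that the paper leaves to ``retracing our steps''), while your phrase ``each $L_j$ depends only on $q$ and $q_j$'' should be read as working modulo the already-derived equations and their differential consequences, not as a structural statement about the Lagrangians.
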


\begin{proof}
	Let $P_{ij}[q] = \partial_{\xi_j} \ip \partial_{\xi_i} \ip \d \cL[q]$, with $i < j$. Then 
	\begin{align*}
	P_{ij} 
	&= \D_i (\partial_{\xi_j} \ip \cL) - \D_j (\partial_{\xi_i} \ip \cL) - \partial_{[\xi_i,\xi_j]} \ip \cL \\
	&= \D_i L_j - \D_j L_i - \sum_k \liec_{ij}^k L_k \,.
	\end{align*}
	We consider $P_{ij}$ as a function of the free jet bundle, i.e.\@ we allow it to contain derivatives that are not well-ordered. We have
	\[ \delta P_{ij} = \sum_{\mathfrak I} A^{\mathfrak I} \delta q_{\mathfrak I} \,, \]
	where the sum is over all index-strings and
	\[ A^{\mathfrak I} = \der{P_{ij}}{q_{\mathfrak I}} \,. \]
	Since we are on the free jet bundle, we are ignoring the commutation relations, hence the $\delta q_{\mathfrak I}$ are not independent. To remedy this we use Lemma \ref{lemma-free2quotient} and find
	\[ \delta P_{ij} = \sum_{I} B^I \delta q_I \,, \]
	where the sum is over all multi-indices and the $B^I$ are given by Equation \eqref{A-tilde}.
	In particular, using Proposition \ref{prop-commDandpar} we find
	\begin{align*}
	&B^{ii} = \widetilde{A^{ii}} = \widetilde{ \der{P_{ij}}{q_{ii}} } = \der{L_j}{q_i} \,,
	\\
	&B^{ij} = \widetilde{A^{ij}} + \widetilde{A^{ji}} = \widetilde{ \der{P_{ij}}{q_{ij}} } + \widetilde{ \der{P_{ij}}{q_{ji}} } = \der{L_i}{q_i} - \der{L_j}{q_j} \,,
	\\
	&B^{ik} = \widetilde{A^{ik}} + \widetilde{A^{ki}} = \widetilde{ \der{P_{ij}}{q_{ik}} } + \widetilde{ \der{P_{ij}}{q_{ki}} } = - \der{L_j}{q_k} \,.
	\end{align*}
	Hence $\delta P_{ij} = 0$ implies Equations \eqref{EL1-corner} and  \eqref{EL1-alien}. Furthermore, using Equations \eqref{EL1-corner}--\eqref{EL1-alien},
	\begin{align*}
    B^i = \widetilde{A^i} + \sum_{m<n}\liec_{mn}^i \widetilde{A^{nm}}
	&= \widetilde{ \der{P_{ij}}{q_i} } + \liec_{ji}^i \widetilde{ \der{P_{ij}}{q_{ij}} } \\
	&= \der{L_j}{q} - \D_j \der{L_i}{q_i} + \liec_{ij}^i \der{L_i}{q_i} + \liec_{ji}^i \der{L_i}{q_i} \\
	&= \var{j}{L_j}{q} \,,
	\end{align*}
	hence $\delta P_{ij} = 0$ also implies Equation \eqref{EL1-line}.
	
	We have shown, using Lemma \ref{lemma-equivalence}, that the variational principle implies Equations \eqref{EL1-corner}--\eqref{EL1-line}. The converse can be proved by retracing our steps and observing that the equations $B^k = 0$ ($k \neq i,j$) and $B^\emptyset = 0$ are consequences of Equations \eqref{EL1-corner}--\eqref{EL1-line} as well.
\end{proof}

An alternative proof can be given using the stepped curve approach of \cite{suris2016lagrangian}, which for 1-forms easily generalises to the Lie group setting. The stepped curve approach also applies to forms depending on higher jets, and again leads to the same multi-time Euler-Lagrange equations as in the commutative case.

\subsection{Building a 1-form from symmetries of a given Lagrangian}
\label{sec-varsym}

In this subsection we start from a mechanical Lagrangian (of Newtonian type) and build a Lagrangian 1-form describing its variational symmetry group. As is common in this context, we use $\dot{q}$ as shorthand for the time-derivative of $q$, which will be identified $\partial_{\xi_1} q$ in the multi-time setting. Suppose we are given a mechanical system on $TQ$ with Lagrangian
\begin{equation} 
\label{mechlag}
L_1(q,\dot{q}) = \frac{1}{2} |\dot{q}|^2 - U(q)
\end{equation}
and a group $G_0$ of variational symmetries of $L_1$. We allow elements of $G_0$ to be generalised symmetries, meaning that they do not necessarily act as point transformations on $Q$ and have infinitesimal generators that potentially depend on derivatives of the curve. We do require the symmetries in $G_0$ to act on the space of curves from $\R$ to $Q$. This rules out some generalised symmetries which are defined infinitesimally as a generalised vector field but cannot be integrated to a symmetry transformation \cite[Chapter 5]{olver2000applications}.
By definition, $g$ is a variational symmetry if for any $q: \R \rightarrow Q$:
\[ L_1 \left( (g\cdot q)(t) ,\frac{\d}{\d t} (g \cdot q)(t) \right) = L_1(q(t),\dot{q}(t)) + \frac{\d}{\d t} \mathcal{F}_g(q(t),\dot{q}(t)) \]
for some function $\mathcal{F}_g: TQ \rightarrow \R$.

We assume that the infinitesimal generators of all $g \in G_0$ are prolongations of vector fields of the form $W = w(q,\dot{q}) \partial_q$:
\[ \pr(W) = w(q,\dot{q}) \partial_q + \left( \der{w}{q} \dot{q} + \der{w}{\dot{q}} \ddot{q} \right) \partial_{\dot q} \,. \]
The infinitesimal characterisation of a variational symmetry reads
\begin{equation}
\label{varsym}
\pr(W) L_1(q,\dot{q}) = \frac{\d}{\d t} F_W(q,\dot{q}) \,,
\end{equation}
where the function $F_W$ is called the \emph{flux} of the variational symmetry.

In this setting, a natural choice of multi-time is $G = \R \times G_0$. The additional $\R$ represents translations in time $t$. Picking some reference time $(t_0,e) \in G$ and initial conditions $(q_0, v_0) \in TQ$ such that $q(t_0,e) = q_0$, $\dot{q}(t_0,e) = v_0$, the Euler-Lagrange equation of the Lagrangian \eqref{mechlag} defines $q(t,e)$ for all $t \in \R$. This solution can be extended to a symmetry group solution on $G$ by
\begin{equation}\label{solution}
q(t,h) = \Phi^h(q(t,e)) = h \cdot q(t,e)
\end{equation}
for $h \in G_0$ (see Proposition \ref{prop-symgpsol}). When writing \eqref{solution}, we took advantage of the fact that the $t$-flow and $\Phi^h$ commute (because $G_0$ is a symmetry group of $L_1$), hence the order in which the two flows are applied does not matter.

The infinitesimal characterisation (see Proposition \ref{prop-infinitesimal}) of a symmetry group solution is that for all $g \in G$ and $\xi \in \g$:
\begin{align*}
\ddot{q}(g) &= -U'(q(g)) \,, \\
\partial_\xi q(g) &=  W_\xi q(g) 
= w_\xi(q(g),\dot{q}(g)) \,,
\end{align*}
where $W_\xi = w_\xi \partial_q$ is the infinitesimal generator of the group action and the differential operator $\partial_\xi$ is defined in Equation \eqref{partial-def}.

Let $\xi_1= \partial_t, \xi_2,\ldots,\xi_N$ be a basis of the Lie algebra $\mathfrak{g}$ of $G$ and $w_{\xi_i}(q,\dot{q}) \partial_q$ the corresponding generalised vector fields, which are assumed to be variational symmetries of \eqref{mechlag} with fluxes $F_i(q,\dot{q})$. Following \cite{petrera2017variational} we consider for $i \geq 2$
\begin{align}
L_i(q,q_1,q_i) &= \der{L_1}{q_1} (q_i - w_{\xi_i}(q,q_1)) + F_i(q,q_1) \notag \\
&= q_1 q_i - q_1 w_{\xi_i}(q,q_1) + F_i(q,q_1) \,, \label{varsymlag}
\end{align}
where $q_i = \partial_{\xi_i}q$.
The key point here is that we do not assume that the variational symmetries commute. This is a departure from the setting of \cite{petrera2017variational} and other previous works on Lagrangian multiforms. We allow a general Lie algebra structure:
\begin{equation}
[\xi_i, \xi_j] = \sum_k \liec_{ij}^k \xi_k \,. 
\end{equation}
Given $q: G \rightarrow Q$, we define the \defn{Lagrangian 1-form $\cL[q]$} on $G$ by
\begin{equation}
\xi_i \,\ip \cL[q] = L_i(q,q_1, q_i) \,,
\end{equation}
where $L_i$ is defined by Equation \eqref{varsymlag} for $i \geq 2$ and by Equation \eqref{mechlag} for $i=1$.
Note that if all $\xi_i$ commute, then $G \cong \R^{N}$ and we can choose coordinates $t_i$ such that $\cL = \sum_{i=1}^N L_i \,\d t_i$, which is the familiar expression for a Lagrangian $1$-form, as it appears in the literature in the context of commuting symmetries.
The definition of $\cL$ is independent of the basis $\xi_1,\ldots,\xi_n$, as the following proposition shows. 
\begin{prop}
	For every $\xi = \sum_k \alpha_k \xi_k \in \mathfrak{g}$ and every $q: G \rightarrow Q$ there holds 
	\[ \xi \,\ip \cL[q] = q_1 \partial_\xi q - q_1 w_\xi(q,q_1) + \sum_k \alpha_k F_k(q,q_1) \,, \]
	where $w_\xi = \sum_k \alpha_k w_{\xi_k}$ is the characteristic of the generalised vector field on $Q$ induced by $\xi$ and $F_k$ are the fluxes of the variational symmetries $\xi_k$, with $F_1 = \cL_1$ and $w_{\xi_1} = q_1$.
\end{prop}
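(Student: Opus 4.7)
The plan is to establish the identity basis element by basis element and then extend by linearity of the interior product. The proposition is essentially a basis-independent restatement of the defining formulas for the $L_i$.

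First I would verify the identity for $\xi = \xi_k$, i.e.\ for the choice $\alpha_k = 1$ with all other coefficients vanishing. For $k \geq 2$ the right-hand side reduces to $q_1 \partial_{\xi_k} q - q_1 w_{\xi_k}(q, q_1) + F_k(q, q_1)$, which equals $L_k$ by definition \eqref{varsymlag} since $\partial_{\xi_k} q = q_k$. For $k = 1$ the stipulated conventions $w_{\xi_1} = q_1$ and $F_1 = L_1$ force the first two terms to cancel, leaving $L_1$ as defined in \eqref{mechlag}. Conceptually this records the (trivial) fact that time translation is a variational symmetry of $L_1$ whose flux can be taken to be $L_1$ itself.

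Next, using the definition $\xi_k \ip \cL[q] = L_k$ together with linearity of the interior product in the vector field argument, for $\xi = \sum_k \alpha_k \xi_k$ one obtains
\[ \xi \ip \cL[q] = \sum_k \alpha_k L_k. \]
Substituting the explicit expression for each $L_k$ (including the $k=1$ term via the conventions above) and splitting the resulting sum into three pieces, the first piece becomes $q_1 \sum_k \alpha_k \partial_{\xi_k} q = q_1 \partial_\xi q$ by linearity of the assignment $\eta \mapsto \partial_\eta q$; the second piece becomes $-q_1 \sum_k \alpha_k w_{\xi_k}(q, q_1) = -q_1 w_\xi(q, q_1)$ by the definition of $w_\xi$ given in the statement; the third piece is $\sum_k \alpha_k F_k(q, q_1)$ verbatim. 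Combining the three pieces yields the claimed identity.

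I do not foresee any real obstacle. The entire argument is linearity plus bookkeeping; the only point of substance is the $k = 1$ verification, which rests on the auxiliary conventions $F_1 = L_1$ and $w_{\xi_1} = q_1$ announced in the proposition.
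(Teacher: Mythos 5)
Your proposal is correct and follows essentially the same route as the paper: expand $\xi\ip\cL$ by linearity of the interior product, substitute the defining expressions for the $L_k$ (with the $k=1$ case absorbed via the conventions $F_1=L_1$, $w_{\xi_1}=q_1$), and regroup the three sums. The only difference is that you spell out the $k=1$ verification separately, which the paper handles implicitly.
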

\begin{proof}
	We have
	\begin{align*}
	\xi \,\ip \cL =
	\sum_k \alpha_k \xi_k \,\ip \cL
	&=  \sum_k \alpha_k L_k(q,q_1, q_k) \\
	&=  \sum_k \alpha_k \left( q_1 q_k - q_1 w_{\xi_k}(q,q_1) + F_k(q,q_1) \right) \\
	&=  q_1 \partial_\xi q - \sum_k \alpha_k q_1 w_{\xi_k}(q,q_1) + \sum_k \alpha_k F_k(q,q_1) \,.
	\qedhere
	\end{align*}
\end{proof}

\begin{thm}\label{thm-sol=sol}
	A field $q: G \rightarrow Q$ is a symmetry group solution of the group action of $G$ if and only if it is critical in the sense of Definition \ref{def-varpcple} for the Lagrangian 1-form $\cL$.
\end{thm}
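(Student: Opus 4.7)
By Theorem \ref{thm-mtEL1}, criticality of $\cL$ is equivalent to the multi-time Euler-Lagrange equations \eqref{EL1-corner}--\eqref{EL1-line}, so the plan is to verify these equations directly using the explicit form \eqref{varsymlag} of the Lagrangian components. The key computational input will be the variational symmetry identity $\pr(W_{\xi_j}) L_1 = \D_1 F_j$ expanded as a polynomial in $q_{11}$: matching the $q_{11}$-coefficient yields $\der{F_j}{q_1} = q_1 \der{w_{\xi_j}}{q_1}$, while matching the $q_{11}^0$-coefficient yields $q_1 \der{F_j}{q} = -w_{\xi_j} U'(q) + q_1^2 \der{w_{\xi_j}}{q}$.

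I would first dispose of the easy equations. The corner equations \eqref{EL1-corner} hold because $\der{L_i}{q_i} = q_1$ for every $i$, and the alien equations \eqref{EL1-alien} with $k \notin \{1, j\}$ follow by inspection of \eqref{varsymlag}. The remaining alien equation,
\[
\der{L_j}{q_1} = q_j - w_{\xi_j}(q,q_1) - q_1 \der{w_{\xi_j}}{q_1} + \der{F_j}{q_1},
\]
collapses via the $q_{11}$-coefficient above to $q_j - w_{\xi_j}(q,q_1)$, which vanishes precisely on symmetry group solutions by Proposition \ref{prop-infinitesimal}. The line equation \eqref{EL1-line} for $j = 1$ reads $q_{11} = -U'(q)$, i.e.\@ Newton's law, which matches the $\xi_1$-direction of Proposition \ref{prop-infinitesimal} since the $\R$-factor of $G = \R \times G_0$ acts on $T^*Q$ by the Hamiltonian flow of $L_1$.

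The main obstacle is the line equation \eqref{EL1-line} for $j \geq 2$, whose left-hand side computes to $\var{j}{L_j}{q} = -q_1 \der{w_{\xi_j}}{q} + \der{F_j}{q} - q_{1j}$. My plan is to multiply by $q_1$ and substitute the $q_{11}^0$-coefficient of the variational symmetry identity; this collapses the right-hand side to $-(q_1 q_{1j} + w_{\xi_j} U'(q))$, which on a symmetry group solution equals $-\partial_{\xi_j}(\tfrac{1}{2} q_1^2 + U(q))$ once $w_{\xi_j}$ is identified with $\partial_{\xi_j} q$. Since $W_{\xi_j}$ is a variational symmetry of $L_1$, Noether's theorem identifies $V_{\xi_j}$ as the Hamiltonian vector field of the conserved charge $q_1 w_{\xi_j} - F_j$, which in particular preserves the Hamiltonian $\tfrac{1}{2} p^2 + U(q)$. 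Hence $\partial_{\xi_j}(\tfrac{1}{2} q_1^2 + U(q)) = 0$ on any symmetry group solution, and therefore $q_1 \var{j}{L_j}{q} = 0$; the edge case $q_1 = 0$ can be handled by continuity or by computing $q_{1j}$ pointwise using the commutation relation \eqref{strconsts}.

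For the converse I would retrace the same computations: the alien equation $\der{L_j}{q_1} = 0$ forces $q_j = w_{\xi_j}(q, q_1)$ for $j \geq 2$, and the $j = 1$ line equation forces Newton's law, so by Proposition \ref{prop-infinitesimal} $q$ is a symmetry group solution, at which point the remaining Euler-Lagrange equations follow via the forward direction.
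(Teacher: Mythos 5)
Your overall strategy (reduce to the multi-time Euler--Lagrange equations of Theorem \ref{thm-mtEL1} and check them against the infinitesimal characterisation of Proposition \ref{prop-infinitesimal}) is the same as the paper's, and your treatment of the corner equations, the alien equations, and the $j=1$ line equation coincides with the paper's --- your ``$q_{11}$-coefficient'' identity is exactly Lemma \ref{lemma-Fv}. The gap is in the line equations \eqref{EL1-line} for $j\geq 2$. The variational symmetry identity $\pr(W_{\xi_j})L_1 = \D_1 F_j$ is a \emph{single scalar} equation, so matching its $q_{11}^0$-part only yields the contracted relation $q_1^\alpha \der{F_j}{q^\alpha} = -w_{\xi_j}^\beta \der{U}{q^\beta} + q_1^\alpha q_1^\beta \der{w_{\xi_j}^\beta}{q^\alpha}$; it does not determine the individual components $\der{F_j}{q^\alpha}$. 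Consequently your argument proves only $q_1^\alpha\,\var{j}{L_j}{q^\alpha}=0$, i.e.\@ that the vector $\var{j}{L_j}{q}$ is orthogonal to $q_1$, not that it vanishes. For $\dim Q>1$ --- the case in every example the theorem is meant to cover (Kepler on $\R^3$, Calogero--Moser on $\R^n$) --- this is strictly weaker than \eqref{EL1-line}, and it is not an ``edge case at $q_1=0$'' that continuity can repair. (Even for scalar $q$ the degenerate set $\{q_1=0\}$ may contain open intervals, so continuity alone does not finish the argument there either.)

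The paper closes exactly this gap with Lemma \ref{lemma-Fq}: writing $\der{F_j}{q^\alpha} = \der{}{\dot q^\alpha}\frac{\d F_j}{\d t} - \frac{\d}{\d t}\der{F_j}{\dot q^\alpha}$, substituting $\frac{\d F_j}{\d t} = \pr(W_{\xi_j})L_1$, and using Lemma \ref{lemma-Fv} gives the \emph{componentwise} formula $\der{F_j}{q^\alpha} = \D_1 w_{\xi_j}^\alpha + q_1^\beta\der{w_{\xi_j}^\beta}{q^\alpha}$, valid on solutions of $q_{11}+U'(q)=0$. With this, \eqref{EL1-line} for $j\geq 2$ collapses to $\D_1\big(w_{\xi_j}^\alpha - q_j^\alpha\big)=0$, a differential consequence of the alien equation you already established, with no division by $q_1$ and no energy-conservation detour. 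If you replace your contraction step by this lemma, the remainder of your proof, including the converse direction, goes through.
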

The proof is essentially the same as that of \cite[Prop. 7.1]{petrera2017variational}. First we prove some Lemmas.

\begin{lemma}[{\cite[Lemma 3.1]{petrera2017variational}}]
	\label{lemma-Fv}
	Let $F$ be the flux of a variational symmetry $W = w \partial_q$ of the Lagrangian \eqref{mechlag}. There holds 
	\[\der{F}{\dot q^\alpha} = \der{w^\beta}{\dot q^\alpha} \dot q^\beta \,,\]
	where the Greek upper indices denote vector components and summation over repeated indices is assumed.
\end{lemma}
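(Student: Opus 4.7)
The plan is to extract the identity directly from the infinitesimal variational symmetry condition \eqref{varsym}, by comparing coefficients of $\ddot q$ on the two sides as an identity on the jet bundle.

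First I would write out both sides of $\pr(W)L_1 = \frac{\d}{\d t}F$ explicitly. On the left, using the standard prolongation formula,
\[
\pr(W) = w^\alpha \der{}{q^\alpha} + \D_t w^\alpha \der{}{\dot q^\alpha},
\qquad
\D_t w^\alpha = \der{w^\alpha}{q^\beta}\dot q^\beta + \der{w^\alpha}{\dot q^\beta}\ddot q^\beta,
\]
so applied to $L_1 = \tfrac12 |\dot q|^2 - U(q)$ one gets
\[
\pr(W)L_1 = -w^\alpha \der{U}{q^\alpha} + \dot q^\alpha \der{w^\alpha}{q^\beta}\dot q^\beta + \dot q^\alpha \der{w^\alpha}{\dot q^\beta}\ddot q^\beta.
\]
On the right,
\[
\frac{\d F}{\d t} = \der{F}{q^\alpha}\dot q^\alpha + \der{F}{\dot q^\alpha}\ddot q^\alpha.
\]

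The second step is the conceptual heart of the argument: because $W$ is a (possibly generalised) variational symmetry, identity \eqref{varsym} holds for \emph{every} curve $q(t)$, and both sides are polynomial in $\ddot q$ with coefficients depending only on $(q,\dot q)$. Hence we may regard it as an identity on the second jet bundle and match the coefficients of $\ddot q^\alpha$ on the two sides. This yields
\[
\der{F}{\dot q^\alpha} = \dot q^\beta \der{w^\beta}{\dot q^\alpha},
\]
which is exactly the claimed formula (after renaming the dummy index).

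The only real subtlety is justifying the independence of $\ddot q$ when matching coefficients; this is immediate from the fact that a generalised variational symmetry is defined by an identity in the jet variables, not merely on-shell. I would therefore include a brief comment that, since neither $w$ nor $F$ depends on $\ddot q$, the terms linear in $\ddot q$ on each side must agree separately from the $\ddot q$-independent terms. No integration by parts or further machinery is needed, so apart from this observation the proof is a one-line jet-bundle computation.
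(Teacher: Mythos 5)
Your proposal is correct and matches the paper's proof essentially verbatim: both expand the two sides of \eqref{varsym} via the chain rule and extract the identity by matching the coefficients of $\ddot q^\alpha$, since the relation holds for every curve. The extra remark about independence of the $\ddot q$ terms is a harmless elaboration of what the paper states in one line.
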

\begin{proof}
	Using the chain rule we find from Equation \eqref{varsym}
	\[ -\der{U}{q^\beta}  w^\beta + \dot{q}^\beta \left( \der{w^\beta}{q^\alpha} \dot{q}^\alpha +  \der{w^\beta}{\dot{q}^\alpha} \ddot{q}^\alpha \right) = \der{F}{q^\alpha} \dot{q}^\alpha + \der{F}{\dot{q}^\alpha} \ddot{q}^\alpha \,. \]
	Since this holds for any curve $q$, the coefficients of $\ddot{q}$ must match, hence $\dot{q}^\beta \der{w^\beta}{\dot{q}^\alpha} = \der{F}{\dot{q}^\alpha}$.
\end{proof}

\begin{lemma}
	\label{lemma-Fq}
	On solutions of the Euler-Lagrange equation $\ddot{q} + U'(q)=0$ there holds
	\[ \der{F}{q^\alpha} = \frac{\d w^\alpha}{\d t} + \dot q^\beta \der{w^\beta}{q^\alpha} \,,\]
	where summation over repeated indices is assumed.
\end{lemma}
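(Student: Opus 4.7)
The plan is to exploit the infinitesimal variational symmetry identity \eqref{varsym}, $\pr(W) L_1 = \frac{\d F_W}{\d t}$, viewed as an identity on the second-order jet. Both sides are affine in $\ddot q^\alpha$, so one may equate the two coefficients separately. Matching the coefficient of $\ddot q^\alpha$ reproduces Lemma \ref{lemma-Fv} (already proved), while the $\ddot q$-independent part yields the identity
\begin{equation}\label{plan-lemma-Fq}
\der{F}{q^\alpha} \dot q^\alpha = -w^\beta \der{U}{q^\beta} + \dot q^\alpha \dot q^\beta \der{w^\beta}{q^\alpha},
\end{equation}
as a function of $(q,\dot q)$. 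This is only the contraction of the desired equality with $\dot q^\alpha$; by itself it does not determine the individual partial derivatives $\der{F}{q^\gamma}$.

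To isolate them, I would differentiate \eqref{plan-lemma-Fq} with respect to $\dot q^\gamma$. On the right-hand side this produces $\der{F}{q^\gamma}$ together with a ``mixed'' second-derivative term $\dot q^\alpha \der{}{q^\alpha}\der{F}{\dot q^\gamma}$. On the left-hand side one picks up $-\der{w^\beta}{\dot q^\gamma} \der{U}{q^\beta}$, $\der{w^\gamma}{q^\alpha} \dot q^\alpha$, and $\dot q^\beta \der{w^\beta}{q^\gamma}$, along with a symmetric mixed term $\dot q^\alpha \dot q^\beta \der{}{q^\alpha}\der{w^\beta}{\dot q^\gamma}$. Differentiating the identity of Lemma \ref{lemma-Fv} with respect to $q^\alpha$ gives $\der{}{q^\alpha}\der{F}{\dot q^\gamma} = \dot q^\beta \der{}{q^\alpha}\der{w^\beta}{\dot q^\gamma}$, so the two mixed terms cancel exactly.

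Rearranging, one is left with the off-shell identity
\[ \der{F}{q^\gamma} = \der{w^\gamma}{q^\alpha} \dot q^\alpha - \der{w^\gamma}{\dot q^\beta} \der{U}{q^\beta} + \dot q^\beta \der{w^\beta}{q^\gamma}. \]
Restricting now to a solution of the Euler-Lagrange equation via $\ddot q^\beta = -\der{U}{q^\beta}$, the first two terms combine into $\frac{\d w^\gamma}{\d t} = \der{w^\gamma}{q^\alpha} \dot q^\alpha + \der{w^\gamma}{\dot q^\beta} \ddot q^\beta$, yielding the claim after relabelling $\gamma \to \alpha$. The main obstacle is purely bookkeeping: one has to track two independent second-derivative contributions on each side and recognise, via differentiation of Lemma \ref{lemma-Fv}, that they coincide; beyond that the argument is straightforward algebraic rearrangement.
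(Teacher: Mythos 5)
Your route is genuinely different from the paper's: the paper applies the operator identity $\der{}{q^\alpha} = \der{}{\dot q^\alpha}\frac{\d}{\d t} - \frac{\d}{\d t}\der{}{\dot q^\alpha}$ directly to $F$, substitutes $\frac{\d F}{\d t} = \pr(W)L_1$ together with Lemma \ref{lemma-Fv}, and lets the $\ddot q$-dependence cancel against the Euler--Lagrange expression at the end; you instead discard the $\ddot q$-part of the symmetry identity first and then differentiate the remaining contraction with respect to $\dot q^\gamma$. Both are legitimate, and your cancellation of the mixed second derivatives via $\der{}{q^\alpha}\der{F}{\dot q^\gamma} = \dot q^\beta \der{}{q^\alpha}\der{w^\beta}{\dot q^\gamma}$ is correct.

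There is, however, one unjustified step. The differentiation actually produces the potential term $-\der{w^\beta}{\dot q^\gamma}\der{U}{q^\beta}$ (as you correctly list in the intermediate step), but the off-shell identity you then write down contains $-\der{w^\gamma}{\dot q^\beta}\der{U}{q^\beta}$, with the roles of $w$ and $\dot q$ in the derivative interchanged. These differ in general, and the interchange is precisely what you need in order to assemble $\der{w^\gamma}{q^\alpha}\dot q^\alpha + \der{w^\gamma}{\dot q^\beta}\ddot q^\beta$ into $\frac{\d w^\gamma}{\d t}$ on shell; with the term as actually derived you would be left with $\der{w^\beta}{\dot q^\gamma}\ddot q^\beta$, which is not part of $\frac{\d w^\gamma}{\d t}$. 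The missing ingredient is the symmetry $\der{w^\alpha}{\dot q^\beta} = \der{w^\beta}{\dot q^\alpha}$. Fortunately this is itself a consequence of Lemma \ref{lemma-Fv}: differentiating $\der{F}{\dot q^\alpha} = \der{w^\beta}{\dot q^\alpha}\dot q^\beta$ with respect to $\dot q^\gamma$ gives
\[
\frac{\partial^2 F}{\partial \dot q^\gamma \partial \dot q^\alpha} = \der{w^\gamma}{\dot q^\alpha} + \dot q^\beta \frac{\partial^2 w^\beta}{\partial \dot q^\gamma \partial \dot q^\alpha}\,,
\]
and since the left-hand side and the last term are symmetric in $(\alpha,\gamma)$, so is $\der{w^\gamma}{\dot q^\alpha}$. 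Once this observation is inserted, your argument closes and yields the statement of the lemma.
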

\begin{proof}
	We have
	\begin{align*}
	\der{F}{q^\alpha} &= \der{}{\dot q^\alpha} \frac{\d F}{\d t} - \frac{\d}{\d t} \der{F}{\dot q^\alpha} \\
	&= \der{}{\dot q^\alpha} \pr(W) L_1 - \frac{\d}{\d t} \left(\dot{q}^\beta \der{w^\beta}{\dot q^\alpha}\right) \\
	&= \der{}{\dot q^\alpha} \left( \dot{q}^\beta \frac{\d w^\beta}{\d t} - \der{U}{q^\beta} w^\beta \right) - \ddot{q}^\beta \der{w^\beta}{\dot q^\alpha} - \dot{q}^\beta \frac{\d}{\d t}\der{w^\beta}{\dot q^\alpha} \\
	&= \frac{\d w^\alpha}{\d t} +  \dot{q}^\beta \der{}{\dot q^\alpha} \frac{\d w^\beta}{\d t} - \der{U}{q^\beta} \der{w^\beta}{\dot q^\alpha} - \ddot{q}^\beta \der{w^\beta}{\dot q^\alpha} - \dot{q}^\beta \frac{\d}{\d t}\der{w^\beta}{\dot q^\alpha} \\
	&= \frac{\d w^\alpha}{\d t} + \dot{q}^\beta \der{w^\beta}{q^\alpha} - \left(\ddot{q}^\beta + \der{U}{q^\beta} \right) \der{w^\beta}{\dot q^\alpha} \,. \qedhere
	\end{align*}
\end{proof}

\begin{proof}[Proof of Theorem \ref{thm-sol=sol}]
	Critical fields $q: G \rightarrow Q$ are characterised by the multi-time Euler-Lagrange equations \eqref{EL1-corner}--\eqref{EL1-line}. In particular, Equation \eqref{EL1-line} with $j = 1$ yields
	\[ q_{11} + U'(q) = 0 \]
	and Equation \eqref{EL1-alien} with $k=1$ yields
	\[  q_j^\alpha - w_{\xi_j}^\alpha - q_1^\beta \der{w_{\xi_j}^\alpha}{q_1^\beta} + \der{F}{q_1^\alpha} = 0 \,, \]
	which by Lemma \ref{lemma-Fv} is equivalent to
	\[ q_j - w_{\xi_j} = 0 \,. \]
	Equation \eqref{EL1-corner} is equivalent to the trivial equation $\dot{q} = \dot{q}$. Equation \eqref{EL1-line} with $j \neq 1$ is a differential consequence of the previous equations, as the following computation shows. We have
	\begin{align*}
	\eqref{EL1-line} &\Leftrightarrow  - q_1^\beta \der{w_{\xi_j}^\beta}{q^\alpha} + \der{F_j}{q^\alpha} - q_{1j}^\alpha = 0 
	\xLeftrightarrow{\text{Lemma \ref{lemma-Fq}}} \partial_1 w_{\xi_j}^\alpha - q_{1j}^\alpha = 0 \,,
	\end{align*}
	which is equivalent to $\partial_1 ( w_{\xi_j}^\alpha - q_j^\alpha )$, hence it is a consequence of Equation \eqref{EL1-alien} with $k=1$.
\end{proof}

\subsection{Building a 1-form from Hamiltonians}

Suppose we have a Hamiltonian of Newtonian type, $H(q,p) = \frac{1}{2} |p|^2 + U(q)$, with a number of symmetries defined by Hamiltonians $H_2,\ldots,H_N$. Let $X_{H_1},  \ldots, X_{H_N}$ be the corresponding Hamiltonian vector fields on $T^*Q$.

\begin{prop}\label{prop-linear-poisson}
	If there exist constants $\liec_{ij}^k$ such that
	\begin{equation}\label{linear-poisson}
	\{ H_i, H_j \} = \sum_k \liec_{ij}^k H_k \,,
	\end{equation}	
	then the vector space spanned by the Hamiltonian vector fields $X_{H_1},\ldots,X_{H_N}$ is a Lie algebra with Lie bracket given by the commutator. 
\end{prop}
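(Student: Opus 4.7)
The plan is to reduce the statement to the standard homomorphism property of the assignment $H \mapsto X_H$ from smooth functions on $T^*Q$ to vector fields on $T^*Q$. Recall the classical identity (up to a sign convention)
\[ [X_f, X_g] = X_{\{f,g\}} \,, \]
which follows directly from the definition of the Hamiltonian vector field together with the Jacobi identity for the Poisson bracket; I would quote this as a known fact rather than rederive it.

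With that in hand, the proof is almost immediate. First I would check closure: applying the identity above and then using the hypothesis \eqref{linear-poisson} gives
\[ [X_{H_i}, X_{H_j}] = X_{\{H_i,H_j\}} = X_{\sum_k \liec_{ij}^k H_k} \,. \]
The crucial point is that the $\liec_{ij}^k$ are \emph{constants}, so the linearity of the assignment $H \mapsto X_H$ (which holds only for $\R$-linear combinations, not $C^\infty(T^*Q)$-linear combinations) yields
\[ X_{\sum_k \liec_{ij}^k H_k} = \sum_k \liec_{ij}^k X_{H_k} \,, \]
which lies in the span. Second, the Jacobi identity for the commutator of vector fields is automatic, so the span together with $[\cdot,\cdot]$ is a Lie algebra.

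I do not expect any real obstacle here; the only subtlety worth flagging explicitly in the write-up is precisely the role of the constancy of the $\liec_{ij}^k$, since this is what makes $H \mapsto X_H$ behave as a Lie algebra homomorphism on the finite-dimensional subspace $\mathrm{span}(H_1,\ldots,H_N)$. As a sanity check I would also remark that the structure constants of this Lie algebra of vector fields coincide with the $\liec_{ij}^k$ appearing in \eqref{linear-poisson}, which is consistent with the Lie algebra conventions \eqref{strconsts} used earlier in the paper and explains why the same symbol is used.
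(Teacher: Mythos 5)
Your proposal is correct and follows essentially the same route as the paper: the paper's proof simply inlines the derivation of $[X_{H_i},X_{H_j}]f=\{\{H_i,H_j\},f\}$ from the Jacobi identity before pulling the constant $\liec_{ij}^k$ out of the bracket, which is exactly the homomorphism property you quote. Your explicit remark on why constancy of the $\liec_{ij}^k$ is essential is a worthwhile addition but does not change the argument.
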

\begin{proof}
	We have
	\begin{align} 
	[X_{H_i},X_{H_j}] f 
	&= \{ H_i , \{ H_j, f \} \} - \{ H_j , \{ H_i, f \} \} \notag\\
	&= \{ \{ H_i, H_j \}, f \} \notag\\
	&= \sum_k \liec_{ij}^k \{ H_k, f \} \,. \label{Lie-Poisson}
	\end{align}
	Hence $[X_{H_i},X_{H_j}] = \sum_k \liec_{ij}^k X_{H_k}$.
\end{proof}

In this case, we can take as our multi-time the universal covering Lie group $G$ of the Lie algebra $\g$. In other words, we take the group of symmetries generated by $H_1, \ldots, H_N$ as multi-time. 

The role played in the commuting case by the time derivatives $\partial_{t_i}$ is now played by a basis of left-invariant vector fields $\partial_{\xi_i}$ on $G$. They satisfy the same Lie-algebraic relations as the $X_{H_i}$, but we choose not to identify them. This is to emphasise the conceptual difference between the vector fields $\partial_{\xi_i}$ on multi-time $G$ and the vector fields $X_{H_i}$ on the phase space $T^*Q$. The Hamiltonian vector fields $X_{H_i}$ play the role of the symmetry generators $V_{\xi_i}$ in Proposition \ref{prop-infinitesimal}.

If the Poisson relations fail to be linear, i.e.\@ if Equation \eqref{linear-poisson} does not hold for any constants $\liec_{ij}^k$, then we need a different approach to find a Lie algebra and Lie group on which to formulate the variational principle. This case will be handled in Section \ref{sec-nonlin-poisson}. First we will discuss the construction of a suitable 1-form on $G$ in case Equation \eqref{linear-poisson} does hold.

In the context of commuting flows, the relation between Lagrangian 1-forms and Hamiltonian structures is well understood \cite{suris2013variational,vermeeren2021hamiltonian}, at least if they are of Newtonian form. We use the same construction here and define the 1-form $\cL[q]$ by
\begin{equation}\label{lag-from-ham}
\partial_{\xi_i} \ip \cL[q] = L_i[q] := q_1 q_i - H_i(q,q_1) \,,
\end{equation}
where we identify $p = q_1$. Note that $L_1 = \frac{1}{2} q_1 - U(q)$.

\begin{prop}\label{prop:EL=Ham}
    The multi-time Euler-Lagrange equations for $\cL$, as defined in Equation \eqref{lag-from-ham}, are equivalent to the set of canonical Hamiltonian equations for $H_1,\ldots,H_N$, under the identification $p = q_1.$
\end{prop}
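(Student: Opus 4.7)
The plan is to apply Theorem~\ref{thm-mtEL1} directly to the Lagrangian components $L_i = q_1 q_i - H_i(q, q_1)$ (with $L_1 = \tfrac12 q_1^2 - U(q)$) and match each resulting multi-time Euler--Lagrange equation against one half of the canonical Hamiltonian equations $\partial_{\xi_j} q = \der{H_j}{p}$ and $\partial_{\xi_j} p = -\der{H_j}{q}$, read under the identification $p = q_1$.

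First I would compute the basic partial derivatives. For $j \neq 1$ one finds $\der{L_j}{q_j} = q_1$, $\der{L_j}{q_1} = q_j - \der{H_j}{p}(q,q_1)$, $\der{L_j}{q_k} = 0$ for $k \notin \{1,j\}$, and $\der{L_j}{q} = -\der{H_j}{q}(q,q_1)$; for $L_1$ one has $\der{L_1}{q_1} = q_1$ and $\der{L_1}{q} = -\der{H_1}{q}$. Substituting into the corner equations \eqref{EL1-corner} gives the tautology $q_1 = q_1$, so they impose no condition. The alien equations \eqref{EL1-alien} with $j \neq 1$ and $k = 1$ reduce to $q_j = \der{H_j}{p}(q,q_1)$, which under $p = q_1$ is the position half of the Hamilton equations for $H_j$; the remaining alien equations are automatic because $L_j$ depends only on $q$, $q_1$, and $q_j$.

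The line equations \eqref{EL1-line} are handled uniformly in $j$: since each $L_j$ is first-order, $\var{j}{L_j}{q} = \der{L_j}{q} - \D_j \der{L_j}{q_j}$, and a short computation shows this equals $-\der{H_j}{q}(q,q_1) - q_{1j}$ for every $j$ (including $j=1$, where $q_{11} = \partial_{\xi_1} q_1$). Setting this to zero and using $p = q_1$ gives $\partial_{\xi_j} p = -\der{H_j}{q}$, the momentum half of the Hamilton equations. Every step above is reversible, so the two systems are equivalent. I expect no serious obstacle; the only points requiring care are the chain-rule bookkeeping that turns $\der{}{q_1} H_j(q,q_1)$ into $\der{H_j}{p}(q,q_1)$, and the observation that the first Hamilton equation for $H_1$, namely $\dot q = \der{H_1}{p} = p$, is tautological under $p = q_1$ and therefore does not need to be derived from any multi-time Euler--Lagrange equation, which is precisely why the proposition is phrased with the identification $p = q_1$ built in.
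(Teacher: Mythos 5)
Your proposal is correct and follows essentially the same route as the paper's own proof: substitute $L_i = q_1 q_i - H_i(q,q_1)$ into the three types of multi-time Euler--Lagrange equations of Theorem~\ref{thm-mtEL1}, observe that the corner equations are tautological, and read off the two halves of Hamilton's equations from the alien and line equations respectively. The paper's version is just terser; your extra bookkeeping (including the remark that $\dot q = \partial H_1/\partial p$ is tautological under $p=q_1$) is consistent with it.
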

\begin{proof}
The multi-time Euler-Lagrange equations of type \eqref{EL1-corner} are trivially satisfied.
The multi-time Euler-Lagrange equations of type \eqref{EL1-alien} yield
\[ q_i = \der{H_i(q,q_1)}{q_1} \]
for $i > 0$ (and nothing for $L_1$).
Finally, the multi-time Euler-Lagrange equations of type \eqref{EL1-line} yield
\[ q_{1i} = - \der{H_i(q,q_1)}{q} \]
for $i > 0$ and 
\[ q_{11} = -\der{V(q)}{q} \,. \qedhere\]
\end{proof}

\begin{prop}\label{prop:double0}
The exterior derivative $\d \cL$, where $\cL$ is as in Equation \eqref{lag-from-ham}, attains a double zero on solutions to the multi-time Euler-Lagrange equations.
\end{prop}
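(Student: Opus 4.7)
My plan is to show that the coefficient $P_{ij} := \partial_{\xi_j} \ip \partial_{\xi_i} \ip \d\cL$ factors as
\[ P_{ij} = F_i E_j - F_j E_i, \]
where
\[ E_k := q_k - \der{H_k}{q_1}, \qquad F_k := q_{1k} + \der{H_k}{q} \]
are the expressions which, by Proposition \ref{prop:EL=Ham}, vanish precisely on the multi-time Euler-Lagrange equations (with $E_1 \equiv 0$ identically, since $H_1 = \tfrac{1}{2} q_1^2 + U$). Since $\d\cL$ is determined by its pairings with pairs of basis left-invariant vector fields, and each $P_{ij}$ is thereby expressed as a sum of products of two factors that vanish on solutions, this establishes the required double zero.

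To compute $P_{ij}$ concretely, I would start from the identity
\[ P_{ij} = \D_i L_j - \D_j L_i - \sum_k \liec_{ij}^k L_k, \]
derived at the beginning of the proof of Theorem \ref{thm-mtEL1}, and substitute $L_k = q_1 q_k - H_k(q, q_1)$. Applying the chain rule, the only genuine second-jet coordinates that appear are $q_{ji}$ and $q_{ij}$, and they enter solely through the combination $q_1(q_{ji} - q_{ij})$. By the commutation relation $q_{ji} - q_{ij} = \sum_k \liec_{ij}^k q_k$, this contribution cancels exactly against the $q_1 \sum_k \liec_{ij}^k q_k$ piece of $\sum_k \liec_{ij}^k L_k$, so $P_{ij}$ depends only on the first jet.

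I would then substitute $q_i = E_i + \der{H_i}{q_1}$ and $q_{1i} = F_i - \der{H_i}{q}$ (and likewise for index $j$) into this first-jet expression. The purely quadratic $(E,F)$-part collapses to $F_i E_j - F_j E_i$; the terms linear in $E_\bullet$ or $F_\bullet$ (each carrying a single factor $\der{H_\bullet}{q}$ or $\der{H_\bullet}{q_1}$) cancel pairwise by inspection; and the purely Hamiltonian remainder collects into $-\{H_i, H_j\} + \sum_k \liec_{ij}^k H_k$, which vanishes by the standing hypothesis \eqref{linear-poisson}. The main bookkeeping obstacle is tracking signs: the three cross-substitutions each produce one copy of the Poisson bracket $\der{H_i}{q_1}\der{H_j}{q} - \der{H_i}{q}\der{H_j}{q_1}$ — with signs $+, -, -$ respectively — yielding the net coefficient $-1$ needed to cancel the explicit $+\sum_k \liec_{ij}^k H_k$ from the structure-constant term. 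Once the signs are verified, the double zero is manifest.
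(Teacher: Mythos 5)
Your proof is correct and follows essentially the same route as the paper's: the paper likewise expands $\partial_\nu \ip \partial_\xi \ip \d \cL$, uses the linear Poisson relation \eqref{linear-poisson} to rewrite $\partial_{[\xi,\nu]} \ip \cL$ so that the second-derivative terms cancel, and arrives at exactly your factorisation $\left( \partial_\xi p + \der{H_\xi}{q} \right)\left( \partial_\nu q - \der{H_\nu}{p} \right) - \left( \partial_\nu p + \der{H_\nu}{q} \right)\left( \partial_\xi q - \der{H_\xi}{p} \right)$, i.e.\@ your $F_i E_j - F_j E_i$. The only cosmetic difference is that the paper phrases the computation basis-independently for general $\xi,\nu \in \g$ rather than with basis indices.
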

\begin{proof}
    First, observe that
    \begin{align*}
        \partial_{[\xi_i,\xi_j]} \ip \cL 
        &= \sum_k \liec_{ij}^k L_k \\
        &= q_1 \sum_k \liec_{ij}^k q_k -  \sum_k \liec_{ij}^k H_k(q,q_1) \\
        &= q_1 [\partial_{\xi_i},\partial_{\xi_j}]q - \{ H_i, H_j \}(q,q_1) \,.
    \end{align*}
    Hence, by linearity, for all $\xi,\nu \in \g$ there holds
    \[ \partial_{[\xi,\nu]} \ip \cL = q_1 [\partial_\xi,\partial_\nu]q - \{ H_\xi, H_\nu \}(q,q_1) \,. \]
    Now, identifying $p = q_1$, we find
    \begin{align*}
        \partial_\nu \ip \partial_\xi \ip \d \cL
        &= \partial_\xi L_\nu - \partial_\nu L_\xi - \partial_{[\xi,\nu]} \ip \cL  \\
        &= \partial_\xi p \partial_\nu q - \partial_\nu p \partial_\xi q - \partial_\xi H_\nu + \partial_\nu H_\xi + \{ H_\xi, H_\nu \} \\
        &= \partial_\xi p \partial_\nu q - \partial_\nu p \partial_\xi q - \der{H_\nu}{q} \partial_\xi q - \der{H_\nu}{p} \partial_\xi p + \der{H_\xi}{q} \partial_\nu q - \der{H_\xi}{p} \partial_\nu p + \{ H_\xi, H_\nu \} \\
        &= \left( \partial_\xi p + \der{H_\xi}{q} \right) \left( \partial_\nu q - \der{H_\nu}{p} \right) 
        -\left( \partial_\nu p + \der{H_\nu}{q} \right) \left( \partial_\xi q - \der{H_\xi}{p} \right) \,,
    \end{align*}
    where the last equality makes use of the canonical form of the Poisson bracket,
    $ \{f,g\} = \der{f}{p}\der{g}{q} - \der{g}{p} \der{f}{q}. $
\end{proof}

\subsubsection{Nonlinear Poisson relations}
\label{sec-nonlin-poisson}
If the Poisson relations are not linear, then (some of) the coefficients of $\{ H_k, f \}$ in Equation \eqref{Lie-Poisson} will depend on the $H_i$ instead of being constant. This would mean that the commutators between the vector fields depend on the values of the Hamiltonians, which means that the vector fields themselves do not constitute a Lie algebra. However, even in this case, we can find a Lie algebra underlying the system.

Consider the commutative algebra $\cF$ generated by the functions $H_1, \ldots, H_N: T^*Q \rightarrow \R$, i.e.\@ the algebra of functions of $H_1, \ldots, H_N$. If this algebra is closed under the canonical Poisson bracket, then it is an example of a \emph{function group} in the sense of Lie \cite{lie1890theorie}.\footnote{For all examples in this work it would be sufficient to consider the subalgebra of this function group consisting  of polynomials in $H_1, \ldots, H_N$.}
Such function groups have been considered in geometric mechanics \cite{weinstein1983local,hermann1983geometric} and control theory \cite{vanderschaft1985controlled}.
The canonical Poisson bracket on $T^*Q$ turns $\cF$ into a Lie algebra
\[ \h = (\cF, \{\cdot,\cdot\}) \,.\]
Even though structure constants in the sense of Equation \eqref{linear-poisson} do not exist in this case, one can define structure constants on $\h$ by choosing a basis $\{ f_\alpha \}$ of $\h$. (Every vector space has a basis if the axiom of choice is assumed.) We then take $\liec_{\alpha \beta}^\gamma$ such that
\[ 	\{ f_\alpha, f_\beta \} = \sum_\gamma \liec_{\alpha \beta}^\gamma f_\gamma \,. \]
Even though the index set that $\gamma$ belongs to is infinite, the above sum will only contain a finite number of nonzero terms, because every vector can be written as a linear combination of finitely many basis elements.

Using the correspondence between functions $f \in \cF$ and their canonical Hamiltonian vector fields $X_f$, we can see that $\h$, modulo additive constants, is isomorphic to the Lie algebra
\[ ( \{ X_f \mid f \in \cF \}, [\cdot,\cdot] ) \,, \]
where $[\cdot,\cdot]$ denotes the commutator of vector fields on $T^*Q$. Note that the finite set of vector fields $\{X_{H_1}, \ldots, X_{H_N}\}$ does not necessarily span a Lie algebra. On the other hand, the infinite-dimensional $\h$ is always a Lie algebra, regardless of what the Poisson relations look like.

We would like to define a 1-form $\cL$ and impose the variational principle on a Lie group that has $\h$ as its Lie algebra. However, since Lie's (converse) third theorem does not hold in an infinite-dimensional setting \cite{vanest1964nonenlarbile}, the existence of such a Lie group cannot be guaranteed. At first sight this seems to be an insurmountable obstruction, since we need a Lie group (or at least a manifold) to formulate the variational principle of Definition \ref{def-varpcple}. However, the characterisation $\delta \d \cL = 0$ is formulated within a single tangent space of multi-time, so it can be considered without referring to the manifold structure. In particular, we can impose it on the Lie algebra $\h$ with no requirement for a Lie group structure. Lemma \ref{lemma-equivalence} shows the characterisation $\delta \d \cL = 0$ to be equivalent to the variational principle whenever a Lie group exists, and the formal calculations will be the same whether or not a Lie group associated to $\h$ exists. Below we provide some details of this construction.

As multi-time we consider a subgroup of the group of symplectic diffeomorphisms on Q, given by
\begin{equation}
G_\cF = \{ \exp(X_{f_1}) \circ \ldots \circ \exp(X_{f_m}) \mid m \in \N, f_1,\ldots,f_m \in \cF \} \,,    
\end{equation}
where $\exp$ denotes the flow of the vector field over one unit time. This is the group of all transformations which can be obtained as concatenation of finitely many flows of Hamiltonian vector fields with Hamiltonian function in $\cF$.
We do not claim that $G_\cF$ has a manifold structure. For each pair $(g,f) \in G_\cF \times \cF$ we consider the smooth one-parameter subgroup $\{ \exp(t X_f) \circ g \mid t \in \R \}$. This allows us to define the derivative of a field $q : G_\cF \rightarrow Q$ with respect to $f \in \cF$:
\begin{equation}
(\partial_f q)(g) = \frac{\d}{\d t} q( \exp(t X_f) \circ g) \big|_{t=0} \,,     
\end{equation}
where we assume that $q$ is smooth in the sense that all such derivatives exist. We check that this definition leads to the usual relation between commutators and Poisson brackets:

\begin{prop}\label{prop-commutator}
    There holds $[\partial_{f_1}, \partial_{f_2}] = \partial_{\{f_1,f_2\}}$.
\end{prop}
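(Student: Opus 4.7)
The plan is to reduce the identity to the standard fact that the group commutator of two Hamiltonian flows equals the flow of their Poisson-bracket vector field, to second order in the flow parameters.

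First, I would invoke the elementary group identity in $G_{\cF}$
\[
\exp(tX_{f_2})\exp(sX_{f_1}) \;=\; c(s,t)\,\exp(sX_{f_1})\exp(tX_{f_2}),
\]
where $c(s,t) := \exp(tX_{f_2})\exp(sX_{f_1})\exp(-tX_{f_2})\exp(-sX_{f_1})$ is the group commutator of the two Hamiltonian flows. A direct expansion on $T^*Q$ of the four flows acting on an arbitrary smooth test function, collecting powers of $(s,t)$, yields $c(s,t) = \exp\bigl(X_{\mu(s,t)}\bigr)$ as a smooth family of diffeomorphisms of $T^*Q$, where $\mu(s,t) = st\,\{f_1,f_2\} + \rho(s,t) \in \cF$ with $\rho$ vanishing to total order three in $(s,t)$. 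This repackages the classical identity $[X_{f_1},X_{f_2}] = X_{\{f_1,f_2\}}$ for Hamiltonian vector fields on $T^*Q$ as the leading term of the group commutator.

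Next, substitute inside $q$ and Taylor-expand. Writing $b(s,t) := \exp(sX_{f_1})\exp(tX_{f_2})$ and using both the definition $\partial_h q(z) = \tfrac{d}{du}\big|_{u=0} q(\exp(u X_h) z)$ and the linearity $\partial_{a h_1 + b h_2} = a\,\partial_{h_1} + b\,\partial_{h_2}$, one obtains
\[
q\bigl(\exp(tX_{f_2})\exp(sX_{f_1})g\bigr) \;=\; q(b(s,t)g) + st\,(\partial_{\{f_1,f_2\}}q)(b(s,t)g) + R(s,t),
\]
with $R$ vanishing to third order in $(s,t)$.

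Finally, apply $\partial^2/\partial s\,\partial t$ at $(s,t) = (0,0)$. Unpacking the nested derivatives from the definition of $\partial_f$ identifies the left-hand side with $(\partial_{f_1}\partial_{f_2}q)(g)$; Schwarz's theorem applied to the smooth map $(s,t)\mapsto q(b(s,t)g)$ identifies the first term on the right with $(\partial_{f_2}\partial_{f_1}q)(g)$; the relation $\tfrac{\partial^2}{\partial s\,\partial t}(stF)\big|_{0,0} = F(0,0)$ makes the middle term yield $(\partial_{\{f_1,f_2\}}q)(g)$; and $R$ drops out. Rearranging gives $[\partial_{f_1},\partial_{f_2}]q(g) = \partial_{\{f_1,f_2\}}q(g)$. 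The main obstacle is that $G_{\cF}$ lacks a manifold structure, precluding a direct appeal to Baker--Campbell--Hausdorff on the group; this is circumvented by carrying out all expansions on $T^*Q$, where the Hamiltonian flows are smooth in $(s,t)$ in the usual sense, and by invoking the postulated smoothness of $q$ along one-parameter subgroups to interpret the iterated and mixed derivatives that appear.
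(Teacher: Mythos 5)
Your proposal is correct and follows essentially the same route as the paper: both arguments reduce the claim to the fact that the group commutator of the two Hamiltonian flows equals $\exp\bigl(st\,X_{\{f_1,f_2\}}\bigr)$ up to higher order, and then extract $\partial_{\{f_1,f_2\}}q$ from the mixed $\partial^2/\partial s\,\partial t$ derivative at the origin. The only difference is presentational (you write the commutator as a left factor $c(s,t)$ and Taylor-expand inside $q$, while the paper inserts $\Phi_2^{-s}\circ\Phi_1^{-t}$ and works at the shifted base point $y$), so no further comparison is needed.
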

\begin{proof}
    For $i=1,2$ we denote by $\Phi_i^t = \exp(t X_{f_i})$ the flow of $X_{f_i}$ over time $t$.
    For any $q : G_\cF \rightarrow Q$ and $g \in G_\cF$ we have
    \begin{align*}
[ \partial_{f_1}, \partial_{f_2} ]q(g)
        &= \frac{\d}{\d s} \frac{\d}{\d t} \left( q \left( \Phi_2^s \circ \Phi_1^t \circ g \right) - q \left( \Phi_1^t \circ \Phi_2^s \circ g \right) \right) \Big|_{s=t=0} \\
        &= \frac{\d}{\d s} \frac{\d}{\d t} \left( q \left( \Phi_2^s \circ \Phi_1^t \circ \Phi_2^{-s} \circ \Phi_1^{-t} \circ y \right) - q(y) \right) \Big|_{s=t=0} \,,
    \end{align*}
    where $y = \Phi_1^t \circ \Phi_2^s \circ g$. Denote by $\Phi_{\{f_1,f_2\}}$ the flow of the commutator of $X_{f_1}$ and $X_{f_2}$. We have that
    \[ \Phi_2^s \circ \Phi_1^t \circ \Phi_2^{-s} \circ \Phi_1^{-t} = \Phi_{\{f_1,f_2\}}^{st} + O(s^2+t^2) \,. \]
    Hence we find
    \begin{align*}
        [ \partial_{f_1}, \partial_{f_2} ] q(g)
        &= \frac{\d}{\d s} \frac{\d}{\d t} \left( q \left( \Phi_{\{f_1,f_2\}}^{st} \circ y \right) - q(y) \right) \Big|_{s=t=0} \\
        &= \frac{\d}{\d s} \frac{\d}{\d t} \left( st \partial_{\{f_1,f_2\}} q(y) + O(s^2t^2) \right) \Big|_{s=t=0}
        = \partial_{\{f_1,f_2\}} q(g) \,. \qedhere
    \end{align*}
\end{proof}

Given a field $q : G_\cF \rightarrow Q$, we define a 1-tensor $\cL[q]$ by the condition that for all $f \in \cF$ there holds
\[ f \ip \cL[q] = q_1 \cdot \partial_f q - f(q,q_1) \,, \]
where $q_1 = \partial_{H_1} q$. We call $\cL[q]$ a 1-tensor, not a 1-form, because we do not assume a manifold structure on $G_\cF$. Nevertheless, we can consider a formal exterior derivative $\d \cL[q]$ defined by 
\[ f_2 \ip f_1 \ip \d \cL[q] = \partial_{f_1} (f_2 \ip \cL[q]) - \partial_{f_2} (f_1 \ip \cL[q]) - \{f_1,f_2\} \ip \cL[q] \,. \]
In this context we use $\delta \d \cL[q] = 0$ as definition of critical fields, instead of the variational principle of Definition \ref{def-varpcple}. The operator $\delta$ is a Gateaux derivative (in a direction to be specified) of tensors. It can be calculated coefficient-wise:
\[ f_2 \ip f_1 \ip (\delta \d \cL[q]) = \delta( f_2 \ip f_1 \ip \d \cL[q] ) \,. \]

The calculation of $\d \cL$ formally takes the same form as before in the proof of Proposition \ref{prop:double0}. Using Proposition \ref{prop-commutator}, we find
\[ f_2 \ip f_1 \ip \d \cL[q] = \left( \partial_{f_1} q_1 + \der{f_1}{q} \right) \left( \partial_{f_2} q - \der{f_2}{q_1} \right) - \left( \partial_{f_2} q_1 + \der{f_2}{q} \right) \left( \partial_{f_1} q - \der{f_1}{q_1} \right) \,. \]
In particular, $f_2 \ip f_1 \ip \d \cL$ has a double zero on solutions to the canonical Hamiltonian equations for $f_1$ and $f_2$, hence
$\delta \d \cL = 0$ is equivalent to the system of all Hamiltonian equations for Hamiltonian functions in $\cF$.

Since the multi-time Euler-Lagrange equations \eqref{EL1-corner}--\eqref{EL1-line} are equivalent to the condition $\delta \d \cL = 0$, they will also be equivalent to the system of Hamiltonian equations. Indeed, we find
\begin{align*}
    &0 = \der{}{q_1}(f \ip \cL) = \partial_f q - \der{f}{q_1} \qquad (f \neq H_1) \,, \\
    &0 = \der{}{q}(f \ip \cL) - \partial_f \der{}{(\partial_f q)}(f \ip \cL) = - \partial_f q_1 - \der{f}{q} \,,
\end{align*}
while
\[
    \der{}{(\partial_{f_1} q)}(f_1 \ip \cL) = \der{}{(\partial_{f_2} q)}(f_2 \ip \cL)
\]
is trivially satisfied.

The above construction can be thought of as a generalisation of the procedure described in \cite{daboul1993hydrogen} to obtain an infinite-dimensional Lie algebra for the Kepler problem. As we will see below, this particular problem actually admits a finite dimensional Lie group, because the Poisson relations can be linearised by rescaling the Runge-Lenz vector.

\subsection{Exterior derivative and Poisson bracket} 

We know from Lemma \ref{lemma-equivalence} that every critical field $q$ satisfies $\delta \d \cL = 0$.
In the case of commuting flows, it has been shown that $\d \cL = 0$ implies that the corresponding Hamiltonian functions are in involution \cite{suris2013variational,vermeeren2021hamiltonian}. In this section we generalise this property to the Lie group setting describing non-commuting flows.

For basis elements $\xi_i,\xi_j \in \mathfrak g$ (or $\xi_i,\xi_j \in \h$) we have
\begin{align*}
\partial_{\xi_j} \ip \partial_{\xi_i} \ip \d \cL[q]
&= \partial_{\xi_i} L_j(q, q_1, q_j) - \partial_{\xi_j} L_i(q,q_1, q_i) - \partial_{[\xi_i,\xi_j]} \ip \cL[q] \\
&= \partial_{\xi_i} L_j(q,q_1,q_j) - \partial_{\xi_j} L_i(q,q_1,q_i) - \sum_k \liec_{ij}^k L_k(q,q_1,q_k) \,.
\end{align*}
A basis-independent form of this expression is obtained in terms of $L_\xi =  \partial_{\xi} \ip \cL[q]$ :
\[ \partial_{\nu} \ip \partial_{\xi} \ip \d \cL[q] = \partial_{\xi} L_{\nu}(q,q_1, \partial_\nu q) - 
\partial_{\nu} L_{\xi}(q,q_1,\partial_\xi q) - L_{[\xi,\nu]}(q,q_1,\partial_{[\xi,\nu]} q) \,. \]
Hence if $\d \cL = 0$ (as is typically the case on solutions) then 
\begin{equation}
\label{L-poisson}
L_{[\xi,\nu]}(q,q_1,\partial_{[\xi,\nu]} q) = \partial_{\xi} L_{\nu}(q,q_1, \partial_\nu q) - 
\partial_{\nu} L_{\xi}(q,q_1,\partial_\xi q) \,.
\end{equation}
As we will argue below, Equation \eqref{L-poisson} is the Lagrangian form of the fundamental relation between the Poisson bracket of two Hamiltonian functions and the commutator of the corresponding vector fields.

Regardless of how a Lagrangian 1-form $\cL$ on a Lie group was constructed, we can find Hamiltonian functions corresponding to $\cL$ if $L_1$ is of a suitable form. To do this, we generalise the construction from \cite{suris2013variational,vermeeren2021hamiltonian} to our setting. Assume that $\cL$ is of the form
\begin{equation}\label{mechlagpluri}
\begin{split}
    &\partial_{\xi_1} \ip \cL = L_1(q,q_1) = \frac{1}{2} q_1^2 - V \,,\\
    &\partial_{\xi_i} \ip \cL = L_i(q,q_1,q_i) \,,
\end{split}
\end{equation}
and produces multi-time Euler-Lagrange equations of the form
\begin{align*}
    &q_{11} = f(q,q_1) \,, \\
    &q_i = w_{\xi_i}(q,q_1) \,.
\end{align*}
The Hamiltonian function associated to $L_\xi$, for $\xi = \sum_k \alpha_k \xi_k$, is
\[ H_\xi(q,p) = p w_\xi(q,p) - L_\xi(q,p,w_\xi(q,p)) \,, \]
where $w_\xi = \sum_k \alpha_k w_k$. Note that on solutions, we have $w_\xi(q,p) = \partial_\xi q$.
If the Lagrangian 1-form is constructed from variational symmetries as in Section \ref{sec-varsym}, the Hamiltonian can also be written as
\[ H_\xi(q,p) = p w_\xi(q,p) - \sum_k \alpha_k F_k(q,p) \,.\]

\begin{prop}\label{prop:ham=EL}
    The canonical Hamilton equations for $H_\xi$ are equivalent to the Euler-Lagrange equations, i.e.\@ to the lifted action of $w_\xi \partial_q$ on $(q,q_1)$, under the identification $p = q_1$.
\end{prop}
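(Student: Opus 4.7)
The plan is a direct Legendre-transform-style computation that converts each of the three families of multi-time Euler--Lagrange equations from Theorem \ref{thm-mtEL1} into one half of Hamilton's equations for $H_\xi$. I would treat $H_\xi(q,p) = p\,w_\xi(q,p) - L_\xi(q,p,w_\xi(q,p))$ as a function of independent variables $(q,p)$ and differentiate carefully, tracking the chain-rule contributions coming from the argument $q_\xi = w_\xi(q,p)$.

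First I would compute $\partial H_\xi/\partial p$. The chain rule gives
\[ \der{H_\xi}{p} = w_\xi + \left( p - \der{L_\xi}{q_\xi}\bigg|_{q_\xi = w_\xi} \right)\der{w_\xi}{p} - \der{L_\xi}{q_1}\bigg|_{q_\xi = w_\xi}. \]
The corner equation \eqref{EL1-corner} identifies $\partial L_\xi/\partial q_\xi$ with $\partial L_1/\partial q_1 = q_1 = p$, killing the middle term, and the alien equation \eqref{EL1-alien} with $j = \xi$ and $k = 1$ forces $\partial L_\xi/\partial q_1$ to vanish when $q_\xi = w_\xi$, so that $\partial H_\xi/\partial p = w_\xi$. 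Under the identification $p = q_1$ this is exactly the lifted relation $\partial_\xi q = w_\xi(q,q_1)$. An analogous differentiation in $q$ uses \eqref{EL1-corner} once more to cancel $p\,\partial_q w_\xi$ against the chain-rule term $(\partial_{q_\xi} L_\xi|_{q_\xi = w_\xi})\,\partial_q w_\xi$, leaving $\partial_q H_\xi = -\partial_q L_\xi|_{q_\xi = w_\xi}$. The line equation \eqref{EL1-line} then rewrites $\partial_q L_\xi$ as $\D_\xi(\partial_{q_\xi} L_\xi)$, which by \eqref{EL1-corner} equals $\D_\xi p$; since $\xi_1$ is central in $\mathfrak g$ under the hypotheses inherited from Section \ref{sec-varsym} (the mechanical $L_1$ is autonomous, so the $\xi_1$-flow commutes with every element of $G_0$), this in turn equals $\partial_\xi p$, yielding the second Hamilton equation. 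Both steps are reversible, giving equivalence in both directions.

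The main obstacle I anticipate is purely bookkeeping: one must keep a sharp distinction between partial derivatives of $L_\xi$ on the jet bundle, where $q$, $q_1$, $q_\xi$ are independent coordinates, and their evaluation on the constraint surface $q_\xi = w_\xi(q,q_1)$. The multi-time Euler--Lagrange equations \eqref{EL1-corner} and \eqref{EL1-alien} are applied on the jet bundle, whereas \eqref{EL1-line} is invoked after the restriction, so the chain-rule derivatives of $w_\xi$ have to be tracked consistently. A secondary subtlety, worth flagging explicitly, is the identification of $\D_\xi q_1$ with $\partial_\xi q_1$: this step relies on $[\xi_1, \xi] = 0$, and without this centrality hypothesis the argument would have to be refined by explicit commutator terms coming from the structure constants $\liec_{ij}^k$.
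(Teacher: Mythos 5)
Your computation coincides with the paper's own proof: the same Legendre-transform differentiation of $H_\xi(q,p) = p\,w_\xi(q,p) - L_\xi(q,p,w_\xi(q,p))$, the same grouping of the chain-rule contributions into the factor $\bigl(p - \der{L_\xi}{q_\xi}\bigr)$, and the same use of \eqref{EL1-corner}, \eqref{EL1-alien} and \eqref{EL1-line} to reduce the two derivatives to $\der{H_\xi}{p} = w_\xi$ and $\der{H_\xi}{q} = -\partial_\xi p$ on solutions. The argument is correct and essentially identical to the one in the text.

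One correction: the ``secondary subtlety'' you flag at the end is not actually present, and the centrality hypothesis you invoke is not needed. Under the identification $p = q_1 = \partial_{\xi_1} q$, the left-hand side of the second Hamilton equation is $\partial_\xi p = \partial_\xi\bigl(\partial_{\xi_1} q\bigr) = q_{1\xi}$, with the derivatives applied in exactly that order; and $\D_\xi q_1$, evaluated on a prolonged field, is by definition also $\partial_\xi\bigl(\partial_{\xi_1} q\bigr)$. No interchange of $\partial_\xi$ and $\partial_{\xi_1}$ is ever performed, so no commutator terms or structure constants $\liec_{ij}^k$ can enter, and the identity $\D_\xi p = \partial_\xi p$ holds whether or not $[\xi_1,\xi]$ vanishes. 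Asserting that the argument ``would have to be refined by explicit commutator terms'' without $[\xi_1,\xi]=0$ therefore misstates its scope: the step is purely definitional, which is fortunate, since the entire point of the framework is to accommodate non-commuting flows.
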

Note that this can be thought of as an inverse statement to Proposition \ref{prop:EL=Ham}. Accordingly, the proof will be quite similar, but we find it instructive to include it. 
\begin{proof}[Proof of Proposition \ref{prop:ham=EL}]
If $\xi = \xi_1$ this is nothing but the Legendre transform. Below we prove the result for $\xi$ an arbitrary linear combination of $\xi_2,\ldots,\xi_N$. Then the general claim follows from linearity. We have
\begin{align*}
\der{H_\xi}{p} &= w_\xi +  p \der{w_\xi}{p} - \der{L_\xi}{q_1} - \der{L_\xi}{q_\xi} \der{w_\xi}{p} \\
&= w_\xi + \left( p - \der{L_\xi}{q_\xi} \right) \der{w_\xi}{p} - \der{L_\xi}{q_1}
\end{align*}
and 
\begin{align*}
\der{H_\xi}{q} &= p \der{w_\xi}{q} - \der{L_\xi}{q} - \der{L_\xi}{q_\xi} \der{w_\xi}{q} \\
&= \left( p - \der{L_\xi}{q_\xi} \right) \der{w_\xi}{q} - \der{L_\xi}{q} \,.
\end{align*}
On solutions to the multi-time Euler-Lagrange equations this gives
\[ \der{H_\xi}{p} = w_\xi \qquad \text{and} \qquad \der{H_\xi}{q} = - \partial_\xi p \,. \qedhere \]
\end{proof}

Written in terms of the Hamiltonians, Equation \eqref{L-poisson} becomes
\[
p \partial_{[\xi,\nu]} q - H_{[\xi,\nu]}(q,p) = \partial_{\xi} \left( p \partial_{\nu} q - H_\nu(q,p) \right) - \partial_{\nu} \left( p \partial_{\xi} q - H_\xi(q,p) \right) \,,
\]
which simplifies to
\begin{align*}
- H_{[\xi,\nu]}(q,p) &= (\partial_{\xi} p) (\partial_{\nu} q) - \partial_{\xi} H_\nu(q,p) - (\partial_{\nu} p) (\partial_{\xi} q) + \partial_{\nu} H_\xi(q,p) \\
&= - \der{H_\xi}{q} \der{H_\nu}{p} - \{ H_\xi, H_\nu \} + \der{H_\nu}{q}\der{H_\xi}{p} + \{ H_\nu, H_\xi\} \\
&= - \{ H_\xi, H_\nu \} \,.
\end{align*}
Hence Equation \eqref{L-poisson} is a Lagrangian version of the Poisson relations between the integrals of the system \eqref{mechlag}. In summary, we have:
\begin{thm}
Let $\cL$ be of the form \eqref{mechlagpluri} and $H_\xi$ the corresponding Hamiltonian functions. The following are equivalent:
\begin{enumerate}[(i)]
    \item $\d \cL = 0$ on solutions,
    \item Equation \eqref{L-poisson} holds: $L_{[\xi,\nu]} = \partial_{\xi} L_{\nu} - 
\partial_{\nu} L_{\xi}$,
    \item $H_{[\xi,\nu]} = \{ H_\xi, H_\nu \}$.
\end{enumerate}
\end{thm}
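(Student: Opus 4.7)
The plan is to chain the three statements together by substitution, leveraging computations already set out in the excerpt. I would not attempt to prove all three by separate arguments; rather, (i)~$\Leftrightarrow$~(ii) is essentially a reformulation, and (ii)~$\Leftrightarrow$~(iii) is a Legendre-transform calculation.

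For (i)~$\Leftrightarrow$~(ii): I would start from the basis-free expression for the exterior derivative that was computed just above the theorem,
\[
\partial_{\nu} \ip \partial_{\xi} \ip \d \cL[q] = \partial_{\xi} L_{\nu} - \partial_{\nu} L_{\xi} - L_{[\xi,\nu]},
\]
valid for all $\xi,\nu \in \mathfrak g$. A 2-form on $G$ vanishes iff all its pairings with pairs of left-invariant vector fields vanish, and by bilinearity in $\xi,\nu$ it suffices to test on a basis. Hence the vanishing of $\d \cL[q]$ on a critical field is literally Equation~\eqref{L-poisson}, giving (i)~$\Leftrightarrow$~(ii).

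For (ii)~$\Leftrightarrow$~(iii): I would substitute $L_\xi = p\,\partial_\xi q - H_\xi(q,p)$ (with $p = q_1$) into both sides of Equation~\eqref{L-poisson}. The left-hand side becomes $p\,\partial_{[\xi,\nu]} q - H_{[\xi,\nu]}$, while expanding $\partial_\xi L_\nu - \partial_\nu L_\xi$ by the Leibniz and chain rules yields
\[
(\partial_\xi p)(\partial_\nu q) - (\partial_\nu p)(\partial_\xi q) + p\bigl(\partial_\xi \partial_\nu q - \partial_\nu \partial_\xi q\bigr) - \partial_\xi H_\nu + \partial_\nu H_\xi.
\]
The term $p\bigl(\partial_\xi \partial_\nu q - \partial_\nu \partial_\xi q\bigr) = p\,\partial_{[\xi,\nu]}q$ cancels against the corresponding term on the left. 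I would then invoke Proposition~\ref{prop:ham=EL} (which tells us that on solutions the canonical Hamilton equations hold: $\partial_\xi q = \der{H_\xi}{p}$ and $\partial_\xi p = -\der{H_\xi}{q}$) to rewrite the cross terms. Using these together with the chain-rule expansions $\partial_\xi H_\nu = \der{H_\nu}{q}\partial_\xi q + \der{H_\nu}{p}\partial_\xi p$, one sees that the combination
\[
(\partial_\xi p)(\partial_\nu q) - (\partial_\nu p)(\partial_\xi q) - \partial_\xi H_\nu + \partial_\nu H_\xi
\]
collapses to $-\{H_\xi,H_\nu\}$. Matching the two sides gives exactly $H_{[\xi,\nu]} = \{H_\xi,H_\nu\}$, so (ii) is equivalent to (iii).

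The only delicate point (not really an obstacle, but where care is required) is the universal quantifier ``on solutions'': both (ii) and (iii) are meant as identities that hold after restriction to the multi-time Euler-Lagrange locus, so Proposition~\ref{prop:ham=EL} must be invoked to convert $\partial_\xi q$ into $w_\xi(q,p) = \der{H_\xi}{p}$ and $\partial_\xi p$ into $-\der{H_\xi}{q}$ wherever they appear. Once these substitutions are made consistently, the equivalences follow from pure algebra, and no further input from the Lie-group structure beyond the definition of $\d$ in terms of $[\cdot,\cdot]$ is needed.
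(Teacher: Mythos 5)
Your proposal is correct and follows essentially the same route as the paper: the equivalence (i)$\Leftrightarrow$(ii) is read off from the basis-independent formula $\partial_{\nu} \ip \partial_{\xi} \ip \d \cL = \partial_{\xi} L_{\nu} - \partial_{\nu} L_{\xi} - L_{[\xi,\nu]}$, and (ii)$\Leftrightarrow$(iii) follows by substituting $L_\xi = p\,\partial_\xi q - H_\xi$, cancelling the $p\,\partial_{[\xi,\nu]}q$ terms, and using the Hamilton equations from Proposition \ref{prop:ham=EL} together with the canonical Poisson bracket. This is precisely the computation the paper carries out immediately before the theorem statement, of which the theorem is a summary.
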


\subsection{Example: Kepler problem}

The Kepler problem, governed by the Hamiltonian 
\[ H(q,p) = \frac{1}{2} p^2 - \frac{1}{|q|}\]
on $T^* \R^3$, is superintegrable. Rather than the obvious $SO(3)$ rotational symmetry, it actually possesses a symmetry group isomorphic to $SO(4)$. The conserved quantities are the angular momentum
$\ell = q \times p$ and the Runge-Lenz vector
\[ A = p \times \ell - \frac{q}{|q|} \,, \]
see for example \cite{goldstein1980classical,guillemin1990variations}

A Lax pair for the Kepler problem was proposed in \cite{antonowicz1992keplerlax} as a restriction of 
KdV flows (cf.\@ also \cite{torrielli2016classint} for an explicit form of this Lax representation). 
However, this Lax pair does not seem suitable for deriving the integrals of the Kepler problem. A simpler, 
but at the same time more powerful Lax representation is given as follows. Define matrices $\boldsymbol{L}_\alpha$ and $\boldsymbol{M}$ by 
\begin{align}\label{KeplerLax} 
    \boldsymbol{L}_\alpha=\left(\begin{array}{cc} q^T \alpha p & p^T \alpha p \\ 
    -q^T \alpha q & -p^T \alpha q \end{array}\right) \,, \qquad 
    \boldsymbol{M}= \left(\begin{array}{cc} 0 & |q|^{-3} \\ 
    -1  & 0 \end{array}\right) \,, 
\end{align}
where $\alpha$ is an arbitrary $3\times3$ matrix, which plays the role of the spectral parameter. Note that $\boldsymbol{M}$ does not depend on this matrix spectral parameter. An elementary calculation shows that:
\begin{prop}
The equations of motion of the Kepler problem, 
\[
\dot{q}=p\,, \qquad \dot{p}=-\frac{1}{|q|^3} \,, 
\]
follow from the Lax equation $\dot{\boldsymbol{L}}_\alpha=[ \boldsymbol{M},\boldsymbol{L}_\alpha]$. 
\end{prop}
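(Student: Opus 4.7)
The plan is to verify the claim by direct computation: compute both sides of the Lax equation $\dot{\boldsymbol L}_\alpha = [\boldsymbol M, \boldsymbol L_\alpha]$ entry by entry, and then exploit the fact that $\alpha$ is an arbitrary $3 \times 3$ matrix to extract the equations of motion. Since the statement is an ``if'' (the Lax equation implies the equations of motion), no integrability theory is needed; this is a matching exercise.

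First, I would compute the right-hand side $[\boldsymbol M, \boldsymbol L_\alpha]$ by explicit matrix multiplication, using that $\boldsymbol M$ has a very simple form. A short calculation yields
\[
[\boldsymbol M, \boldsymbol L_\alpha] = \begin{pmatrix} p^T \alpha p - |q|^{-3} q^T \alpha q & -|q|^{-3}(q^T \alpha p + p^T \alpha q) \\ -(q^T \alpha p + p^T \alpha q) & |q|^{-3} q^T \alpha q - p^T \alpha p \end{pmatrix}.
\]
Simultaneously, $\dot{\boldsymbol L}_\alpha$ is obtained by differentiating each entry of $\boldsymbol L_\alpha$, since $\alpha$ is constant. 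Equating corresponding entries then gives four scalar relations, each of which must hold for every $\alpha$.

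Next I would exploit the arbitrariness of $\alpha$. Using the identity $a^T \alpha b = \operatorname{tr}(\alpha\, b a^T)$, the $(2,1)$ entry gives $\operatorname{tr}\bigl(\alpha(q\dot q^T + \dot q q^T - q p^T - p q^T)\bigr) = 0$ for every $\alpha$, which forces the symmetric rank-two matrix $q(\dot q - p)^T + (\dot q - p) q^T$ to vanish. A short linear-algebra argument (testing against $w = \dot q - p$) shows that this implies $\dot q = p$ whenever $q \neq 0$. With $\dot q = p$ established, the $(1,1)$ entry simplifies to $\operatorname{tr}\bigl(\alpha(\dot p + q/|q|^3) q^T\bigr) = 0$ for every $\alpha$, hence $(\dot p + q/|q|^3) q^T = 0$, which yields $\dot p = -q/|q|^3$. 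Finally I would check that the remaining two entries, $(1,2)$ and $(2,2)$, reduce to identities under $\dot q = p$ and $\dot p = -q/|q|^3$, so that nothing beyond the Kepler equations is imposed.

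There is no real obstacle: the proof is a short bilinear-form computation, and the only subtlety is the translation step where one converts ``bilinear form in $\alpha$ vanishes for all $\alpha$'' into a matrix equation via the trace pairing. The total work is well under a page, and the role of the ``matrix spectral parameter'' $\alpha$ is simply to package three scalar component equations into one matrix identity.
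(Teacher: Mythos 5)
Your proposal is correct and is exactly the ``elementary calculation'' the paper alludes to (the paper gives no written proof beyond that phrase): compute $[\boldsymbol{M},\boldsymbol{L}_\alpha]$ and $\dot{\boldsymbol{L}}_\alpha$ entrywise and use the arbitrariness of $\alpha$ via the trace pairing to extract $\dot q = p$ and $\dot p = -q/|q|^3$ (note your corrected right-hand side is the intended one; the statement as printed has a typo). Your handling of the $(2,1)$ entry, including the caveat $q\neq 0$, and the check that the remaining entries become identities are both sound.
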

It follows from the Lax equation that any expression of the form 
\[ \tr(\boldsymbol{L}_\alpha \boldsymbol{L}_\beta\boldsymbol{L}_\gamma \cdots )\] 
is an integral of the Kepler problem, for arbitrary choices of 3$\times$3 matrices $\alpha, \beta, \gamma, \cdots$. In particular, $\textrm{tr}(\boldsymbol{L}_\alpha)$, with $\alpha$ an arbitrary skew-symmetric matrix, yields the angular momentum vector $\ell$ as integral. Curiously, the Hamiltonian $H$ follows from the quantity $\tr(\boldsymbol{M}\boldsymbol{L}_\alpha)$ with $\alpha=\boldsymbol{1}$ taken to be the identity matrix. In fact, this quantity is by itself not an integral of the motion, but instead we have
\[ \frac{\d}{\d t}\tr(\boldsymbol{M}\boldsymbol{L}_\alpha)=\frac{\d}{\d t}\left(-\frac{3}{|q|}  \right) \,,     \] 
where the right-hand side stems from the derivative of the matrix $\boldsymbol{M}$. Thus, we can deduce that 
\[ \textrm{tr}(\boldsymbol{M}\boldsymbol{L}_\alpha)+\frac{3}{|q|}=-2H \]
is an integral, which is the Hamiltonian up to a factor. It remains an open problem if the Runge-Lenz vector $A$ arises from the Lax representation \eqref{KeplerLax}.

Using the same construction as for Hamiltonians in involution \cite{vermeeren2021hamiltonian}, we can construct a Lagrangian 1-form from these conserved quantities:
\begin{align*}
    & L_1 = \frac{1}{2} |q_1|^2 + \frac{1}{|q|} \,, \\
    & L_2 = q_1 \cdot q_2 - \ell_x = q_1 \cdot q_2 - (q \times q_1) \cdot \hat{x} \,, \\
    & L_3 = q_1 \cdot q_3 - \ell_y = q_1 \cdot q_3 - (q \times q_1) \cdot \hat{y} \,, \\
    & L_4 = q_1 \cdot q_4 - \ell_z = q_1 \cdot q_4 - (q \times q_1) \cdot \hat{z} \,, \\
    & L_5 = q_1 \cdot q_5 - A_x = q_1 \cdot q_5 - |q_1|^2 (q \cdot \hat{x}) + (q_1 \cdot \hat{x})(q_1 \cdot q) + \frac{q \cdot \hat{x}}{|q|} \,, \\
    & L_6 = q_1 \cdot q_6 - A_y = q_1 \cdot q_6 - |q_1|^2 (q \cdot \hat{y}) + (q_1 \cdot \hat{y})(q_1 \cdot q) + \frac{q \cdot \hat{y}}{|q|} \,, \\
    & L_7 = q_1 \cdot q_7 - A_z = q_1 \cdot q_7 - |q_1|^2 (q \cdot \hat{z}) + (q_1 \cdot \hat{z})(q_1 \cdot q) + \frac{q \cdot \hat{z}}{|q|} \,,
\end{align*}
where $A = (A_x,A_y,A_z)$, $\ell = (\ell_x,\ell_y,\ell_z)$ and $\{\hat{x},\hat{y},\hat{z}\}$ is the Cartesian orthonormal basis.
The multi-time Euler-Lagrange equations are given by $Q_i=0$ for $1\leq i\leq 7$, where
\begin{align*}
    & Q_1=q_{11} + \frac{q}{|q|^3} \,, \\
    & Q_2=q_2 - \hat{x} \times q \,, \\
    & Q_3=q_3 - \hat{y} \times q \,, \\
    & Q_4=q_4 - \hat{z} \times q \,, \\
    & Q_5=q_5 - 2 (q \cdot \hat{x}) q_1 + (q_1 \cdot q) \hat{x} + (q_1 \cdot \hat{x}) q \,, \\
    & Q_6=q_6 - 2 (q \cdot \hat{y}) q_1 + (q_1 \cdot q) \hat{y} + (q_1 \cdot \hat{y}) q \,, \\
    & Q_7=q_7 - 2 (q \cdot \hat{z}) q_1 + (q_1 \cdot q) \hat{z} + (q_1 \cdot \hat{z}) q \,.
\end{align*}
By cross-differentiating we can verify that $\partial_{\xi_2},\ldots,\partial_{\xi_7}$ (and hence $\xi_2,\ldots,\xi_7$) satisfy the following Lie algebra relations:
\[ \begin{tabular}{r|ccc:ccc}
    $[\xi_i,\xi_j]$ & $j=2$ & 3 & 4 & 5 & 6 & 7 \\\hline
    $i=2$ & 0 & $-\xi_4$ & $\xi_3$ & 0 & $-\xi_7$ & $\xi_6$ \\
    3 &  & 0 & $-\xi_2$ & $\xi_7$ & 0 & $-\xi_5$ \\
    4 &  &  & 0 & $-\xi_6$ & $\xi_5$ & 0 \\\hdashline
    5 &  &  &  &  0 & $2H \xi_4 + 2 \ell_z \xi_1$ & $-2H \xi_3 - 2 \ell_y \xi_1$ \\
    6 &  &  &  &  &  0 & $2H \xi_2 + 2 \ell_x \xi_1$ \\
    7 &  &  &  &  &  & 0 \\
\end{tabular}\]
If we replace $\xi_5,\xi_6,\xi_7$ by 
\begin{align*}
    \nu_5 &= \frac{\xi_5}{\sqrt{-2H}} + \frac{A_x \xi_1}{\sqrt{-2H}} \,, \\
    \nu_6 &= \frac{\xi_6}{\sqrt{-2H}} + \frac{A_y \xi_1}{\sqrt{-2H}} \,, \\
    \nu_7 &= \frac{\xi_7}{\sqrt{-2H}} + \frac{A_z \xi_1}{\sqrt{-2H}} \,,
\end{align*}
we can recover from this table the standard Lie algebra relations of $\mathfrak{so}(4)$.
Hence it is natural to define the Lagrangian 1-form $\cL$ on $\R \times SO(4)$ by $\partial_{\xi_i} \ip \cL = L_i$.

Using the properties of the triple product, the coefficients $P_{ij} = \partial_{\xi_j} \ip \partial_{\xi_i} \ip \d \cL$
are obtained by an elementary calculation. We find for example
\begin{align*}
P_{23} &= \partial_{\xi_2} L_3 - \partial_{\xi_3} L_2 - [\partial_{\xi_2},\partial_{\xi_3}] \ip L \\
&= q_{12} (q_3 - \hat{y} \times q) + q_1 (q_{32} - \hat{y} \times q_2) \\
&\qquad - q_{13} (q_2 - \hat{x} \times q) - q_1 (q_{23} - \hat{x} \times q_3) \\
&\qquad + q_1 (q_{23} - q_{32} - \hat{z} \times q) \\
&= (q_{12} - \hat{x}\times q_1) (q_3 - \hat{y} \times q) - (q_{13} - \hat{y} \times q_1) (q_2 - \hat{x} \times q) \,,
\end{align*}
which is a double zero on solutions. In general we find that for $2\leq i,j\leq 7$,
\begin{align*}
P_{1i} &= \partial_{\xi_1} L_i - \partial_{\xi_i} L_1 - [\partial_{\xi_1},\partial_{\xi_i}] \ip L \\
&= Q_1Q_i
\end{align*}
and 
\begin{align*}
P_{ij} &= \partial_{\xi_i} L_j - \partial_{\xi_j} L_i - [\partial_{\xi_i},\partial_{\xi_j}] \ip L \\
&= Q_j\partial_{\xi_1}Q_i-Q_i\partial_{\xi_1}Q_j \,,
\end{align*}
so all $P_{ij}$ have a double zero on solutions.

On the Hamiltonian side one can check that indeed $H$ is in involution with each of the components of $\ell$ and $A$. In addition, we have the following relations
\begin{align*}
    & \{ \ell_i, \ell_j \} = -\epsilon_{ijk} \ell_k \,, \\
    & \{ A_i, \ell_j \} = -\epsilon_{ijk} A_k \,, \\
    & \{ A_i, A_j \} =  2 \epsilon_{ijk} H \ell_k \,, \\
    & \{ H, \cdot \} = 0 \,,
\end{align*}
where $\epsilon_{ijk}$ is the totally anti-symmetric tensor with $\epsilon_{ijk} = 1$  (see e.g.\@ \cite[Sec.\@ 9--7]{goldstein1980classical}, but beware that we use the opposite sign convention for the Poisson bracket). These Poisson brackets reflect the commutation relations between the corresponding vector fields, as found above. There are several possible choices of three independent integrals in involution, which make the Kepler problem into a Liouville integrable system, for example $(H_1, \ell_i, |\ell|^2)$, $(H_1, \ell_i, A_i)$, or $(H_1, \ell_i, |A|^2)$.\textbf{}

\subsection{Example: Calogero-Moser system}

The Calogero-Moser (CM) system is governed by the Hamiltonian 
\[ H(q,p) = \frac{1}{2} |p|^2 + \frac{1}{2} \sum_{\beta=1}^n \sum_{\alpha=1}^{\beta-1} \frac{1}{(q^\alpha - q^\beta)^2} \,, \]
see for example \cite{calogero1971solution, moser1975three, olshanetsky1981classical}.
Introducing the notation
\[ F(q) = \left( \sum_{\alpha\neq1} (q^1 - q^\alpha)^{-3}, \ldots,  \sum_{\alpha\neq n} (q^n - q^\alpha)^{-3} \right) \]
we can write this as
\[ H(q,p) = \frac{1}{2} |p|^2 + \frac{1}{2} F(q) \cdot q \,. \]
Noting that $\der{}{q}( F(q) \cdot q) = -2F(q)$ we find the equations of motion $\ddot{q} = F(q)$

The CM system possesses a sequence of conserved quantities $I_j$, containing $H$ as $I_2$, which are pairwise in involution. In addition, there exist conserved quantities $K_j$, which make the system superintegrable \cite{wojciechowski1983superintegrability}. These conserved quantities can be constructed from the system's Lax pair, consisting of the $n \times n$ matrices $\boldsymbol{L}$ and $\boldsymbol{M}$ with entries
 \begin{align*}
 \boldsymbol{L} &= \begin{pmatrix}
 p_1 & \frac{i}{q_1-q_2} & \ldots & \frac{i}{q_1-q_n} \\
 \frac{i}{q_2-q_1} & p_2 & \ldots & \frac{i}{q_2-q_n} \\
 \vdots & \vdots & \ddots & \vdots \\
 \frac{i}{q_n-q_1} & \frac{i}{q_n-q_2} & \ldots & p_n \\
 \end{pmatrix} \,,
 \\
 \boldsymbol{M} &= \begin{pmatrix}
 - \sum_{k\neq1} \frac{i}{(q_1-q_k)^2} & \frac{i}{(q_1-q_2)^2} & \ldots & \frac{i}{(q_1-q_n)^2} \\
 \frac{i}{(q_2-q_1)^2} & - \sum_{k\neq2} \frac{i}{(q_2-q_k)^2} & \ldots & \frac{i}{(q_2-q_n)^2} \\
 \vdots & \vdots & \ddots & \vdots \\
 \frac{i}{(q_n-q_1)^2} & \frac{i}{(q_n-q_2)^2} & \ldots & - \sum_{k\neq n} \frac{i}{(q_1-q_k)^2} \\
 \end{pmatrix} \,,
 \end{align*}
and the additional matrix $\boldsymbol{N} = \mathrm{diag}(q^1,\ldots,q^n)$. They are
\[ I_j = \frac{1}{j} \tr(\boldsymbol{L}^j)\]
and
\[ K_j = \tr(\boldsymbol{N} \boldsymbol{L}^{j-1}) \tr(\boldsymbol{L}) - \tr(\boldsymbol{L}^j) \tr(\boldsymbol{N}) \,. \]

Denote  $e = (1, \ldots, 1)$. As a minimal example to illustrate our framework, we consider only three non-commuting Hamiltonians,
\begin{align*}
    & I_2(q,p) = H(q,p) \,, \\
    & I_1(q,p) = e \cdot p \,, \\
    & K_2(q,p) =  (q \cdot p) (e \cdot p) - \left( |p|^2 + F(q)\cdot q \right) (e \cdot q) \,.
\end{align*}
We have $\{I_2,K_1\} = 0$, $\{I_2,K_2\} = 0$, and $\{I_1,K_2\} = 2n H - I_1^2$, hence we are dealing with nonlinear Poisson relations.

Let $\h$ be the space of functions of $I_1, I_2, K_2$. As in Section \ref{sec-nonlin-poisson} we define $\cL$ by
\[ f \ip \cL = q_1 \cdot \partial_f q - f \]
for $f \in \h$. In particular, setting $\partial_1 = \partial_{H}$, $\partial_2 = \partial_{I_1}$, and $\partial_3 = \partial_{K_2}$ (such that $\partial_1$ corresponds to the flow of the Calogero-Moser equation itself), we have
\begin{align*}
    &L_1 := H \ip \cL = \frac{1}{2} |q_1|^2 - \frac{1}{2} F(q) \cdot q \,, \\
    &L_2 := I_1 \ip \cL = q_1 \cdot (q_2 - e) \,, \\
    &L_3 := K_2 \ip \cL = q_1 \cdot q_3 - (q \cdot q_1) (e \cdot q_1) + \left( |q_1|^2 + F(q)\cdot q \right) (e \cdot q) \,.
\end{align*}
Note that these are just three components of the 1-tensor $\cL$ on the infinite-dimensional space $\tilde \g$. The multi-time Euler-Lagrange equations yield
\begin{align*}
    &q_{11} = F(q) \,, \\
    &q_2 = e \,, \\
    &q_3 = (q \cdot q_1) e + (e\cdot q_1) q - 2 (e \cdot q) q_1 \,.
\end{align*}
The coefficients of $\d \cL$ corresponding to the generators $\partial_1$, $\partial_2$, and $\partial_3$ are
\begin{align*}
    \partial_1 L_2 - \partial_2 L_1
    &= (q_2 - e) \cdot (q_{11} - F(q)) \,, \\
    \partial_1 L_3 - \partial_3 L_1
    &= (q_{11} - F(q)) \cdot (q_3 + 2 (e \cdot q) q_1 - (q \cdot q_1) e - (e\cdot q_1) q ) \,,
\end{align*}
and
\begin{align*}
    &\partial_2 L_3 - \partial_3 L_2 + \{I_1,K_2\} \ip \cL \\
    &= q_{12} \cdot (q_3 + 2 (e \cdot q) q_1 - (q \cdot q_1) e - (e\cdot q_1) q ) \\
    &\quad - (q_2 - e) \cdot \left(q_{13} + 2 q_{11} (e \cdot q) + q_1 (e \cdot q_1) - e |q_1|^2 - e (q \cdot q_{11}) - q(e \cdot q_{11}) \right) \\
    &\quad + (q_2 - e) \cdot \left( 2 (e \cdot q) (q_{11} - F(q)) - (q \cdot (q_{11} - F(q)) ) e - (e \cdot (q_{11} - F(q)) ) q \right) \\
    &\quad + q_1 (q_{32} - q_{23} + \partial_{\{I_1,K_2\}} q ) \,,
\end{align*}
where the last term vanishes because it equals $-q_1 ( [\partial_{I_1},\partial_{K_2}] - \partial_{\{I_1,K_2\}} ) q = 0$. Hence we see that all three coefficients attain a double zero on solutions to the multi-time Euler-Lagrange equations.

\begin{remark}
Looking only at $H$, $I_1$ and $K_2$, one may argue that we can linearise the Poisson relations by rescaling the Hamiltonians, in which case we would not need the extension discussed in Section \ref{sec-nonlin-poisson} and we could give this small system a multiform structure on a finite-dimensional Lie group. Indeed, taking the following particular combinations
\begin{align*}
    & H(q,p) = \frac{1}{2} |p|^2 + \frac{1}{2} F(q) \cdot q \,, \\
    & I(q,p) = \frac{1}{2} I_1(q,p)^2 = \frac{1}{2} ( e \cdot p )^2 \,, \\
    & K(q,p) = \frac{K_2(q,p)}{I_1} = q \cdot p - \left( |p|^2 + F(q)\cdot q \right) \frac{e \cdot q}{e \cdot p} \,,
\end{align*}
the Poisson brackets between pairs of these functions are given by
$ \{H,I\} = \{H,K\} = 0$ and $\{I,K\} = 2 I - 2n H$ so that the corresponding vector fields form a Lie algebra. However, this seems rather specific to the small set of functions $H$, $I_1$ and $K_2$. We have not been able to establish such a linearisation procedure in general for a large collection of conserved quantities $I_j$, $K_j$ for the CM model.
\end{remark}

\section{Lagrangian 2-forms on Lie groups}
\label{sec-2form}

So far we have been dealing with Lagrangian 1-forms, which describe ODEs. Now we turn our attention to field theory, which in the simplest case is described by Lagrangian 2-forms.

Consider a first order Lagrangian 2-form $\cL[v]$ on a Lie group $G$ defined by
\[ \partial_{\xi_j} \ip \partial_{\xi_i} \ip \cL[v] = L_{ij}(v, v_1, \ldots, v_n) \,, \]
where $v_i = \partial_{\xi_i} v$ are the derivatives of the field $v: G \to Q$ along a basis vector $\xi_i$ of the Lie algebra of $G$.
In this case the multi-time Euler-Lagrange equations will be different from their counterparts for commuting flows. We adopt a similar approach as in Theorem \ref{thm-mtEL1} to obtain the following.

\begin{thm}\label{thm-mtEL2}
    Let $\cL$ be a 2-form depending on the first jet of $v$. The variational principle of Definition \ref{def-varpcple} is equivalent to
    \begin{subequations}
	\begin{align}
	& \der{L_{jk}}{v_j} + \der{L_{ki}}{v_i}  = 0 \,, \label{EL2-corner} \\
	& \der{L_{jk}}{v_\ell} = 0 \qquad \text{ for } \ell \neq j,k \label{EL2-alien} \,, \\
	& \var{jk}{L_{jk}}{v} + \sum_\ell \liec_{jk}^\ell p_\ell = 0 \,,  \label{EL2-plane}
	\end{align}
    \end{subequations}
	where $p_j = \der{L_{ij}}{v_i}$, which is well-defined (i.e. independent of $i$) in view of Equation \eqref{EL2-corner}, and
	\[ \var{jk}{}{v} = \der{}{v} - \D_j \der{}{v_j} - \D_k \der{}{v_k} \,. \] 
\end{thm}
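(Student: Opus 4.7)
The proof plan is to mirror the strategy used for Theorem \ref{thm-mtEL1}, adapted to the 2-form setting. By Lemma \ref{lemma-equivalence}, it suffices to prove that $\delta \d \cL[v] = 0$ is equivalent to equations \eqref{EL2-corner}--\eqref{EL2-plane}. Work triple by triple: for each ordered triple $i<j<k$ of basis indices of $\g$, define
\[ P_{ijk} := \partial_{\xi_k} \ip \partial_{\xi_j} \ip \partial_{\xi_i} \ip \d \cL[v]. \]
Using the standard formula for the exterior derivative of a 2-form evaluated on three vector fields, together with the Lie bracket relations \eqref{strconsts}, obtain the explicit expression
\[ P_{ijk} = \D_i L_{jk} - \D_j L_{ik} + \D_k L_{ij} - \sum_\ell \liec_{ij}^\ell L_{\ell k} + \sum_\ell \liec_{ik}^\ell L_{\ell j} - \sum_\ell \liec_{jk}^\ell L_{\ell i}. \]

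Regard $P_{ijk}$ as a function on the free second jet bundle, write $\delta P_{ijk} = \sum_{\mathfrak{I}} A^{\mathfrak{I}} \delta v_{\mathfrak{I}}$, and then apply Lemma \ref{lemma-free2quotient} to pass to a quotiented expansion $\delta P_{ijk} = \sum_I B^I \delta v_I$ in which the variations are independent. Criticality then becomes the statement $B^I = 0$ for every multi-index $I$ and every triple $(i,j,k)$, and the remainder of the proof consists of identifying each $B^I$ with one of equations \eqref{EL2-corner}--\eqref{EL2-plane}.

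Read off the coefficients in order of decreasing derivative order. Since each $L_{ab}$ depends only on the first jet, second-order contributions to $A^{\mathfrak{I}}$ come only from the three total-derivative pieces $\D_i L_{jk}$, $-\D_j L_{ik}$, $\D_k L_{ij}$, and Proposition \ref{prop-commDandpar} shows that only strings ending in $i$, $j$, or $k$ contribute. For strings $m\ell$ with $\ell \in \{i,j,k\}$ and $m \notin \{i,j,k\}$, the coefficient $B^{m\ell}$ reduces to a single $\der{L_{\cdot\cdot}}{v_m}$, yielding the alien equations \eqref{EL2-alien}. For the mixed strings among $\{i,j,k\}$, Lemma \ref{lemma-free2quotient} combines two string contributions, and the antisymmetry $L_{ab}=-L_{ba}$ collapses the result into the three corner equations \eqref{EL2-corner}. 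Finally, after using \eqref{EL2-corner}--\eqref{EL2-alien} to simplify, the first- and zeroth-order coefficients $B^i, B^j, B^k, B^\ell, B^\emptyset$ produce the plane equations \eqref{EL2-plane} for the three pairs $\{i,j\}, \{i,k\}, \{j,k\}$.

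The main obstacle lies in this last step. In contrast to the 1-form case, \eqref{EL2-plane} contains an explicit structure-constant term $\sum_\ell \liec_{jk}^\ell p_\ell$ that must be assembled from \emph{two} distinct sources: the bracket contributions $-\sum_\ell \liec_{ab}^\ell L_{\ell c}$ already present in $P_{ijk}$, and the quotient correction $\sum_{m<n}\liec_{mn}^k \widetilde{A^{nm}}$ that Lemma \ref{lemma-free2quotient} introduces when passing from $\delta v_{\mathfrak{I}}$ to $\delta v_I$. Careful bookkeeping, making essential use of the well-definedness of $p_\ell = \der{L_{m\ell}}{v_m}$ (guaranteed by \eqref{EL2-corner}) and the antisymmetry of $L_{ab}$, is needed to check that these two contributions combine exactly into $\sum_\ell \liec_{jk}^\ell p_\ell$. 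Once the correspondence between the $B^I$ and \eqref{EL2-corner}--\eqref{EL2-plane} has been established, the converse direction is obtained by reading the same calculation in reverse: assuming all three families of equations, every $B^I$ vanishes, so $\delta P_{ijk}=0$ for every triple, giving $\delta \d \cL[v] = 0$, and criticality follows from Lemma \ref{lemma-equivalence}.
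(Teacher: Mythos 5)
Your proposal follows essentially the same route as the paper's own proof: it forms $P_{ijk} = \partial_{\xi_k} \ip \partial_{\xi_j} \ip \partial_{\xi_i} \ip \d \cL$, passes from the free to the quotiented jet expansion via Lemma \ref{lemma-free2quotient}, and identifies the coefficients $B^I$ with Equations \eqref{EL2-corner}--\eqref{EL2-plane}, correctly noting that the structure-constant term in \eqref{EL2-plane} arises from both the bracket terms in $P_{ijk}$ and the quotient correction. The only cosmetic omission is that your case enumeration for the second-order coefficients does not explicitly mention the diagonal strings $B^{ii}$, $B^{jj}$, $B^{kk}$, which also yield instances of \eqref{EL2-alien}; this is handled identically by your method and does not affect the argument.
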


\begin{proof}
	Fix $i < j < k$ and consider the function $P_{ijk}[v] = \partial_{\xi_k} \ip \partial_{\xi_j} \ip \partial_{\xi_i} \ip \d \cL[v]$ of the free jet bundle. Then 
	\begin{align*}
	P_{ijk} 
	&= \D_i (\partial_{\xi_k} \ip \partial_{\xi_j} \ip \cL) - \partial_{\xi_k} \ip \partial_{[\xi_i,\xi_j]} \ip \cL + \circlearrowleft_{ijk}\\
	&= \D_i L_{jk} + \sum_\ell \liec_{jk}^\ell L_{i \ell } + \circlearrowleft_{ijk} \,,
	\end{align*}
	where $\circlearrowleft_{ijk}$ denotes all terms obtained by cyclicly permuting $i,j,k$ in the previous terms.
	
	The vertical exterior derivative of $P_{ijk}$ (which gives the multi-time Euler-Lagrange equations using Lemma \ref{lemma-equivalence}) can be expanded into a sum over index-strings:
	\begin{equation*}\label{PinAj}\delta P_{ijk}= \sum_{\mathfrak{I}} A^\mathfrak{I} \delta v_\mathfrak{I}\end{equation*}
	with
	\begin{equation*}A^\mathfrak{I}=\der{P_{ijk}}{v_\mathfrak{I}}\,.\end{equation*}
	Using Lemma \ref{lemma-free2quotient} we can project to the quotiented jet bundle and write $\delta P_{ijk}$ as a sum over independent $\delta v_I$, where $I$ ranges over the set of multi-indices:
	\begin{equation*}\delta P_{ijk} = \sum_{I} B^I\delta v_I \,, \end{equation*}
	where the $B^I$ are given by Equation \eqref{A-tilde}. This allows us to find the multi-time Euler-Lagrange equations in terms of the $L_{ij}$ as follows.
	
	We have
	\begin{equation*}
	B^{ii}=\widetilde{ \der{P_{ijk}}{v_{ii}} } = \der{L_{jk}}{v_{i}} \,.
	\end{equation*}
	Setting this equal to zero, we obtain Equation \eqref{EL2-alien}. Similarly, the equations given by $B^{jj}$ and $B^{kk}$ give the same equation with the indices permuted.
	
	For $\ell \not\in \{i,j,k\}$, the coefficients $B^{i \ell}$ (in the case where $\ell>i$) and $B^{\ell i}$ (in the case where $\ell<i$) are given by 
	\begin{equation*}
	\widetilde{ \der{P_{ijk}}{v_{\ell i}} } = \der{L_{jk}}{v_{\ell}} \,.
	\end{equation*}
	Setting this to zero is again equivalent to Equation \eqref{EL2-alien}. 
	
	We next consider
	\begin{equation*}
	B^{ij} = \widetilde{ \der{P_{ijk}}{v_{ij}} } + \widetilde{ \der{P_{ijk}}{v_{ji}} } = \der{L_{ki}}{v_{i}}+\der{L_{jk}}{v_{j}} \,,
	\end{equation*}
	which leads to Equation \eqref{EL2-corner}, as do $B^{jk} = 0$ and $B^{ik} = 0$. Modulo equation \eqref{EL2-corner} we can define $p_j=\der{L_{ij}}{v_{i}}$. 
	
	Next, we find
	\begin{equation*}
	\begin{split}
	B^i &= \widetilde{ \der{P_{ijk}}{v_{i}} } + \sum_{\ell<m}\liec_{\ell m}^i \widetilde{ \der{P_{ijk}}{v_{m \ell}} } \\
	&= \der{L_{jk}}{v}+\D_i\der{L_{jk}}{v_{i}}+\D_j\der{L_{ki}}{v_{i}}+\D_k\der{L_{ij}}{v_{i}}
	+\sum_\ell \left(\liec_{ij}^\ell \der{L_{k \ell}}{v_{i}}+\liec_{jk}^\ell \der{L_{i \ell}}{v_{i}}+\liec_{ki}^\ell \der{L_{jk}}{v_{i}}  \right)\\
	&\qquad + \liec_{ij}^i \der{L_{ki}}{v_{j}}+\liec_{ik}^i \der{L_{jk}}{v_{k}}+\liec_{jk}^i \der{L_{ki}}{v_{k}} \,.
	\end{split}
	\end{equation*}
	By applying equations \eqref{EL2-alien} and \eqref{EL2-corner} to the above expression, we obtain that
	\begin{equation*}
	B^i=\var{jk}{L_{jk}}{v}+\sum_\ell \liec_{jk}^jp_\ell \,,
	\end{equation*}
	which we set equal to zero to obtain Equation \eqref{EL2-plane}. The same equation follows from $B^j = 0$ and $B^k = 0$.
	
	The equations  $B^\emptyset=0$ and $B^\ell=0$ for $\ell \notin \{i,j,k\}$ are consequences of Equations \eqref{EL2-alien}--\eqref{EL2-plane}.
\end{proof}

\begin{remark}[On the stepped surface approach] In the commutative case $G = \R^N$ the multi-time Euler-Lagrange equations can be obtained by a stepped surface approach \cite{suris2016lagrangian}. This consists in approximating any given surface by a stepped surface, i.e.\@ a piecewise flat surface where each flat piece is tangent to two coordinate directions. On a flat piece tangent to $\partial_{t_i}$ and $\partial_{t_j}$ the action integral only sees the coefficient $L_{ij}$ of the Lagrangian two-form. This leads to the Euler-Lagrange equation $\var{ij}{L_{ij}}{v_I} = 0$. The boundary terms that occur where different flat pieces meet, lead to the other multi-time Euler-Lagrange equations.

In the non-commutative case this approach breaks down. If $\xi_i$ and $\xi_j$ do not commute, then there may not exist any surface that is tangent to $\partial_{\xi_i}$ and $\partial_{\xi_j}$. Indeed, Frobenius' theorem indicates that such a surface only exists if $[\partial_{\xi_i},\partial_{\xi_j}]$ lies in the span of $\partial_{\xi_i}$ and $\partial_{\xi_j}$. Hence we cannot in general reproduce the essential property that individual pieces of a stepped surface only see one coefficient of the Lagrangian 2-form.

Irrespective of whether we are in the commutative or non-commutative case, the integral of a (Lagrangian) 2-form over a 2-dimensional surface is well-defined. Given a surface one can always endow it with local coordinates, say $x$ and $y$, and use the coordinate functions to pull back the 2-form to a subset of $\R^2$. This is the canonical way of defining the integral of a differential form. What we cannot do in general, is use the differential operators $\partial_x$ and $\partial_y$ as basis elements of the Lie algebra $\mathfrak{g}$, because $\partial_x$ and $\partial_y$ will not in general be left-invariant under $G$.
\end{remark}

\subsection{Example on $SE(2)$}
\label{sec-SE2}

In this subsection we consider an example of a Lagrangian 2-form with the Lie group $SE(2)$ as multi-time. We parameterise $SE(2)$ by $(x,y,\theta)$, such that its multiplication is given by Equation \eqref{SE2-product}.
Since we have global coordinates, we could take as multi-time the Euclidean space spanned by $\partial_x,\partial_y,\partial_\theta$, but in order to illustrate the non-commutative aspects of Lagrangian multiform theory, we consider a basis of left-invariant vector fields on $SE(2)$, given by
\begin{align*}
    &\partial_1 = \cos \theta \partial_x + \sin \theta \partial_y \,, \\
    &\partial_2 = - \sin \theta \partial_x + \cos \theta \partial_y \,, \\
    &\partial_3 = \partial_\theta \,.
\end{align*}
They satisfy
\[ [ \partial_i, \partial_j ] = \sum_{k=1}^3 \liec_{ij}^k \partial_k \]
with
\[ \liec_{32}^1 = \liec_{13}^2 = -1, \qquad
\liec_{23}^1 = \liec_{31}^2 = +1 , \]
and all other structure constants equal to zero.

We consider a Lagrangian 2-form $\cL[v]$ which describes harmonic functions and the rotational symmetry of the notion of harmonicity. It is defined by $\partial_j \ip \partial_i \ip \cL = L_{ij}$ with
\begin{align*}
& L_{12} = \frac{1}{2} v_1^2 + \frac{1}{2} v_2^2 \,, \\
& L_{13} = v_2 v_3 - y(\theta) v_1 v_2 - \frac{1}{2} x(\theta) (v_1^2 - v_2^2) \,, \\
& L_{23} = -v_1 v_3 - x(\theta) v_1 v_2 + \frac{1}{2} y(\theta) (v_1^2 - v_2^2) \,,
\end{align*}
where $x(\theta) = x \cos \theta + y \sin \theta$ and $y(\theta) = -x \sin \theta + y \cos \theta$. Note that $L_{13}$ and $L_{23}$ are non-autonomous and that we have
\begin{align*}
    &\partial_3 x(\theta) = y(\theta) \,, 
    &&\partial_1 x(\theta) = 1 \,,
    &&\partial_2 x(\theta) = 0 \,, \\
    &\partial_3 y(\theta) = -x(\theta) \,, 
    &&\partial_1 y(\theta) = 0 \,,
    &&\partial_2 y(\theta) = 1 \,.
\end{align*}

The multi-time Euler-Lagrange equations of type \eqref{EL2-corner} are trivially satisfied.
Those of type \eqref{EL2-alien} are
\begin{align*}
& 0 = \var{13}{L_{13}}{v_2} = v_3 - y(\theta) v_1 + x(\theta) v_2 \,, \\
& 0 = \var{23}{L_{23}}{v_1} = -v_3 - x(\theta) v_2 + y(\theta) v_1 \,,
\end{align*}
so we find the multi-time Euler-Lagrange equation
\begin{equation}\label{SE2-v3}
v_3 = - x(\theta) v_2 + y(\theta) v_1 \,.
\end{equation}
Differentiating this equation by $\partial_2$ and $\partial_1$, we have, respectively,
\begin{align*}
& v_{32} = y(\theta) v_{12} + v_1 - x(\theta) v_{22} \,, \\
& v_{31} = y(\theta) v_{11} - v_2 - x(\theta) v_{21} \,.
\end{align*}
Since $v_{ij} = \partial_j \partial_i v = \partial_i \partial_j v + [\partial_j,\partial_i] v = v_{ji} + \sum_k \liec_{ji}^k v_k$ it follows that
\begin{align}
&v_{23} = (y(\theta) v_{12} + v_1 - x(\theta) v_{22}) - v_1 = y(\theta) v_{12} - x(\theta) v_{22} \,, \label{SE2-v23} \\
&v_{13} = (y(\theta) v_{11} - v_2 - x(\theta) v_{21}) + v_2 = -x(\theta) v_{12} + y(\theta) v_{11} \,. \label{SE2-v13} 
\end{align}

There are three multi-time Euler-Lagrange equations of type \eqref{EL2-plane}. The first one is
\begin{equation}\label{SE2-v11}
0 = \var{12}{L_{12}}{v} = - v_{11} - v_{22} \,,
\end{equation}
because all the relevant structure constants are zero. The second one is
\begin{align*}
0 = \var{13}{L_{13}}{v} + \liec_{13}^2 \der{L_{12}}{v_1} 
&= \var{13}{L_{13}}{v} - \der{L_{12}}{v_1} \\
&= (-v_{23} + y(\theta) v_{12}  + x(\theta) v_{11} + v_1) - v_1 \\
&= (-v_{23} + y(\theta) v_{12} - x(\theta) v_{22}) + x(\theta) (v_{22} + v_{11}) \,,
\end{align*}
which is a consequence of equations \eqref{SE2-v23} and \eqref{SE2-v11}. The third one is
\begin{align*}
0 = \var{23}{L_{23}}{v} + \liec_{23}^1 \der{L_{21}}{v_2} 
&= \var{23}{L_{23}}{v} - \der{L_{21}}{v_2} \\
&= (v_{13} + x(\theta) v_{12} + y(\theta) v_{22} + v_2) - v_2 \\
&= (v_{13} + x(\theta) v_{12} - y(\theta) v_{11}) + y(\theta) (v_{11} + v_{22}) \,,
\end{align*}
which is a consequence of equations \eqref{SE2-v13} and \eqref{SE2-v11}.

The coefficient $P_{123}$ of exterior derivative of $\cL$ is
\begin{align*}
P_{123} 
&= \D_1 L_{23} - \sum_\ell \liec_{12}^\ell L_{\ell 3} + \circlearrowleft_{123} \\
&= \D_1 L_{23} + \circlearrowleft_{123} \\
&= (v_{11} + v_{22})(-v_3 + y(\theta) v_1 - x(\theta) v_2) + v_1 (v_{13} - v_{31}) + v_2 (v_{23} - v_{32}) \,.
\end{align*}
Due to the Lie algebra structure, the last two terms cancel against each other. The remaining term is a double zero on the multi-time Euler-Lagrange equations.

Solutions to the multi-time Euler-Lagrange equations \eqref{SE2-v3}--\eqref{SE2-v11} are harmonic functions with their rotations parameterised by $\theta$. For example, we could take
\[ v(x,y,\theta) = \exp(x(\theta)) \sin(y(\theta)) \,. \]
Its derivatives are
\begin{align*}
    &\partial_1 v =  \exp(x(\theta)) \sin(y(\theta)) = v \,, \\
    &\partial_2 v =  \exp(x(\theta)) \cos(y(\theta)) \,, \\
    &\partial_3 v = y(\theta) v_1 - x(\theta) v_2 \,,
\end{align*}
and similarly for higher derivatives. It is an easy calculation to see that $v$ satisfies the Euler-Lagrange equations.

\subsection{Infinite hierarchy example}

In this section, we construct an example of a multi-time with non-commuting vector fields by realising the idea initially proposed in \cite{nijhoff1986integrable}. We use the approach of \cite{caudrelier2021multiform} and \cite{caudrelier2022classical} to produce an example of a generalisation of the AKNS hierarchy \cite{ablowitz1974inverse} with non-commuting flows.

Let us first recall the main ingredients of the description in the commutative case, i.e.\@ the case where the multi-time is simply $\mathbb{R}^{\mathbb{N}}$. In this case, the Lagrangian $2$-form of interest is of the form
\[ \cL=\sum_{m<n}L^{mn}\, \d t_m\wedge \d t_n \]
and the Lagrangian coefficients $L^{mn}$ are most conveniently assembled into a \defn{generating Lagrangian multiform}, which is the following formal series in $\lambda^{-1}$, $\mu^{-1}$,
\begin{equation}
\label{L_series}
    L(\lambda,\mu)=\sum_{m,n=0}^\infty L^{mn}\,\lambda^{-m-1}\mu^{-n-1}\,,
\end{equation}
where $L^{mn} = -L^{nm}$.
In \cite{caudrelier2022classical} an expression for a large class of generating Lagrangian multiforms was introduced. In the present work, we focus on the following choice
\begin{equation}
\label{L_AKNS}
    L(\lambda,\mu)=\tr \left( \varphi(\mu)^{-1} \partial_\lambda \varphi(\mu) J - \varphi(\lambda)^{-1} \partial_\mu \varphi(\lambda) J \right)
- \frac{1}{\lambda - \mu} \tr \left( Q(\lambda) Q(\mu) - J^2 \right)
\end{equation}
and will explain how to extend it to a non-commutative setting. A few definitions are in order. The matrix $J$ is a constant element of the underlying Lie algebra $\g$.
 The object $\varphi(\lambda)$ is an element of the group with (matrix) Lie algebra $\g\otimes \lambda^{-1}\mathbb{C}[[\lambda^{-1}]]$, that is a matrix of the form
\begin{equation}
\label{form_varphi}
     \varphi(\lambda)=\1+\sum_{n\ge 1}\varphi^{(n)}\lambda^{-n}\,,
\end{equation}
 and 
\[ Q(\lambda)=\varphi(\lambda)\,J\,\varphi(\lambda)^{-1}\,. \]
The group element $\varphi(\lambda)$, or the algebra element $Q(\lda)$, contains the dynamical variables (fields) of the hierarchy and defines the phase space.
The generating derivation $\partial_\lambda$ is given by the formal series
\[ \partial_\lambda=\sum_{n=0}^\infty\partial_{t_n}\lambda^{-n-1}\,. \]
One can check that \eqref{L_AKNS} is of the form \eqref{L_series} and derive the Lagrangian coefficients $L^{mn}$ in terms of the dynamical variables.
If we choose $\g = \mathfrak{sl}(2)$ and take $J=-i\sigma_3$, \eqref{L_AKNS} produces all the coefficients for a Lagrangian multiform of the AKNS hierarchy \cite{caudrelier2021multiform,caudrelier2022classical}. Other choices of $\g$ would produce multicomponent generalisations of the AKNS hierarchy. 

The multi-time Euler-Lagrange equation associated to \eqref{L_AKNS} takes the form of a generating Lax equation,
\begin{align}
\label{gen_Lax}
\partial_\lambda Q(\mu)=  \left[\frac{1}{\lambda - \mu}Q(\lambda)  , Q(\mu) \right]\,.
\end{align}
This is the central equation obtained in \cite{flaschka1983kac}, written here in generating form (see also \cite{nijhoff1988integrable}). Of course, an important result is that the flows of the hierarchy commute, meaning that the generating vector fields satisfy $[\partial_\lambda,\partial_\mu]=0$ or, in other words that $[\partial_{t_n},\partial_{t_m}]=0$ for all $n,m\ge 0$. 

In this construction, there is a freedom in choosing the matrix $J$ defining the hierarchy. The effects of choosing another element of the underlying Lie algebra have been studied in detail in \cite{caudrelier2022classical}. With this is mind,  the idea promoted in \cite{nijhoff1986integrable} consists in attaching a hierarchy of vector fields to each possible choice of matrix $J$ and to consider the resulting hierarchies assembled in a single, enlarged hierarchy with not necessarily commuting vector fields. With the tools developed in the present paper, we are in a position to realise this idea in the context of Lagrangian multiforms as follows. 

Let us consider the generators $J_1, \ldots, J_N$ of a matrix Lie algebra $\mathfrak{g}$, satisfying
\begin{equation}
\label{J_algebra}
    [J_a, J_b] = \sum_c \liec_{ab}^c J_c \,.
\end{equation} 
As in Equation \eqref{form_varphi}, we denote by $\varphi$ an element of the group corresponding to the half loop algebra $\mathfrak{g}\otimes \lambda^{-1}\mathbb{C}[[\lambda^{-1}]]$ and define 
\[ Q_a(\lambda) = \varphi(\lambda) \,J_a\, \varphi(\lambda)^{-1}\,,~~a=1,\dots,N \,. \]
These elements $Q_a(\lda)$ contain the phase space variables of our hierarchy. 
To each $a \in \{1,\dots,N\}$, we associate a generating vector field 
\begin{equation}
\label{generating-vf}
\partial_{\lambda,a}=\sum_{n=0}^\infty\partial_{n,a}\lambda^{-n-1}\,,
\end{equation}
which acts on our phase space.
Note that we wrote $\partial_{n,a}$ on purpose and not $\partial_{t_{n,a}}$ since in general these vector fields do not commute and cannot be thought of as derivatives with respect to time coordinates $t_{n,a}$.
Instead, we will construct a hierarchy where these generating vector fields satisfy the Lie algebra relations
\begin{equation}
\label{vf_algebra}
[ \partial_{\lambda,a} , \partial_{\mu,b} ] = \sum_c \liec_{ab}^c \frac{\partial_{\lambda,c} - \partial_{\mu,c}}{\lambda - \mu}\,.
\end{equation} 
This should be understood as the generating form for the following algebra relations
\begin{equation}
\label{vf_algebra2}
[ \partial_{n,a} , \partial_{m,b} ] = \sum_c \liec_{ab}^c \,\partial_{n+m,c}\,,\qquad n,m\ge 0\,.
\end{equation} 

We can now define a generalisation of \eqref{L_AKNS} to the present context as follows.
Define the Lagrangian 2-form $\cL$ by $\partial_{\mu,b} \ip \partial_{\lambda,a} \ip \cL =L_{ab}(\lambda, \mu) $ where
\begin{equation}
\label{form_Lab}
    L_{ab}(\lambda, \mu) 
= \tr \left( \varphi(\mu)^{-1} \partial_{\lambda,a} \varphi(\mu) J_b - \varphi(\lambda)^{-1} \partial_{\mu,b} \varphi(\lambda) J_a \right)
- \frac{1}{\lambda - \mu} \tr \left( Q_a(\lambda) Q_b(\mu) - J_a J_b \right) .
\end{equation}  
The following lemma ensures that $    L_{ab}(\lambda, \mu) $ can be expanded as in \eqref{L_series}, 
\begin{equation}
\label{Lab_series}
    L_{ab}(\lambda,\mu)=\sum_{m,n=0}^\infty L_{ab}^{mn}\,\lambda^{-m-1}\mu^{-n-1}\,.
\end{equation}

\begin{lemma}\label{lemma-lm-divisible} The expression
$\tr \left( Q_a(\lambda) Q_b(\mu) - J_a J_b \right)$ in \eqref{form_Lab} is divisible by $\lambda^{-1} - \mu^{-1}$
\end{lemma}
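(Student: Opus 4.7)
The plan is to show that the expression $\tr(Q_a(\lambda) Q_b(\mu) - J_a J_b)$, viewed as a formal power series in $\lambda^{-1}$ and $\mu^{-1}$, vanishes on the diagonal $\lambda = \mu$, and then to use the standard fact that such a series is divisible by $\lambda^{-1} - \mu^{-1}$.

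First I would compute $Q_a(\lambda) Q_b(\lambda)$ on the diagonal. Using the definition $Q_a(\lambda) = \varphi(\lambda) J_a \varphi(\lambda)^{-1}$ we immediately get
\[ Q_a(\lambda) Q_b(\lambda) = \varphi(\lambda) J_a \varphi(\lambda)^{-1} \varphi(\lambda) J_b \varphi(\lambda)^{-1} = \varphi(\lambda) J_a J_b \varphi(\lambda)^{-1}, \]
so by cyclicity of the trace $\tr(Q_a(\lambda) Q_b(\lambda)) = \tr(J_a J_b)$. Hence $\tr(Q_a(\lambda) Q_b(\mu) - J_a J_b)|_{\mu = \lambda} = 0$.

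Next I would translate this vanishing property into formal divisibility. From the form \eqref{form_varphi} of $\varphi(\lambda)$ we have expansions $Q_a(\lambda) = J_a + \sum_{n \ge 1} Q_a^{(n)} \lambda^{-n}$, so writing $x = \lambda^{-1}$, $y = \mu^{-1}$ the quantity of interest becomes a formal power series $h(x,y) = \sum_{n,m \ge 0} h_{nm} x^n y^m$ with $h_{00} = 0$, and the diagonal vanishing gives $\sum_{n+m=k} h_{nm} = 0$ for every $k \ge 0$. The elementary identity
\[ x^n y^m - x^{n+m} = x^n (y^m - x^m) = (y - x)\, x^n \sum_{j=0}^{m-1} x^{m-1-j} y^j \]
(understood as $0$ when $m=0$) allows us to write
\[ h(x,y) = \sum_{n,m} h_{nm}(x^n y^m - x^{n+m}) = -(x-y) \sum_{n,m \ge 1} h_{nm}\, x^n \sum_{j=0}^{m-1} x^{m-1-j} y^j, \]
where the first equality uses $\sum_{n+m=k} h_{nm} x^{n+m} = 0$ for each $k$. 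This exhibits $h(x,y)$ as $(\lambda^{-1} - \mu^{-1})$ times a well-defined formal series in $\lambda^{-1}$ and $\mu^{-1}$.

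No serious obstacle is expected: the only substantive step is the algebraic observation that the two occurrences of $\varphi(\lambda)$ and $\varphi(\lambda)^{-1}$ cancel in the middle, which is what forces the diagonal vanishing. Everything else is a routine manipulation of formal power series. The same argument would go through unchanged for other combinations of the form $\tr(R(\lambda) S(\mu))$ whenever $R$ and $S$ are conjugations of constants by the same group element.
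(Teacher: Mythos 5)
Your proof is correct, and it reaches the conclusion by a different route than the paper. You establish the diagonal vanishing $\tr(Q_a(\lambda)Q_b(\lambda)) = \tr(J_aJ_b)$ directly from cyclicity of the trace, and then prove and apply a general fact about formal power series: if $h(x,y)=\sum_{n,m}h_{nm}x^ny^m$ vanishes at $y=x$, then $\sum_{n+m=k}h_{nm}=0$ for each $k$, and the telescoping identity $x^ny^m - x^{n+m} = (y-x)\,x^n\sum_{j=0}^{m-1}x^{m-1-j}y^j$ exhibits the factor $x-y$. The paper instead inserts $\varphi(\mu)^{-1}\varphi(\mu)=\1$ to rewrite the trace in terms of the matrix-valued series $\varphi(\lambda)^{-1}\varphi(\mu)-\1$ and $\varphi(\mu)^{-1}\varphi(\lambda)-\1$, and then proves that \emph{these} are divisible by $\lambda^{-1}-\mu^{-1}$ -- using, at bottom, the same diagonal-vanishing-implies-divisibility argument, applied to the matrix entries of $\varphi(\lambda)^{-1}\varphi(\mu)$ rather than to the scalar trace. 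Your version is arguably cleaner and, as you note, strictly more general: it applies verbatim to any $\tr\left(R(\lambda)S(\mu)\right)$ with $R$ and $S$ conjugations of constants by the same group element, whereas the paper's decomposition is tied to the specific adjoint-orbit form of $Q_a$. The paper's version has the minor advantage of isolating the divisibility at the level of the group element $\varphi(\lambda)^{-1}\varphi(\mu)$, an object that reappears elsewhere in such constructions. One small correction: in your final displayed factorisation the outer sum should run over $n\ge 0$, $m\ge 1$ rather than $n,m\ge 1$, since terms with $n=0$ and $m\ge 1$ do contribute (only the $m=0$ terms drop out, because $y^0-x^0=0$); this is an index slip, not a gap in the argument.
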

\begin{proof}
We have
\begin{align*}
    \tr \left( Q_a(\lambda) Q_b(\mu) - J_a J_b \right)
    &= \tr \left( \varphi(\lambda) J_a \varphi(\lambda)^{-1} \varphi(\mu) J_b \varphi(\mu)^{-1} - J_a J_b \right) \\
    &= \tr \left( \left(\varphi(\mu)^{-1} \varphi(\lambda) - \1 \right) J_a \varphi(\lambda)^{-1} \varphi(\mu) J_b \right.\\
    &\qquad\qquad \left. + J_a ( \varphi(\lambda)^{-1} \varphi(\mu) - \1) J_b \right) \,.
\end{align*}
Now it remains to show that $(\varphi(\mu)^{-1} \varphi(\lambda) - \1)$ and $( \varphi(\lambda)^{-1} \varphi(\mu) - \1)$ are divisible by $\lambda^{-1} - \mu^{-1}$. We can expand $\varphi(\lambda)^{-1} \varphi(\mu)$ as
\begin{equation}\label{lm-inverse}
    \varphi(\lambda)^{-1} \varphi(\mu) = \sum_{n \geq 0} \sum_{j=0}^n F_{n,j} \lambda^{-j} \mu^{-(n-j)} \,,
\end{equation}
where $F_{0,0} = \1$. Setting $\lambda = \mu$ in equation \eqref{lm-inverse} we find that for all $n \geq 1$
\[ \sum_{j=0}^n F_{n,j} = 0 \,. \]
Hence
\begin{align*}
    \varphi(\lambda)^{-1} \varphi(\mu) &= \sum_{n \geq 0} \sum_{j=0}^n F_{n,j} (\lambda^{-j} \mu^{-(n-j)} - \mu^n) + \1 \\
    &= \sum_{n \geq 0} \sum_{j=0}^n F_{n,j} \mu^{-(n-j)} (\lambda^{-j}  - \mu^{-j}) + \1 \,.
\end{align*}
Similarly we also find that $(\varphi(\mu)^{-1} \varphi(\lambda) - \1)$ is  divisible by $\lambda^{-1} - \mu^{-1}$.
\end{proof}

The multi-time Euler Lagrange equations for $\cL$ so defined produce a non-commutative version of the generating Lax equation \eqref{gen_Lax}, controlled by the Lie algebra with Lie bracket given by \eqref{J_algebra}. Indeed, we have the following
\begin{prop}
The multi-time Euler Lagrange equations for $\cL$ take the form
\begin{equation}
\label{Q-evolution}
\begin{split}
    \partial_{\lambda,a} Q_b(\mu) &= \left[ \frac{Q_a(\lambda)-Q_a(\mu)}{\lambda - \mu} , Q_b(\mu) \right] \\
    &=  \frac{1}{\lambda - \mu} \left( \left[ Q_a(\lambda) , Q_b(\mu) \right] - \sum_{c} \liec_{ab}^c Q_c(\mu) \right) \,. 
\end{split}
\end{equation}
\end{prop}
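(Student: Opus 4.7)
The strategy is to apply Lemma \ref{lemma-equivalence}: critical fields are characterized by $\delta \d \cL = 0$. I will work with the generating Lagrangian $L_{ab}(\lambda,\mu)$ rather than with the coefficients $L_{ab}^{mn}$ of the expansion \eqref{Lab_series}, so that the infinite hierarchy of multi-time Euler-Lagrange equations is captured in a single generating equation. Accordingly, I compute
\[
P_{abc}(\lambda,\mu,\nu) \;:=\; \partial_{\nu,c}\ip \partial_{\mu,b}\ip \partial_{\lambda,a}\ip \d\cL
\]
and aim to show that $\delta P_{abc}=0$ is equivalent to \eqref{Q-evolution}.

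First, I would write $P_{abc}$ using the Cartan-style formula for the exterior derivative of a 2-form on vector fields that need not commute. Since $L_{ab}(\lambda,\mu)$ does not depend explicitly on the base point $g\in G$, the terms of the form $\partial_{\lambda,a} L_{bc}(\mu,\nu)$ are well defined via the generating derivations \eqref{generating-vf}. Using the Lie bracket relation \eqref{vf_algebra} of the generating vector fields, the bracket contributions assemble into
\[
-\sum_d \liec_{ab}^d\,\frac{L_{dc}(\lambda,\nu)-L_{dc}(\mu,\nu)}{\lambda-\mu} \;+\; \text{cyclic in }(a,\lambda),(b,\mu),(c,\nu).
\]
These divided differences are where the non-commutativity of multi-time manifests itself, and Lemma \ref{lemma-lm-divisible} guarantees that the bracket between $\varphi$-dependent quantities at $(\lambda,\mu)$ depends regularly on $\lambda,\mu$.

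Next I would take the vertical variation using the fundamental identities
\[
\delta\varphi(\lambda)\,\varphi(\lambda)^{-1}\in\mathfrak g\otimes \lambda^{-1}\mathbb C[[\lambda^{-1}]],\qquad \delta Q_a(\lambda)=\bigl[\delta\varphi(\lambda)\varphi(\lambda)^{-1},Q_a(\lambda)\bigr],
\]
together with $\partial_{\lambda,a}[\varphi(\mu)^{-1}]=-\varphi(\mu)^{-1}\bigl(\partial_{\lambda,a}\varphi(\mu)\bigr)\varphi(\mu)^{-1}$ and the trace cyclicity. The target is to express $\delta P_{abc}$ as a sum of products of the expression
\[
E_{ab}(\lambda,\mu) \;:=\; \partial_{\lambda,a}Q_b(\mu)-\frac{1}{\lambda-\mu}\Bigl(\bigl[Q_a(\lambda),Q_b(\mu)\bigr]-\sum_c \liec_{ab}^c\,Q_c(\mu)\Bigr)
\]
(and its obvious variants), paired with independent variations $\delta\varphi(\lambda)\varphi(\lambda)^{-1}$, $\delta\varphi(\mu)\varphi(\mu)^{-1}$, $\delta\varphi(\nu)\varphi(\nu)^{-1}$. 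In the commutative case treated in \cite{caudrelier2021multiform,caudrelier2022classical}, exactly such a factorisation into a \emph{double zero on solutions} is obtained; the non-commutative corrections produced by the structure constants $\liec_{ab}^c$ through \eqref{vf_algebra} combine with the bracket terms in the trace of $Q_a Q_b$ and with the $J_aJ_b$ subtraction in \eqref{form_Lab} to renormalise the commutative Lax expression $[Q_a(\lambda)/(\lambda-\mu),Q_b(\mu)]$ into precisely the non-singular combination in $E_{ab}(\lambda,\mu)$. Once that factorisation is established, independence of the three variations yields $E_{ab}(\lambda,\mu)=0$, which is \eqref{Q-evolution}.

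The main obstacle will be Step three, the bookkeeping of Lie algebra correction terms. Specifically, one has to show that (i)~the derivative terms $\tr(\varphi(\mu)^{-1}\partial_{\lambda,a}\varphi(\mu) J_b)$ in \eqref{form_Lab} generate, upon $\partial_{\nu,c}$-differentiation and antisymmetrisation, exactly the cyclic terms needed to assemble $\partial_{\lambda,a}Q_b(\mu)$ inside $E_{ab}$; (ii)~the divided-difference bracket contributions from \eqref{vf_algebra} reproduce the $-\sum_c \liec_{ab}^c Q_c(\mu)/(\lambda-\mu)$ correction in \eqref{Q-evolution} rather than an unwanted residue; and (iii)~the Jacobi identity of $\liec_{ab}^c$ is needed to eliminate cross terms in the cyclic sum. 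These checks are technical but mechanical, given Lemma \ref{lemma-lm-divisible} and the fact that, at $\lambda=\mu$, $[Q_a(\lambda),Q_b(\lambda)]=\sum_c \liec_{ab}^c Q_c(\lambda)$, which is the very condition that renders the right-hand side of \eqref{Q-evolution} regular along the diagonal.
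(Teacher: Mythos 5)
Your strategy is legitimate in principle --- Lemma \ref{lemma-equivalence} does characterise critical fields by $\delta\d\cL=0$, and the bracket contributions you write down for $\partial_{\nu,c}\ip\partial_{\mu,b}\ip\partial_{\lambda,a}\ip\d\cL$ are of the correct divided-difference form --- but as written the proof has a genuine gap: the entire derivation is deferred to checks (i)--(iii) that you declare ``technical but mechanical'' without performing them. The statement to be proved \emph{is} the outcome of exactly those checks, so nothing is actually established. Moreover, the step ``once that factorisation is established, independence of the three variations yields $E_{ab}(\lambda,\mu)=0$'' is not automatic: if $P_{abc}$ is a sum of products of on-shell-vanishing factors, then $\delta P_{abc}$ pairs $\delta\varphi(\mu)\varphi(\mu)^{-1}$ with a \emph{combination} of the $E$'s and their derivatives, and extracting the individual equations $E_{ab}=0$ from the vanishing of these combinations requires the systematic reorganisation by independent jet variations that Theorem \ref{thm-mtEL2} carries out once and for all (the analogue of passing from the $A^{\mathfrak I}$ to the $B^I$ via Lemma \ref{lemma-free2quotient}). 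You are, in effect, proposing to re-derive Theorem \ref{thm-mtEL2} inline for this example, and that re-derivation is the part you omit.

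The paper's proof avoids all of this by simply applying Theorem \ref{thm-mtEL2}. The only computations needed are the variational derivative
$\var{}{L_{ab}(\lambda,\mu)}{\varphi(\mu)^T} = [J_b,\varphi(\mu)^{-1}\partial_{\lambda,a}\varphi(\mu)]\varphi(\mu)^{-1} + \frac{1}{\lambda-\mu}\varphi(\mu)^{-1}[Q_a(\lambda),Q_b(\mu)]$
and the momentum $p_b(\mu)=J_b\varphi(\mu)^{-1}$; Equations \eqref{EL2-corner} and \eqref{EL2-alien} are trivially satisfied, and the structure-constant term $\sum_\ell \liec_{jk}^\ell p_\ell$ already present in \eqref{EL2-plane} supplies precisely the correction $-\sum_c\liec_{ab}^c Q_c(\mu)/(\lambda-\mu)$ that renders \eqref{Q-evolution} regular on the diagonal --- this is the content of your item (ii), but it comes for free from the general theorem rather than from a fresh bookkeeping of cyclic bracket terms. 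If you want to salvage your route, you should either carry out the factorisation of $P_{abc}$ explicitly (essentially the computation of Appendix \ref{AppA}, augmented by the argument recovering the individual equations from $\delta P_{abc}=0$), or, far more economically, invoke Theorem \ref{thm-mtEL2} and compute the two variational derivatives above.
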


\begin{proof}
Using the fact that partial derivatives of a trace are calculated as $\der{}{B^T} \tr(ABC) = CA$, we find
\[ 
\var{}{L_{ab}(\lambda, \mu) }{\varphi(\mu)^T}
= \left[ J_b , \varphi(\mu)^{-1} \partial_{\lambda,a} \varphi(\mu) \right] \varphi(\mu)^{-1} + \frac{1}{\lambda - \mu} \varphi(\mu)^{-1} \left[ Q_a(\lambda) , Q_b(\mu) \right]
\]
and
\[ 
p_b(\mu) := \der{L_{ab}(\lambda, \mu) }{(\partial_{\lambda,a} \varphi(\mu)^T)}
= J_b \varphi(\mu)^{-1} \,.
\]
The multi-time Euler-Lagrange equations of types \eqref{EL2-corner} and \eqref{EL2-alien} are trivially satisfied. Equation \eqref{EL2-plane} reads
\begin{equation}\label{hierarchyEL}
0 = \var{}{L_{ab}(\lambda,\mu)}{\varphi(\mu)} + \sum_{c}  \frac{\liec_{ab}^c}{\lambda - \mu} p_c(\mu) \,.
\end{equation}
Hence we find
\begin{align}
0 &=
\left[ J_b , \varphi(\mu)^{-1} \partial_{\lambda,a} \varphi(\mu) \right] \varphi(\mu)^{-1} + \frac{1}{\lambda - \mu} \varphi(\mu)^{-1} \left[ Q_a(\lambda) , Q_b(\mu) \right]
- \sum_{c} \frac{\liec_{ab}^c}{\lambda - \mu} J_c \varphi(\mu)^{-1} \notag\\
&= \left[ J_b , \varphi(\mu)^{-1} \partial_{\lambda,a} \varphi(\mu) \right] \varphi(\mu)^{-1} 
+ \frac{1}{\lambda - \mu} \varphi(\mu)^{-1} \left( \left[ Q_a(\lambda) , Q_b(\mu) \right] - \sum_{c} \liec_{ab}^c Q_c(\mu) \right) \notag\\
&= \left[ J_b , \varphi(\mu)^{-1} \partial_{\lambda,a} \varphi(\mu) \right] \varphi(\mu)^{-1} 
+ \frac{1}{\lambda - \mu} \varphi(\mu)^{-1} \left[ Q_a(\lambda) - Q_a(\mu), Q_b(\mu) \right] \,, \label{AKNS-EL}
\end{align}
where the second term can be expanded in negative powers of $\lambda$ and $\mu$ because $Q_a(\lambda) - Q_a(\mu)$ is divisible by $\lambda^{-1}-\mu^{-1}$.
As a consequence we find
\begin{align}
\partial_{\lambda,a} Q_b(\mu)
&= \partial_{\lambda,a} \varphi(\mu) J_b \varphi(\mu)^{-1} - \varphi(\mu) J_b \varphi(\mu)^{-1} \partial_{\lambda,a} \varphi(\mu) \varphi(\mu)^{-1} \notag\\
&= \varphi(\mu) \left[ \varphi(\mu)^{-1} \partial_{\lambda,a} \varphi(\mu), J_b \right] \varphi(\mu)^{-1} \notag\\
&=  \frac{1}{\lambda - \mu} \left[ Q_a(\lambda) - Q_a(\mu), Q_b(\mu) \right]
\end{align}
as claimed.
\end{proof}
Equation \eqref{Q-evolution} is a non-commutative generalisation of the generating Lax equation \eqref{gen_Lax} in the sense that the latter is obtained in the special case where we consider only one generator $J_a$ (which would be proportional to $\sigma_3$ in the historical example of \cite{flaschka1983kac}). In our more general context,  we have $N$ copies of \eqref{gen_Lax}, obtained when $b=a$, which are coupled with each other by the remaining equations when $b\neq a$.

A direct calculation shows that \eqref{Q-evolution} implies 
\begin{equation*}
[ \partial_{\lambda,a} , \partial_{\mu,b} ] Q_c(\nu)= \sum_d \liec_{ab}^d \frac{\partial_{\lambda,d} - \partial_{\mu,d}}{\lambda - \mu}\,Q_c(\nu) \,,
\end{equation*} 
so that we have a realisation of the Lie algebra \eqref{vf_algebra} as desired.
In fact, the calculations show the stronger result that the following deformation of the usual zero curvature equations for a hierarchy holds. Let us introduce
\begin{equation}
\label{def-Va}
    V_a(\lda,\nu)=\frac{Q_a(\lda)}{\lda-\nu}\,.
\end{equation}
The formal series expansion in this expression is understood as
\begin{equation}
\label{def_Lax_matrices}
    V_a(\lda,\nu)=\sum_{n=0}^\infty\frac{1}{\lda^{n+1}}V_a^{(n)}(\nu) \,, \qquad \text{where}~~V_a^{(n)}(\nu)=\sum_{k=0}^n\nu^kQ_a^{(n-k)}\,.
\end{equation}
This defines the Lax matrices $V_a^{(n)}(\nu)$ of the hierarchy, which are polynomials in $\nu$, and in terms of which we have: 
\begin{prop}\label{NZC}
The following non-commutative zero curvature equation, in generating form, holds:
\begin{equation}
\label{gen_ZC}
\partial_{\lambda,a}     V_b(\mu,\nu)  -\partial_{\mu,b}   V_a(\lda,\nu)+[V_b(\mu,\nu),V_a(\lda,\nu)]=\sum_c \liec_{ab}^c\frac{V_c(\lda,\nu)-V_c(\mu,\nu)}{\lda-\mu}\,.
\end{equation}
The coefficient of $\lda^{-n-1}\mu^{-m-1}$ gives the set of zero curvature equations for the non-commutative hierarchy, 
\begin{equation}
\label{set_ZC}
\partial_{n,a}  V_b^{(m)}(\nu)  -\partial_{m,b}   V_a^{(n)}(\nu)+[V_b^{(m)}(\nu),V_a^{(n)}(\nu)]= - \sum_c \liec_{ab}^cV_c^{(m+n)}(\nu)\,,    
\end{equation}
where $n,m\geq 0$.
\end{prop}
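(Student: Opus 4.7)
The plan is to verify the generating identity \eqref{gen_Lax} by direct substitution using the equation of motion \eqref{Q-evolution}, and then extract the coefficient-wise statement \eqref{set_ZC} by a careful expansion of $(\lambda-\mu)^{-1}$ as a formal series in $\lambda^{-1},\mu^{-1}$.

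First, since the denominator $(\mu-\nu)$ in $V_b(\mu,\nu)=Q_b(\mu)/(\mu-\nu)$ is inert under $\partial_{\lambda,a}$, and analogously for $V_a(\lambda,\nu)$, I would compute
\[
\partial_{\lambda,a} V_b(\mu,\nu)-\partial_{\mu,b} V_a(\lambda,\nu)
= \frac{\partial_{\lambda,a} Q_b(\mu)}{\mu-\nu} - \frac{\partial_{\mu,b} Q_a(\lambda)}{\lambda-\nu}.
\]
Substituting \eqref{Q-evolution} on the right-hand side (and using $\liec_{ba}^c=-\liec_{ab}^c$ together with $[Q_b(\mu),Q_a(\lambda)]=-[Q_a(\lambda),Q_b(\mu)]$), the commutator terms combine via the partial fraction identity
\[
\frac{1}{(\lambda-\mu)(\mu-\nu)}-\frac{1}{(\mu-\lambda)(\lambda-\nu)} = \frac{1}{(\lambda-\nu)(\mu-\nu)},
\]
while the structure-constant terms reassemble exactly into $\sum_c \liec_{ab}^c(V_c(\lambda,\nu)-V_c(\mu,\nu))/(\lambda-\mu)$. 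This gives
\[
\partial_{\lambda,a} V_b(\mu,\nu)-\partial_{\mu,b} V_a(\lambda,\nu)
= \frac{[Q_a(\lambda),Q_b(\mu)]}{(\lambda-\nu)(\mu-\nu)} + \sum_c \liec_{ab}^c\,\frac{V_c(\lambda,\nu)-V_c(\mu,\nu)}{\lambda-\mu}.
\]

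Next, I would compute the commutator $[V_b(\mu,\nu),V_a(\lambda,\nu)] = [Q_b(\mu),Q_a(\lambda)]/((\lambda-\nu)(\mu-\nu))$, which cancels the first term above and leaves precisely \eqref{gen_ZC}.

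For the coefficient-wise statement \eqref{set_ZC}, the key formal identity is
\[
\frac{\lambda^{-n-1}-\mu^{-n-1}}{\lambda-\mu} = -\sum_{\substack{p,q\ge 0\\ p+q=n}} \lambda^{-p-1}\mu^{-q-1},
\]
obtained from $\lambda^{n+1}-\mu^{n+1}=(\lambda-\mu)\sum_{k=0}^n \lambda^k\mu^{n-k}$. Applied to $V_c(\lambda,\nu)-V_c(\mu,\nu)=\sum_n V_c^{(n)}(\nu)(\lambda^{-n-1}-\mu^{-n-1})$, this shows that the coefficient of $\lambda^{-n-1}\mu^{-m-1}$ on the right-hand side of \eqref{gen_ZC} is $-\sum_c \liec_{ab}^c V_c^{(n+m)}(\nu)$. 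Reading off the same coefficient on the left-hand side yields \eqref{set_ZC}.

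The main subtlety (and the only point requiring genuine care rather than bookkeeping) is ensuring that all of the above manipulations are legitimate formal Laurent series operations in $\lambda^{-1}$ and $\mu^{-1}$. In particular, the expression $(V_c(\lambda,\nu)-V_c(\mu,\nu))/(\lambda-\mu)$ is a priori ambiguous, but the identity above shows it is unambiguously a power series in $\lambda^{-1},\mu^{-1}$; and the product $1/((\lambda-\nu)(\mu-\nu))$ that arises from the partial fraction decomposition is also a well-defined formal series once one fixes the expansion domain $|\lambda|,|\mu|>|\nu|$ dictated by \eqref{def_Lax_matrices}. Apart from this, the proof is essentially an algebraic verification using \eqref{Q-evolution}.
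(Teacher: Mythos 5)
Your proposal is correct and follows essentially the same route as the paper: the paper's proof of \eqref{gen_ZC} is stated simply as a direct calculation from \eqref{Q-evolution} and \eqref{def-Va}, which you carry out explicitly, and your extraction of the $\lambda^{-n-1}\mu^{-m-1}$ coefficient via the geometric-series identity $(\lambda^{-n-1}-\mu^{-n-1})/(\lambda-\mu)=-\sum_{p+q=n}\lambda^{-p-1}\mu^{-q-1}$ is exactly the paper's argument. One small slip: the partial-fraction identity should read $\frac{1}{(\lambda-\mu)(\mu-\nu)}+\frac{1}{(\mu-\lambda)(\lambda-\nu)}=\frac{1}{(\lambda-\nu)(\mu-\nu)}$, with a plus sign, which is what you actually need after flipping $[Q_b(\mu),Q_a(\lambda)]=-[Q_a(\lambda),Q_b(\mu)]$; the displayed equation you then state is nevertheless correct.
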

\begin{proof}
Equation \eqref{gen_ZC} is obtained by direct calculation from Equations \eqref{Q-evolution} and \eqref{def-Va}.

The coefficient of $\lda^{-n-1}\mu^{-m-1}$ in Equation \eqref{gen_ZC} gives Equation \eqref{set_ZC}. On the left hand side this is obvious, whereas on right hand side we use
\begin{align*}
    \frac{1}{\lda-\mu} \left( V_c(\lda,\nu)-V_c(\mu,\nu) \right)
    &= \frac{\lda^{-1} \mu^{-1}}{\mu^{-1} - \lda^{-1}} \sum_{n=0}^\infty \left( \lda^{-n-1} - \mu^{-n-1} \right) V_c^{(n)}(\nu) \\
    &= \sum_{n=0}^\infty \left( -\lda^{-n-1} \mu^{-1} -\lda^{-n} \mu^{-2} - \ldots - \lda^{-1} \mu^{-n-1} \right) V_c^{(n)}(\nu) \\
    &= - \sum_{m,n=0}^\infty \lda^{-n-1} \mu^{-m-1} V_c^{(m+n)}(\nu) \,. \qedhere
\end{align*}
\end{proof}

\begin{remark}
The terminology ``non-commutative zero curvature equations'' is used to suggest that our Equations \eqref{gen_ZC} and \eqref{set_ZC} generalise the usual zero curvature equations in the commutative case. The latter have the interpretation of encoding the flatness of a (Lax) connection or, equivalently, the commutativity of the components of a covariant derivative. On the other hand, Equations \eqref{gen_ZC} and \eqref{set_ZC} are not meant to be interpreted as such a flatness condition. However, they do possess a beautiful interpretation in terms of (generating) ``covariant vector fields'', defined as
\begin{equation*}
    {\cal D}_{\lda,a}(\nu)\equiv \partial_{\lda,a}-V_a(\lda,\nu)\,.
\end{equation*}
We do \emph{not} call these ``covariant derivatives'' because the vector fields $\partial_{n,a}$ should not be thought of as derivatives with respect to time variables $t_{n,a}$, as already advocated above.
Assuming the Lie algebra relations \eqref{vf_algebra} hold, \eqref{gen_ZC} is equivalent to
\begin{equation}
\label{cov_vf_algebra}
    [{\cal D}_{\lda,a}(\nu),{\cal D}_{\mu,b}(\nu)]= \sum_c \frac{\liec_{ab}^c}{\lambda - \mu}\left({\cal D}_{\lda,c}(\nu) - {\cal D}_{\mu,c}(\nu)\right) \,.
\end{equation}
This is the generalisation to our (generating) non-commutative context of the following well-known fact. If one has commuting vector fields, say $[\partial_x,\partial_t]=0$, and a (Lax) connection $U\,dx+V\,dt$, then the flatness or zero curvature condition is equivalent to the commutativity of the covariant derivatives,  $[\partial_x-U,\partial_t-V]=0$. In our setting, with Lie algebra relations \eqref{vf_algebra}, the ``non-commutative zero curvature equations'' \eqref{gen_ZC} are equivalent to the relations \eqref{cov_vf_algebra} on the ``covariant vector fields'' ${\cal D}_{\lda,a}(\nu)$ which realise once again the Lie algebra structure.
\end{remark}
\begin{remark}
There is a different way of using the generating zero curvature equation \eqref{gen_ZC} which was used in \cite{nijhoff1986integrable}. If we set $\lambda = \ell$, $\mu = \ell'$, $\nu=k$, $Q_a(\lda) = R_\ell$, $Q_b(\mu) = \widetilde{R}_{\ell'}$, and $[ R_\ell, \widetilde{R}_{\ell} ] = \widehat{R}_\ell$, and if we multiply \eqref{gen_ZC} by $(\ell-k)(\ell'-k)$ and require that it holds identically in $k$, then we obtain the pair of equations (43) from \cite{nijhoff1986integrable}:
\begin{subequations}
\begin{align}
&  \widetilde{\partial}_{\ell'} R_\ell - \partial_\ell \widetilde{R}_{\ell'}
=- \frac{\widehat{R}_{\ell} - \widehat{R}_{\ell'}}{\ell - \ell'} \,, \\
& \ell' \widetilde{\partial}_{\ell'} R_\ell - \ell \partial_\ell \widetilde{R}_{\ell'}
+ [R_\ell, \widetilde{R}_{\ell'}] + \frac{\ell' \widehat{R}_\ell - \ell \widehat{R}_{\ell'}}{\ell - \ell'} = 0 \,.
\end{align}
\label{R_eqs}
\end{subequations}%
In \cite{nijhoff1986integrable}, the commutative analogue of \eqref{R_eqs} led to a connection with the Wess-Zumino-Witten sigma model and it is an intriguing problem to cast \eqref{R_eqs} into a non-commutative version of this connection. 
\end{remark}

\begin{prop}[Closure relation]\label{closure_rel}
The form $\cL$ with coefficients \eqref{form_Lab} satisfies the closure relation. On the equations of motion \eqref{Q-evolution}, we have
\begin{equation}
    \partial_{\nu,c} \cL_{ab}(\lambda,\mu) +\partial_{\mu,b} \cL_{ca}(\nu,\lambda)+\partial_{\lambda,a} \cL_{bc}(\mu,\nu)  = 0 \,.
\end{equation}
\end{prop}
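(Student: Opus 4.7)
The plan is to compute each of the three terms on the left-hand side directly from the definition \eqref{form_Lab}, assemble the cyclic sum, and exhibit cancellation after invoking the equations of motion \eqref{Q-evolution} together with algebraic identities in the spectral parameters. I would split $L_{ab}(\lambda,\mu) = K_{ab}(\lambda,\mu) - U_{ab}(\lambda,\mu)$ into the ``kinetic'' piece $K_{ab}$ involving $\varphi^{-1}\partial\varphi$ and the ``potential'' piece $U_{ab}(\lambda,\mu) = \frac{1}{\lambda-\mu}\tr(Q_a(\lambda)Q_b(\mu) - J_aJ_b)$, and analyse the two sectors separately before combining.

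For the potential sector, I would apply $\partial_{\lambda,a}$ to $U_{bc}(\mu,\nu)$ using \eqref{Q-evolution} to replace $\partial_{\lambda,a}Q_b(\mu)$ and $\partial_{\lambda,a}Q_c(\nu)$. This generates two kinds of contributions: pure commutator pieces of the form $\frac{1}{(\mu-\nu)(\lambda-\mu)}\tr\bigl([Q_a(\lambda)-Q_a(\mu),Q_b(\mu)]Q_c(\nu)\bigr)$, and structure-constant corrections proportional to $c_{ab}^d$. Cycling $(a,\lambda),(b,\mu),(c,\nu)$ and combining cyclicity of the trace with the Jacobi identity for the matrix commutator and the partial-fraction identity
\[
\frac{1}{(\lambda-\mu)(\mu-\nu)} + \frac{1}{(\mu-\nu)(\nu-\lambda)} + \frac{1}{(\nu-\lambda)(\lambda-\mu)} = 0,
\]
the commutator contributions cancel among themselves.

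For the kinetic sector, I would compute $\partial_{\lambda,a}K_{bc}(\mu,\nu)$ using the expression for $\varphi(\mu)^{-1}\partial_{\lambda,a}\varphi(\mu)$ that can be read off from \eqref{AKNS-EL}. The mixed second derivatives $\partial_{\lambda,a}\partial_{\mu,b}\varphi(\nu)$ that appear have to be put into a canonical (well-ordered) form using the non-commuting algebra \eqref{vf_algebra}; each reordering produces precisely a $\frac{c_{ab}^d}{\lambda-\mu}(\partial_{\lambda,d}-\partial_{\mu,d})\varphi(\nu)$ correction. After cyclic summation these corrections should match, with the correct signs, the structure-constant remainders left over from the $U$-sector, while the symmetric second-derivative parts cancel internally by the cyclic symmetry of the calculation.

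The hard part will be the combinatorial bookkeeping: ensuring that commutator pieces, structure-constant pieces, and mixed-derivative reorderings all align under the three-term partial-fraction identity together with the Jacobi identity for $c_{ab}^c$. A built-in sanity check is that setting $c_{ab}^c=0$ must reduce the computation to the commutative version already carried out in \cite{caudrelier2022classical}. A potentially cleaner route, which I would attempt first, is to exhibit the whole cyclic sum as a sum of products of two Euler-Lagrange expressions coming from \eqref{AKNS-EL}; such a ``double-zero'' structure, in the spirit of Proposition \ref{prop:double0}, would immediately give vanishing on solutions of \eqref{Q-evolution} and provide a structural explanation of the closure relation.
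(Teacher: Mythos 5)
Your proposal follows essentially the same route as the paper's proof in Appendix~\ref{AppA}: the same split of $L_{ab}$ into a kinetic trace term and the potential term $\frac{1}{\lambda-\mu}\tr(Q_a(\lambda)Q_b(\mu)-J_aJ_b)$, substitution of the equations of motion \eqref{AKNS-EL}/\eqref{Q-evolution}, and cancellation in the cyclic sum via cyclicity of the trace together with the three-term partial-fraction identity. The only cosmetic difference is bookkeeping: the paper shows that the cyclic sums of the two sectors each reduce to the \emph{same} residual expression (with same-parameter commutators $[Q_a(\mu),Q_b(\mu)]=\sum_d \liec_{ab}^d Q_d(\mu)$ playing the role of your structure-constant remainders), so that they cancel in $K-V$, which is exactly the matching you anticipate.
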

The proof is given in Appendix \ref{AppA}.

In the rest of this section we restrict our attention to $\g = \mathfrak{sl}(2)$ to illustrate how the usual (unreduced) nonlinear Schr\"odinger equation of the AKNS hierarchy sits within our non-commutative extension. It is enough to consider \eqref{set_ZC} for $n,m\in\{0,1,2\}$. To avoid confusing notations on the vector fields that would arise if we also used $a,b\in\{1,2,3\}$, we prefer to use the other common choice of $a,b\in\{x,y,z\}$ for the labels associated to the basis elements of $\mathfrak{sl}(2)$. Hence, instead of having vector fields such as $\partial_{1,2}$ and $\partial_{2,1}$ in Equation \eqref{generating-vf}, we will have $\partial_{1,y}$ and $\partial_{2,x}$. This shows more clearly what labels the level in the hierarchy and what labels the direction in the underlying Lie algebra. Let us emphasise again that $x,y,z$ should not be thought of as actual coordinates, since in general the vector fields $\partial_{n,x}$, $\partial_{m,y}$ and $\partial_{k,z}$ do not commute. 

As a basis of $\mathfrak{sl}(2)$ we choose
\[J_x=\begin{pmatrix}
0 & 1\\
1 & 0
\end{pmatrix}=\sigma_x\,,
\qquad
J_y=\begin{pmatrix}
0 & -i\\
i & 0
\end{pmatrix}=\sigma_y\,,
\qquad
J_z=\begin{pmatrix}
1 & 0\\
0 & -1
\end{pmatrix}=\sigma_z\,,\]
which gives us $\liec_{ab}^c=2i\epsilon_{abc}$ where $\epsilon_{abc}$ is the totally antisymmetric tensor with $\epsilon_{xyz}=1$.
We write the coefficients $\varphi^{(n)}$ of $\varphi(\lambda)$ in \eqref{form_varphi} as
\[ \varphi^{(n)}=\begin{pmatrix}
A_n & B_n\\
C_n & D_n
\end{pmatrix}\,. \]
Because we consider $\g=\mathfrak{sl}(2)$, we have $\det \varphi=1$ which implies that $D_1+A_1=0$ and $D_2+A_2+A_1D_1-B_1C_1=0$. We use this to eliminate $D_1$ and $D_2$.
This gives us 
\begin{align*}
& Q_x^{(0)}=\sigma_1\,,
\qquad
Q_x^{(1)}=\begin{pmatrix}
B_1-C_1 & 2A_1\\
-2A_1 & C_1-B_1
\end{pmatrix}\,,\\
& Q_x^{(2)}=\begin{pmatrix}
B_2-C_2-A_1(B_1+C_1) & A_1^2-B_1^2+2A_2\\
3A_1^2+2B_1C_1-C_1^2-2A_2 & C_2-B_2+A_1(B_1+C_1)
\end{pmatrix}\,,
\end{align*}
\begin{align*}
& Q_y^{(0)}=\sigma_2\,,
\qquad 
Q_y^{(1)}=\begin{pmatrix}
i(B_1+C_1) & -2iA_1\\
-2iA_1 & -i(B_1+C_1)
\end{pmatrix}\,,\\
& Q_y^{(2)}=\begin{pmatrix}
-i(A_1(B_1-C_1)-B_2-C_2) & -i(A_1^2+B_1^2+2A_2)\\
i(3A_1^2+2B_1C_1+C_1^2-2A_2)  & i(A_1(B_1-C_1)-B_2-C_2)
\end{pmatrix}\,,
\end{align*}
\begin{align*}
& Q_z^{(0)}=\sigma_3\,,
\qquad
Q_z^{(1)}=\begin{pmatrix}
0 & -2B_1\\
2C_1 & 0    
\end{pmatrix}\,, \\
& Q_z^{(2)}=\begin{pmatrix}
2B_1C_1 & -2A_1B_1-2B_2\\
-2A_1C_1+2C_2  & -2B_1C_1
\end{pmatrix}\,.
\end{align*}

We now spell out the content of \eqref{set_ZC} for the first three levels $n,m=0,1,2$ and write the equations they entail on the dynamical variables $A_1$, $B_1$, $C_1$. The level $n=0$, $m=0$ just gives the Lie algebra relations \eqref{J_algebra}.
At the level $n=0$, $m=1$, we have
\[ \partial_{0,a}  Q_b^{(1)}  +[Q_b^{(1)},\sigma_a]=-\sum_c \liec_{ab}^cQ_c^{(1)}\,. \]
Looking at the possible cases for $a,b$, this gives
\begin{subequations}
\begin{align}
\label{flow01}
&\partial_{0,x}A_1=C_1-B_1\,,
&&\partial_{0,x}B_1=-2A_1 \,,
&&\partial_{0,x}C_1=2A_1  \,,\\
\label{flow02}
&\partial_{0,y}A_1=-i(B_1+C_1)\,,
&&\partial_{0,y}B_1=2iA_1 \,,
&&\partial_{0,y}C_1=2iA_1\,,\\
\label{flow03}
&\partial_{0,z}A_1=0\,,
&&\partial_{0,z}B_1=2B_1 \,,
&&\partial_{0,z}C_1=-2C_1\,.
\end{align}
\end{subequations}%
In the standard AKNS case, at this first level we would only have \eqref{flow03} corresponding to $\sigma_z$, which is easily integrated. Here, to integrate the full set of equations would require the use of the whole group $SL(2)$.
For the higher flows, the first few of which are described next, this becomes even more complicated. One can imagine that a version of the usual dressing method could be implemented which will require the use of the group corresponding to the loop algebra $\mathfrak{sl}(2)\otimes \lambda^{-1}\mathbb{C}[[\lambda^{-1}]]$ but, contrary to the standard case, it is far from clear how explicitly the (seed) solutions can be constructed. This task is beyond the scope of the present paper. 

The next levels are studied in detail in Appendix \ref{derive}. They lead to 
\begin{subequations}\label{NC_NLS0}
\begin{eqnarray}
\label{NC_NLS1}
&&\partial_{1,b}  Q_a^{(1)}  +[Q_b^{(2)},\sigma_a]=0\,,\\     
\label{NC_NLS2}
 &&   \partial_{2,a}   Q_b^{(1)}-\partial_{1,a}  Q_b^{(2)}  +[Q_b^{(1)},Q_a^{(2)}]=0\,.
\end{eqnarray} 
\end{subequations}
Equations \eqref{NC_NLS0} contain the required information to cast the (unreduced) NLS equation into our non-commutative extension. 
As derived in Appendix \ref{derive}, with $B_1=-\frac12q$, $C_1=\frac12r$, $A_1=\frac12p$, we obtain
\begin{subequations}\label{x_flow}
\begin{align}
&\partial_{2,x}q-\frac{1}{2}\partial_{1,x}\partial_{1,z}q+q\partial_{1,x}p=0\,,\\
&\partial_{2,x}r-\frac{1}{2}\partial_{1,x}\partial_{1,z}r-r\partial_{1,x}p=0\,,\\
&\partial_{2,x}p-\frac{1}{2}\partial_{1,x}^2q+p\partial_{1,x}p-\frac{1}{2}(q+r)\partial_{1,x}q=0\,,
\end{align}
\end{subequations}
as well as similar equations for $\partial_{2,y}$
\begin{subequations}\label{y_flow}
\begin{align}
&\partial_{2,y}q-\frac{1}{2}\partial_{1,y}\partial_{1,z}q+q\partial_{1,y}p=0\,,\\
&\partial_{2,y}r+\frac{1}{2}\partial_{1,y}\partial_{1,z}r-r\partial_{1,y}p=0\,,\\
&\partial_{2,y}p-\frac{i}{2}\partial_{1,y}^2q+p\partial_{1,y}p+\frac{1}{2}(q-r)\partial_{1,y}q=0\,,
\end{align}
\end{subequations}
and for $\partial_{2,z}$
\begin{subequations}\label{z_flow}
\begin{align}
\label{NLS1}
&\partial_{2,z}q-\frac{1}{2}\partial_{1,z}^2q+q^2r=0\,,\\
\label{NLS2}
&\partial_{2,z}r+\frac{1}{2}\partial_{1,z}^2q-qr^2=0\,,\\
&\partial_{2,z}p-\frac{1}{2}r\partial_{1,z}q-\frac{1}{4}\partial_{1,z}(\partial_{1,x}+i\partial_{1,y})q+pqr=0\,.
\end{align}
\end{subequations}

When taken on their own, Equations \eqref{NLS1}-\eqref{NLS2} produce the unreduced NLS system. This is seen by setting
$2\partial_{2,z}=i\partial_T$, $\partial_{1,z}=i\partial_X$ to obtain 
\[i\partial_T q + \partial_X^2 q+2qrq=0\,,
\qquad
-i\partial_T r+  \partial_X^2 r-2rqr=0\,.\]
The familiar nonlinear Schr\"odinger equation,
\[iq_T + q_{XX}\pm2|q|^2q=0 , \] 
is obtained by applying the reduction $r=\pm q^*$. Equations \eqref{NLS1}--\eqref{NLS2} are the lowest nonlinear ones in the usual AKNS hierarchy of vector fields $\partial_{n,z}$ associated to $J_z=\sigma_z$. In our construction, it is now part of the larger set of equations \eqref{x_flow}-\eqref{z_flow} which play the similar role of being the lowest nonlinear equations in our non-commutative AKNS hierarchy of vector fields $\partial_{n,a}$. All these equations, including their higher counterparts contained in \eqref{set_ZC}, derive from the multi-time Euler Lagrange equations for our multiform $\cL$. It is important to note that, unlike the commutative case where one can consider \eqref{NLS1}-\eqref{NLS2} without any reference to the entire hierarchy,  it is not clear at this stage whether the truncated system \eqref{x_flow}-\eqref{z_flow} can be studied as a set of equations without the rest of the equations \eqref{set_ZC} in the hierarchy. This is because of the (infinite dimensional) Lie algebra relations \eqref{vf_algebra2} on the vector fields.
If $a\neq b$, this may mean that the ``differential consequences'' of \eqref{x_flow}-\eqref{z_flow} call upon higher equations not present in the truncation. This is an intriguing new feature which is not present in the usual case where $a=b=z$ and the flows of different levels commute. Its full understanding requires to investigate the solutions of the non-commutative AKNS hierarchy. As mentioned above, this is left for future work.

\section{Conclusions}

The main purpose of this work has been to extend the ideas of Lagrangian multiforms and pluri-Lagrangian systems to apply to non-commuting flows. The resulting theory of Lagrangian multiforms on Lie groups allows us to capture the full symmetry group of a system in the variational description, no matter if this group is abelian or not. To our knowledge, this is the first attempt at a Lagrangian theory of Lie group actions on manifolds.

One application of the theory of Lagrangian multiforms on Lie groups is to provide a variational description of superintegrable systems. On the other hand, the development of this theory shows that the notion of Lagrangian multiforms should not be constrained to the context of integrable systems. Indeed, it can be applied to any Lagrangian system with symmetries and captures both the dynamical system and its symmetries in a single variational principle.

Some questions remain on the topic of non-commuting hierarchies of PDEs. These include the proper interpretation of a differential form in generating form $L(\lambda,\mu)$ and the derivation of exact solutions to the non-commutative generalisation of the AKNS hierarchy, possibly using a version of the dressing method. An additional topic for future research, given that previous results have shown a remarkable similarity between the continuous and discrete theories of Lagrangian multiforms, is to develop a theory discrete Lagrangian multiforms for non-commuting maps and partial difference equations.

\paragraph{Acknowledgement}

The majority of this work was carried out while MV was a research fellow at the University of Leeds, supported by a Research Fellowship of the Deutsche Forschungsgemeinschaft (project number VE 1211/1-1). FN is currently supported by the EPSRC grant EP/W007290/1. 

\paragraph{Declaration} 

The authors have no competing interests to declare that are relevant to the content of this article.

\section*{Appendix}

\appendix

\section{Proof of Proposition \ref{closure_rel}}\label{AppA}
The coefficients of $\d \cL$ are generated by
\[ \partial_{\nu,c} \cL_{ab}(\lambda,\mu) + \circlearrowleft \,, \]
where $\circlearrowleft$ denotes the two terms obtained by cyclic permutations of $(a,b,c)$ and $(\lambda,\mu,\nu)$.
We write $L_{ab}(\lambda,\mu) = K_{ab}(\lambda,\mu) - V_{ab}(\lambda,\mu)$ with
\begin{align*}
    &K_{ab}(\lambda,\mu) = \tr \left( \varphi(\mu)^{-1} \partial_{\lambda,a} \varphi(\mu) J_b - \varphi(\lambda)^{-1} \partial_{\mu,b} \varphi(\lambda) J_a \right) \,, \\
    &V_{ab}(\lambda,\mu) = \frac{1}{\lambda - \mu} \tr \left( Q_a(\lambda) Q_b(\mu) - J_a J_b \right) \,,
\end{align*}
and analyse the contributions in turn. Using the Euler-Lagrange equation \eqref{AKNS-EL} and the cyclic property of the trace repeatedly, we find
\begin{align*}
\partial_{\nu,c} K_{ab}(\lambda,\mu) + \circlearrowleft
&= \frac{1}{\lambda - \mu} \frac{1}{\nu - \mu}\tr\left( \left[ Q_c(\nu) - Q_c(\mu), Q_b(\mu) \right] \left(  Q_a(\lambda) - Q_a(\mu) \right) \right) + \circlearrowleft \,.
\end{align*}
Note that $\tr\left( \left[ Q_c(\nu), Q_b(\mu) \right] Q_a(\lambda) \right)$ and $\tr\left( \left[ Q_c(\mu), Q_b(\mu) \right] Q_a(\mu) \right) = \tr\left( \left[ J_c, J_b \right] J_a \right)$ are invariant under cyclic permutations and that $\frac{1}{\lambda - \mu} \frac{1}{\nu - \mu} + \circlearrowleft = 0$. Hence
\begin{align}
\partial_{\nu,c} K_{ab}(\lambda,\mu) + \circlearrowleft
&= \frac{1}{\lambda - \mu} \frac{1}{\nu - \mu}\tr\left( -\left[ Q_c(\mu), Q_b(\mu) \right] Q_a(\lambda) - \left[ Q_c(\nu) , Q_b(\mu) \right] Q_a(\mu) \right) + \circlearrowleft \notag \\
&= \frac{1}{\lambda - \mu} \frac{1}{\nu - \mu}\tr\left( Q_a(\lambda) \left[ Q_b(\mu), Q_c(\mu) \right]  + \left[  Q_a(\mu) , Q_b(\mu) \right] Q_c(\nu) \right) + \circlearrowleft \,. \label{K-cyclic}
\end{align}
Similarly, we have
\begin{align*}
\partial_{\nu,c} V_{ab}(\lambda,\mu) + \circlearrowleft
&= \frac{1}{\lambda - \mu} \frac{1}{\nu - \lambda} \tr\left( \left[ Q_c(\nu) - Q_c(\lambda), Q_a(\lambda) \right] Q_b(\mu) \right) \\
&\qquad + \frac{1}{\lambda - \mu} \frac{1}{\nu - \mu} \tr\left( \left[ Q_c(\nu) - Q_c(\mu), Q_b(\mu) \right] Q_a(\lambda)  \right) + \circlearrowleft \,.
\end{align*}
Using again that $\tr\left( \left[ Q_c(\nu), Q_b(\mu) \right] Q_a(\lambda) \right)$ and $\tr\left( \left[ Q_c(\mu), Q_b(\mu) \right] Q_a(\mu) \right)$ are invariant under cyclic permutations, and that $\frac{1}{\lambda - \mu} \frac{1}{\nu - \mu} + \circlearrowleft = 0$, we find
\begin{align}
\partial_{\nu,c} V_{ab}(\lambda,\mu) + \circlearrowleft
&= \frac{1}{\lambda - \mu} \frac{1}{\nu - \mu} \tr\left(\left[ Q_a(\mu), Q_b(\mu) \right] Q_c(\nu) + Q_a(\lambda) \left[ Q_b(\mu), Q_c(\mu) \right]  \right) + \circlearrowleft \,.
\label{V-cyclic}
\end{align}
Comparing Equations \eqref{K-cyclic} and \eqref{V-cyclic} we obtain $\partial_{\nu,c} \cL_{ab}(\lambda,\mu) + \circlearrowleft\, = 0$ as desired.

\section{Derivation of Equations \eqref{x_flow}-\eqref{z_flow}}\label{derive}

At the level $n=1$, $m=1$, Equations \eqref{set_ZC} give one automatically satisfied relation
\[ [\sigma_b,Q_a^{(1)}]+[Q_b^{(1)},\sigma_a]=-\liec_{ab}^cQ_c^{(1)} \]
and the equation
\begin{equation}
\label{nm_11}
\partial_{1,a}  Q_b^{(1)}  -\partial_{1,b}   Q_a^{(1)}+[Q_b^{(1)},Q_a^{(1)}]=-\liec_{ab}^cQ_c^{(2)}\,,
\end{equation}
where summation over repeated indices is implied.
At the level $n=1$, $m=2$ we have
\begin{eqnarray}
\label{nm_12a}&&\partial_{1,a}  Q_b^{(1)}  +[Q_b^{(1)},Q_a^{(1)}]+[Q_b^{(2)},\sigma_a]=-\liec_{ab}^cQ_c^{(2)}\,, \\
\label{nm_12b}&&  \partial_{1,a}  Q_b^{(2)}  -\partial_{2,b}   Q_a^{(1)}+[Q_b^{(2)},Q_a^{(1)}]=-\liec_{ab}^cQ_c^{(3)}\,. 
\end{eqnarray}
We can combine \eqref{nm_11} and \eqref{nm_12a} to obtain
\begin{equation}
\label{nm_11b}
\partial_{1,b}  Q_a^{(1)}  +[Q_b^{(2)},\sigma_a]=0\,.     
\end{equation}
In a similar spirit, we can combine \eqref{nm_12b} with the equation coming from the coefficient of $\nu$ in Equation \eqref{set_ZC} with $n=2$, $m=2$, which reads
\[ \partial_{2,a}  Q_b^{(1)}  -\partial_{2,b}   Q_a^{(1)}+[Q_b^{(1)},Q_a^{(2)}]+[Q_b^{(2)},Q_a^{(1)}]=-\liec_{ab}^cQ_c^{(3)}\,. \]
This allows us to eliminate $Q_c^{(3)}$ and to obtain
\begin{equation}
\label{NC_NLS}
    \partial_{2,a}   Q_b^{(1)}-\partial_{1,a}  Q_b^{(2)}  +[Q_b^{(1)},Q_a^{(2)}]=0\,.
\end{equation} 
There are various equivalent ways of writing the equations on the fields $A_1$, $B_1$ and $C_1$ contained in \eqref{nm_11b} and \eqref{NC_NLS}. It is instructive to recall how one would proceed in the case $a=b=z$ to obtain the unreduced NLS system \eqref{NLS1}-\eqref{NLS2}. We have
\begin{eqnarray*}
&&\partial_{1,z} Q_z^{(1)}  + [Q_z^{(2)},\sigma_z]=0\,,\\
&&\partial_{2,z}   Q_z^{(1)}- \partial_{1,z}  Q_z^{(2)}  +[Q_z^{(1)},Q_z^{(2)}]=0\,.
\end{eqnarray*}
The first equation tells us that $-2A_1B_1-2B_2=-\partial_{1,z} B_1$ and $-2A_1C_1+2C_2=-\partial_{1,z} C_1$. 
Inserting into the second equation, the diagonal elements are automatically satisfied while the off-diagonal elements give 
\begin{subequations}
\label{NLS_unreduced}
\begin{align}
    -2\partial_{2,z}B_1 + \partial_{1,z}^2 B_1+8B_1^2C_1 &= 0\,, \\
    2\partial_{2,z}C_1 +  \partial_{1,z}^2 C_1+8B_1C_1^2 &= 0\,.
\end{align}
\end{subequations}
With $q=-2B_1$, $r=2C_1$, $2\partial_{2,z}=i\partial_T$, $\partial_{1,z}=i\partial_T$, this is \eqref{NLS1}-\eqref{NLS2}.

We apply a similar strategy for the rest of the equations obtained for $(a,b) \neq (z,z)$. We use the off-diagonal elements of \eqref{nm_11b} with $a=b=x$ to obtain the relation
\begin{equation}
\label{B2-C2}
    C_2-B_2+A_1(B_1+C_1)=\partial_{1,x}A_1\,.
\end{equation}
Then, the off-diagonal elements of \eqref{NC_NLS} with $a=x$, $b=z$ give the action of the vector field $\partial_{2,x}$ on $B_1$, $C_1$,
\begin{subequations}
\begin{align}
    2\partial_{2,x}B_1-\partial_{1,x}\partial_{1,z}B_1+4B_1\partial_{1,x}A_1 &=0\,, \\
    2\partial_{2,x}C_1+\partial_{1,x}\partial_{1,z}C_1-4C_1\partial_{1,x}A_1 &=0\,.
\end{align}
\end{subequations}
The diagonal elements of \eqref{NC_NLS} with $a=x$, $b=z$ are automatically satisfied upon taking into account \eqref{nm_11b} with $b=x$ and $a=z$.

We use the off-diagonal elements of \eqref{nm_11b} with $a=b=y$ to obtain the relation
\[C_2+B_2+A_1(C_1-B_1)=i\partial_{1,y}A_1\,.\]
Then, the off-diagonal elements of \eqref{NC_NLS} with $a=y$, $b=z$ give the action of the vector field $\partial_{2,y}$ on $B_1$, $C_1$,
\begin{subequations}
\begin{align}
    -2\partial_{2,y}B_1+\partial_{1,y}\partial_{1,z}B_1-4B_1\partial_{1,y}A_1 &=0\,, \\
    2\partial_{2,y}C_1+\partial_{1,y}\partial_{1,z}C_1-4C_1\partial_{1,y}A_1 &=0\,.
\end{align}
\end{subequations}
The diagonal elements of \eqref{NC_NLS} with $a=y$, $b=z$ are automatically satisfied upon taking into account \eqref{nm_11b} with $b=y$ and $a=z$.
We have already obtained the action of the vector field $\partial_{2,z}$ on $B_1$, $C_1$ in Equation \eqref{NLS_unreduced}, which was the (unreduced) NLS equation. 

It remains to obtain the actions of the vector fields $\partial_{2,a}$ on $A_1$. This can be done in several equivalent ways. The consistency is ensured by the equations relating the vector fields $\partial_{n,a}$ for different values of $a$. 
Let us first look at the off-diagonal elements in
\begin{equation}
\partial_{2,x}   Q_x^{(1)}-\partial_{1,x}  Q_x^{(2)}  +[Q_x^{(1)},Q_x^{(2)}]=0\,.
\end{equation}
We use \eqref{B2-C2} and \eqref{nm_11b} with $b=x$ and $a=z$, which gives $2A_2=B_1^2-A_1^2-\partial_{1,x}B_1$, to eliminate $B_2-C_2$ and $A_2$. We obtain
\begin{equation}
2\partial_{2,x}A_1+\partial_{1,x}^2B_1+4A_1\partial_{1,x}A_1-2(B_1-C_1)\partial_{1,x}B_1=0\,.
\end{equation}
Now, let us look at the off-diagonal elements in
\begin{equation}
        \partial_{2,y}   Q_y^{(1)}-\partial_{1,y}  Q_y^{(2)}  +[Q_y^{(1)},Q_y^{(2)}]=0\,,
\end{equation}
and use again \eqref{nm_11b} appropriately  to eliminate $A_2$, $B_2$, $C_2$. We obtain
\begin{equation}
     2\partial_{2,y} A_1+i\partial_{1,y}^2B_1+4A_1\partial_{1,y}A_1+2(B_1+C_1)\partial_{1,y}B_1=0\,.
\end{equation}
Finally, let us look at the two equations 
\begin{subequations}
\begin{align}
        & \partial_{2,z}   Q_x^{(1)}-\partial_{1,z}  Q_x^{(2)}  +[Q_x^{(1)},Q_z^{(2)}]=0\,, \\
        & \partial_{2,z}   Q_y^{(1)}-\partial_{1,z}  Q_y^{(2)}  +[Q_y^{(1)},Q_z^{(2)}]=0\,.
\end{align}
\end{subequations}
Multiplying the first one by $i$ and subtracting the second one, the $(1,2)$-entry gives
\begin{eqnarray}
4i\partial_{2,z}   A_1-2i\partial_{1,z} (A_1^2+2A_2)-8iA_1B_1C_1+8iB_2C_1=0\,.
\end{eqnarray}
As before, we can use \eqref{nm_11b} for appropriate choices of $a,b$ to obtain $2B_2=\partial_{1,z}B_1-2A_1B_1$ and $4A_2=-2A_1^2-(\partial_{1,x}+i\partial_{1,y})B_1$ and eliminate $B_2$ and $A_2$. This yields
\begin{eqnarray}
\partial_{2,z}   A_1+C_1\partial_{1,z}B_1+\frac{1}{4}\partial_{1,z}(\partial_{1,x}+i\partial_{1,y})B_1-4A_1B_1C_1=0\,.
\end{eqnarray}
Thus we have established that the vector fields $\partial_{2,a}$, acting on $A_1$, $B_1$ and $C_1$, give \eqref{x_flow}-\eqref{z_flow}.

\bibliographystyle{abbrvnat_mv}
\bibliography{comm}

\end{document}